\documentclass[11pt,a4paper,titlepage]{article}
\usepackage[utf8]{inputenc}
\usepackage[T1]{fontenc}
\usepackage[english]{babel}
\usepackage{mathrsfs}
\usepackage{amsmath}
\usepackage{amsfonts}
\usepackage{amssymb}
\usepackage{amsthm}
\usepackage{longtable}
\usepackage{enumerate}
\usepackage{bussproofs}
\usepackage{proof}
\usepackage{xcolor}
\usepackage[width=0.00cm, height=2.50cm, left=2.50cm, right=3.00cm, top=2.50cm, bottom=2.50cm]{geometry}
\usepackage{url}

\theoremstyle{definition}
\newtheorem{definition}{Definition}[section]
\newtheorem{theorem}{\textbf{Theorem}}
\newtheorem{lemma}{\textbf{Lemma}}
\newtheorem{prop}{ \textbf{Proposition}}

\newtheorem{hypo}{\textbf{Hypothesis}}
\newtheorem{remark}{\textbf{Remark}}

\newcommand{\F}{\mathscr{L}}				

\newcommand{\Lang}{\mathscr{L}}	
\newcommand{\Prop}{\mathcal{P}}			
\newcommand{\B}{\Box}	
\newcommand{\D}{\Diamond}
\newcommand{\var}{\vartheta}
\newcommand{\G}{\Gamma}

\newcommand{\non}{{\sim}}

\newcommand{\FE}{\widetilde{\circ}}
\newcommand{\FA}{\widetilde{\bullet}}
\newcommand{\one}{T}
\newcommand{\zero}{F}
 \newcommand{\A}{A}
 \newcommand{\Be}{B}
 \newcommand{\Ce}{C}
\newcommand{\ts}{\,}
  \newcommand{\f}{\mathfrak{f}}
  \newcommand{\g}{\mathfrak{g}}

\usepackage{bussproofs,longtable,multicol}
\usepackage[breaklinks,colorlinks,citecolor=blue,urlcolor=blue]{hyperref}
\newcommand\DOI[2]{DOI: \href{#1#2}{#2}}

\newenvironment{Published}{\begin{list}{}{\leftmargin=25pt}
    \item[]\small\noindent{Published as}}%
    {\end{list}}
    \newenvironment{Funding}{\begin{list}{}{\leftmargin=25pt}
        \item[]\small\noindent{The research}}%
        {\end{list}}
\newenvironment{Abstract}{\begin{list}{}{\leftmargin=25pt}
    \item[]\footnotesize\noindent{\sc \bf Abstract:}}%
{\end{list}}
\newenvironment{keywords}{\begin{list}{}{\leftmargin=25pt}
    \item[]\footnotesize\noindent{\sc \bf Keywords:}}%
{\end{list}}

\begin{document}
\begin{center}
\vspace{0.3cm}
{\large  
Yaroslav Petrukhin\vspace{0.3cm}\\
\textbf{Essence and accident modalities meet Belnapian truth values}}

\ \\
\begin{footnotesize}
 Center for Philosophy of Nature, University of Lodz, Poland,  \url{yaroslav.petrukhin@gmail.com}\\
\end{footnotesize}
\end{center}
\begin{Published}
Petrukhin, Y. (2026)
Essence and accident modalities meet Belnapian truth values. Studia Logica. Online first paper. 
\DOI{http://dx.doi.org/}{10.1007/s11225-026-10250-z}
\bigskip
\end{Published}
\begin{Funding}
 was funded by the European Union (ERC, ExtenDD, project number: 101054714). Views and opinions expressed are however those of the author(s) only and do not necessarily reflect those of the European Union or the European Research Council. Neither the European Union nor the granting authority can be held responsible for them. 
\end{Funding}
 \begin{Abstract}
This paper investigates many-valued generalisations of the classical
essence and accident modalities. In two-valued logic, a proposition is
essentially true (resp.\ false) if, whenever it is true (resp.\ false), it is necessarily true (resp.\ false); it is accidentally true (resp.\ false) if it is true (resp.\ false) but not necessarily so. Many-valued logics provide a natural setting for introducing further modalities of this kind. We focus on Belnap--Dunn's First-Degree Entailment
(\textbf{FDE}), a four-valued system that generalises the classical truth values. More precisely, we consider an extension of \textbf{FDE} with Boolean negation and implication. In addition to modalities of
essential and accidental truth and falsity, we define modalities of essential and accidental inconsistency and indeterminacy. We present a four-valued \textbf{S5}-based Kripke semantics and cut-free hypersequent
calculi for the resulting logics. We then prove semantic and syntactic embedding theorems for these logics into a four-valued version of \textbf{S5} with necessity and possibility modalities. These embeddings clarify the intended interpretation of the Belnapian essence and accident modalities and yield soundness, completeness, and cut-admissibility results.
 \end{Abstract}
 \begin{keywords}
Hypersequent calculus; cut elimination; proof theory; modal logic; many-valued logic; Belnap logic; essence logic; accident logic.
 \end{keywords}

\section{Introduction}
Marcos~\cite{Marcos} introduced the modalities of essence and accident in terms of the standard necessity and possibility operators and Boolean connectives as follows: $\circ A=A\rightarrow\Box A$ and $\bullet A=\non {\circ} A=A\wedge\non\Box A=A\wedge\D\non A$. As he notes in another paper, `one could read $ \bullet A$ as saying that `$ A $ is the case, but could have been otherwise': It works as a kind of (local) connective for `accidental truth'. Similarly, $ \circ $ could be read as expressing a (local) notion of `essential truth''~\cite[p.~297, notation adjusted]{MarcosPara}. This interpretation is also adopted by Steinsvold~\cite{Steinsvold,Steinsvoldboxdot}, and we will follow it here. Thus, we read $\circ A$ as `$A$ is essentially true', that is, `if $A$ is true, then $A$ is necessarily true', and we read $\bullet A$ as `$A$ is accidentally true', that is, `$A$ is true but possibly false'.

One can also consider two further modalities: $ \FE A=\non A\rightarrow \Box \non A$ and $ \FA A=\non A\wedge \D A$.  Marcos~\cite{MarcosPara} interprets them as connectives of undeterminedness and determinedness, respectively. `One could read $\FA A $ as saying that `$ A $ is not the case, but it could have been'. This suggests that $\FA $ could work as a kind of (local) connective for `counterfactual truth''~\cite[p.~297, notation adjusted]{MarcosPara}. In~\cite{PetrukhinS5}, $\FE A$ is interpreted as `$A$ is essentially
false', that is, `if $A$ is false, then $A$ is necessarily false', and $\FA A$ is interpreted as `$A$ is accidentally false', that is, `$A$ is false but $A$ is possibly true'. We will follow this latter interpretation.

Gilbert and Venturi~\cite{GilbertVenturi} criticise the Marcos--Steinsvold's interpretation of essence and accident. In particular, they point out that in the logic of essence based on the modal logic $\mathbf{K}$ the formula $\circ \bot$ is valid, which means that $\bot$ is essentially true. This sounds rather counter-intuitive and motivates Gilbert and Venturi to introduce their own operators for essence and accident: $\mathsf{A}A :=\bullet A \vee\bullet\non A$ and $\mathsf{E}A=\non \mathsf{A}A$. These formulas are supposed to be read as `$A$ is accidental' and `$A$ is essential', respectively. `As for `accidentally true' and `essentially true', these can now be given straightforward formalizations as $A \wedge \mathsf{A}A$ and $A \wedge \mathsf{E}A$, respectively. (Note that $A \wedge \mathsf{A}A$ is equivalent to just $\bullet A$, as desired). Additionally, and interestingly, it is worth noting that while $\mathsf{E}A$ is not equivalent to $\circ A$, it is the case that $A \wedge \mathsf{E}A$ is equivalent to $A \wedge \circ A$. This nicely mimics the fact that while $A$ and $\circ A$ are not equivalent, $A\wedge\Box A$ is equivalent to $A\wedge\circ A$.' 
\cite[p.~890]{GilbertVenturi}. Thus, Gilbert and Venturi's interpretation of `accidentally true' coincides with that of Marcos, whereas their interpretation of `essentially true' is different.

Let us write $\boxdot A$ for $A \wedge \circ A = A \wedge \Box A$. We choose this symbol because it was used by Boolos~\cite{Boolos}, who studied this modality in the context of provability logic and interpreted it as `true and provable (in Peano arithmetic)'. It is easy to observe that in reflexive modal logics $A \wedge \Box A$ is equivalent to $\Box A$, thanks to the axiom $\Box A \rightarrow A$. One could similarly define `accidental falsity' as $\neg A \wedge \Box \neg A$, but this is equivalent to $\Box \neg A$. Hence, if we want to study $\boxdot A$ as an `essentially true' modality and do not wish it to collapse to $\Box A$, we have to work in a non-reflexive modal logic; this illustrates a limitation of this approach. Nevertheless, such a study could be carried out and might produce fruitful results.

In the present paper, however, we prefer to follow the Marcos--Steinsvold interpretation of essence and accident. We think that their understanding of `essentially true' is best viewed as a notion of \emph{conditionally} essentially true. Indeed, if we say that $\circ \bot$ is conditionally essentially true, we do not obtain the counter-intuitive claim that `$\bot$ is essentially true', but only the statement: `if $\bot$ were true, then it would be true at all accessible worlds'.

If we switch from classical logic to some many-valued one, a natural generalization of these modalities arises: $ A $ essentially takes the value $ i $ and $ A $ accidentally takes the value $ i $, $ \circ_iA= A^i\rightarrow \Box A^i$ and $ \bullet_iA= A^i\wedge \D\non A^i$, where $ A^i $ is a formula meaning that $ A $ takes the value $ i $. Here and after $ \non $ is Boolean negation and $ \neg $ is De Morgan one. A negative generalisation is possible as well: $ A $ essentially does not take the value $ i $ and $ A $ accidentally does not take the value $ i $, $ \FE_iA $ and $ \FA_iA $, respectively. If the value $ i $ does not coincide with any classical truth value, we get a modality that cannot be formulated in the framework of two-valued modal logic and can be expressed only in the framework of many-valued modal logic. In our case, $ i\in \{t,b,n,f\} $, since we consider a logic with the same truth values as Belnap--Dunn's \textbf{FDE} \cite{Belnap1,Belnap2,Dunn76}. The set $ \{t,b,n,f \} $ of truth values of \textbf{FDE} generalises the set $ \{\one,\zero\} $ of classical truth values:  $ t $ (`true') stands for $ \{\one\} $, $ b $ (`both true and false') for $ \{\one,\zero\} $, $ n $ (`neither true nor false') for $ \emptyset $, and $ f $ (`false') for $ \{\zero\} $.\footnote{The nature of \textbf{FDE}'s values is epistemic: a formula $ A $ takes the value $ t $, if Belnap's computer is told that $ A $ is true, $ A $ takes the value $ f $, if Belnap's computer is told that $ A $ is false, $ A $ takes the value $ b $, if Belnap's computer gets contradictory information, and $ A $ takes the value $ n $, if Belnap's computer has no information. Hence, the nature of essence and accident modalities in a \textbf{FDE}-style logic is epistemic too.} The values $ t $ and $ b $ are designated. The modalities $\circ_b$ and $\bullet_b$ correspond to essential and
accidental inconsistency, and $\circ_n$ and $\bullet_n$ to essential and accidental indeterminacy.
  
More precisely, our research focuses on an expansion of \textbf{FDE} incorporating Boolean negation and implication. It is worth noting that \textbf{FDE} already has a De Morgan negation that is both paraconsistent and paracomplete. Our logic can be described either as Avron's \cite{Avron} logic \textbf{HBe} (also known as Popov's logic \textbf{Par} \cite{Popov}) extended by Boolean negation, or as De and Omori's logic \textbf{BD+} \cite{DeOmori} in which the implication connective is replaced by that of \textbf{HBe}. The matrix definition of the implication of \textbf{HBe} is given in \cite{ArieliAvron}.\footnote{We find the behaviour of De Morgan negation of the implication of \textbf{HBe} more natural, $ \neg(A\rightarrow B)=A\wedge \neg B$ than the behaviour of De Morgan negation of the implication of \textbf{BD+}, $ \neg(A\rightarrow B)=\non\neg A\wedge \neg B$.} The presence of two negations in a language is advantageous for providing precise and natural definitions for many-valued essence and accident (see Definitions \ref{i} and \ref{g}). Given that one of the negations is both paraconsistent and paracomplete, it is still possible to handle contradictory and partial information to some extent. Implication is primarily required for defining the essence modality. 
  
In addition to $ \circ_i $ and $ \bullet_i $, where $ i\in\{t,b,n,f\} $, we introduce another many-valued version of essence and accident: $ \circ_I $ and $ \bullet_I $, where $ I\in\{T,F \} $. Following Belnap \cite{Belnap1,Belnap2}, we say that a formula is at least true, if $ v(A)\in\{t,b\} $, that is, $ T\in v(A)$, where $ v $ is a valuation; and a formula is at least false, if $ v(A)\in\{f,b\} $, that is, $ F\in v(A)$. Thus, $ \circ_TA $ (resp. $ \bullet_TA $) means that $ A $ is essentially (resp. accidentally) at least true (or essentially (resp. accidentally) takes the designated values), and $ \circ_FA $ (resp. $ \bullet_FA $) means that $ A $ is essentially (resp. accidentally) at least false. We also define negative versions of these modalities: $ \FE_I $ and $ \FA_I $, where $ I\in\{T,F \} $. $ \FE_TA $ (resp. $ \FA_TA $) means that $ A $ essentially (resp. accidentally) is not at least true, and $ \FE_FA $ (resp. $ \FA_FA $) means that $ A $ essentially (resp. accidentally) is not at least false. In subsequent sections, we give more formal definitions of these modalities. 
 
Belnap's truth values generalise classical truth values in a natural way, hence, an \textbf{FDE}-style logic accommodates a many-valued generalization of essence and accident modalities, enabling the differentiation of numerous dimensions of truth and falsity (e.g., being true and being at least true). This renders a logic based on Belnap's truth values an excellent foundation for the exploration of various modalities. Observe that numerous works focus on modal logics based on \textbf{FDE} itself and its extensions: as an example, let us mention the papers by Drobyshevich \cite{Drobyshevich}, Fitting \cite{Fitting91,Fitting92}, Goble \cite{Goble}, Priest \cite{Priest08b}, Odintsov and Wansing \cite{OdintsovWansing17,OdintsovWansing}, Odintsov, Skurt, and Wansing \cite{OdintsovSkurtWansing}, 
Odintsov and Speranski \cite{OdintsovSperanski,OdintsovSperanski19}, and  Rivieccio, Jung, and Jansana \cite{Rivieccio4}. 
 These papers are devoted to the investigation of necessity and possibility. Kozhemiachenko and Vashentseva  explored a non-contingency modality \cite{KozhVashentsevaNonCont} as well as the notion of ignorance \cite{KozhVashentsevaIgn} in \textbf{FDE}. Sedl\'{a}r \cite{Sedlar} investigated a version of propositional dynamic logic having Belnapian truth values. To the best of our knowledge, the only existing accident operator for \textbf{FDE} is proposed in \cite{KozhVashentsevaIgn}; unlike that operator, the essence and accident modalities studied in the present paper are definable via the necessity operator. Just as the works mentioned above develop four-valued \textbf{FDE}-based counterparts of standard two-valued modal logics with necessity and possibility,  dynamic logic, logics of non-contingency and ignorance, we propose four-valued analogues of logics of essence and accident. In the Belnapian setting, this also allows us to distinguish new ‘dimensions' of essence and accident, beyond the classical true and false case.
 
 We use a four-valued version of \textbf{S5}, following Priest's \cite{Priest08b} definition of necessity and possibility modalities in the setting of Belnapian truth values.  
 We want to not only semantically describe many-valued essence and accident modalities, but also develop proof theory for them. Proof theory is well-developed for all main modal logics, but \textbf{S5} stands out as a particularly straightforward and helpful tool for investigating classical essence and accident modalities. This paper is a continuation of \cite{PetrukhinS5}, where Restall's \cite{Restall} hypersequent calculus for \textbf{S5} (originally, having rules only for one modality, that is $ \Box $) has been extended by the rules for $ \circ $, $ \bullet $, $ \FE $, and $ \FA $. Following \cite{PetrukhinS5}, we choose Restall's \cite{Restall} hypersequent calculus for \textbf{S5} which is one of the simplest hypersequent calculi for this logic, having no restrictions on the context formulas in the modal rules, which simplifies their formulation. Although ordinary sequent calculi for \textbf{S5} are not cut-free, there are several cut-free hypersequent calculi with excellent proof-theoretic behaviour (see \cite{BednarskaIndrzejczak} for a survey) and several other generalisations of sequent calculi as well (see \cite{Poggiolesi} for a survey). The absence of a cut-free ordinary sequent calculus for \textbf{S5} has motivated proof theorists to create various more comprehensive types of Gentzen-style calculi for \textbf{S5}. As a result, this logic has become one of the most extensively studied and appealing from a proof-theoretical perspective.
  
 The structure of this paper is as follows. Section \ref{Semantics} is devoted to the formulation of the semantics. Section \ref{Hypersequent} presents a hypersequent calculus. Section \ref{Embedding} contains the formulation of the embedding functions, syntactic and semantic embedding procedures of two types (first, we embed the four-valued version of \textbf{S5} with $ \Box $ and $ \D $ into \textbf{S5} itself, then we embed the four-valued version of \textbf{S5} with essence and accident into the four-valued version of \textbf{S5} with $ \Box $ and $ \D $). As a result of the embedding theorems, we obtain soundness, completeness, cut admissibility, and decidability. Section \ref{Conclusion} contains concluding comments.
 
 \section{Semantics}\label{Semantics}
 Let $ \Prop $ be a set of propositional variables $ \{p_1,p_2,\ldots \} $ and $ \Prop^\prime $ be an additional set of propositional variables $ \{p^\prime \mid p\in\Prop \} $ needed for simplifying embedding procedures (see, e.g. \cite{Kamide19,Kamide23,MML,MMLS5}). Let us consider the language $ \Lang_{\B\D} $ with the alphabet $ \langle \Prop,\Prop^\prime,\non,\B,\D,\wedge,\vee,\rightarrow,(,) \rangle $. Let $ \Lang_{\B\D}^\neg $ be the language obtained from $ \Lang_{\B\D} $ by deleting $ \Prop^\prime $ and adding one more negation $ \neg$. Let $\Lang_{\circ\bullet}^\neg $ be the language obtained from $ \Lang_{\B\D}^\neg $ by the replacement of $ \B $ and $ \D $ with at least one of the following modal operators: $ \circ_I $, $ \bullet_I $, $ \FE_I $, $ \FA_I $, $ \circ_i $, $ \bullet_i $, $ \FE_i $, and $ \FA_i $, where $ I\in\{T,F\} $ and $ i\in\{t,b,n,f \}  $.  We define $\Lang_{\B\D} $-, $ \Lang_{\B\D}^\neg $-, and $ \Lang_{\circ\bullet}^\neg $-formulas in the standard inductive way.
 
 \begin{definition}\label{Kripkeval}
 A pair $ \langle W,\var\rangle $ is said to be an \textbf{S5}-model iff $ W\not=\emptyset $ and $ \var $ is a mapping from $ W\times\F_{\Box\D} $ to $\{\one,\zero\}$ such that it preserves classical conditions for truth-value connectives and for each $A\in\F_{\Box\D}$ and $x\in W$ we have:
 {\small \begin{enumerate}[$ - $]\itemsep=0pt
 	\item $ \var(\Box A,x)=\one $ iff $ \forall_{y\in W}\var(A,y)=\one $,
 	\item $ \var(\D A,x)=\one $ iff $ \exists_{y\in W}\var(A,y)=\one $.
 \end{enumerate}}
 \end{definition}
 \begin{definition}
 A formula $ A $ is \textit{true in a world} $w \in W$ iff $ \var(A,w)=T $. A (multi)set of formulas $ \Delta $ \textit{follows} from a (multi)set of formulas $ \Gamma $ in \textbf{S5} ($ \Gamma\models_{\bf S5}\Delta $) iff for every \textbf{S5}-model $ \langle W,\var\rangle $ and every $w \in W$, if every $ B\in\G $ is  true in $w$, then some $ C\in\Delta $ is true in $w$. 
  A formula is \textbf{S5}-\textit{valid} iff it follows from the empty (multi)set of formulas.
 \end{definition}
 \begin{definition} Let $ \langle W,\var\rangle $ be an \textbf{S5}-model. 
 The truth conditions for the two-valued versions of the essence and accident modalities are as follows:
 {\small \begin{enumerate}[$ - $]\itemsep=0pt
 	\item $ \var({\circ} A,x)=\one $ iff $\var(A,x)=\zero$ or $ \forall_{y\in W}\var(A,y)=\one$.
 	\item $ \var({\bullet}A,x)=\one $ iff $\var(A,x)=\one$ and $ \exists_{y\in W}\var(A,y)=\zero$.
 	\item $ \var(\FE A,x)=\one $ iff $\var(A,x)=\one$ or $ \forall_{y\in W}\var(A,y)=\zero$,
 	\item $ \var(\FA A,x)=\one $ iff $\var(A,x)=\zero$ and $ \exists_{y\in W}\var(A,y)=\one$.
 \end{enumerate}}
 \end{definition}
 
 \begin{definition}\label{TruthVal}
 Using the set $ \{T,F\} $ of truth values of classical logic, we can define the set $ \{t,b,n,f\} $ of truth values of \textbf{FDE} and its extensions as follows:  $ \{ \{\one\}, \{\one,\zero\}, \emptyset, \{\zero\} \} $, where $ t $ stands for $ \{\one\} $, $ b $ stands for $ \{\one,\zero\} $, $ n $ for $ \emptyset $, and $ f $ for $ \{\zero\} $. We introduce the following convention: for a formula $ A $ and a valuation $ \var $, we write $ \var(A)\in\{t,b\} $ iff $ T\in \var(A)$ and $ \var(A)\in\{f,b\} $ iff $ F\in \var(A)$.
 \end{definition}

 \begin{definition}\label{KripkeNelsonval}\label{Four-valuedS5}
 A pair $ \langle W,\var\rangle $ is said to be an $ \bf FDE_{\B\D}^{\bf S5}$-model iff $ W\not=\emptyset $  and $ \var $ is a mapping from $ W\times\F_{\Box\D}^\neg $ to $\{t,b,n,f\}$ such that for
 any $A\in\F_{\Box\D}^\neg$ and $x\in W$ we have:
 {\small \begin{enumerate}[$ - $]\itemsep=0pt
 \begin{multicols}{2}
 \item $ \one\in\var(\neg A,x) $ iff $ \zero\in\var(A,x) $,
 \item $ \one\in\var(\non A,x) $ iff $ \one\not\in\var(A,x) $,
 	\item $ \one\in\var(\Box A,x) $ iff $ \forall_{y\in W}\one\in\var(A,y) $,	
 	\item $ \one\in\var(\D A,x) $ iff $ \exists_{y\in W}\one\in\var(A,y) $,
 \item $ \zero\in\var(\neg A,x) $ iff $ \one\in\var(A,x) $,
 \item $ \zero\in\var(\non A,x) $ iff $ \zero\not\in\var(A,x) $,
 \item  $ \zero\in\var(\Box A,x) $ iff $ \exists_{y\in W}\zero\in\var(A,y) $,	
 \item  $ \zero\in\var(\D A,x) $ iff $ \forall_{y\in W}\zero\in\var(A,y) $.
 \end{multicols} \vspace{-.15in}
 \item $ \one\in\var(A\wedge B,x) $ iff $ \one\in\var(A,x) $ and $ \one\in\var(B,x) $,
 \item $ \zero\in\var(A\wedge B,x) $ iff $ \zero\in\var(A,x) $ or $ \zero\in\var(B,x) $,
 \item $ \one\in\var(A\vee B,x) $ iff $ \one\in\var(A,x) $ or $ \one\in\var(B,x) $,
 \item $ \zero\in\var(A\vee B,x) $ iff $ \zero\in\var(A,x) $ and $ \zero\in\var(B,x) $,
 \item $ \one\in\var(A\rightarrow B,x) $ iff $ \one\in\var(A,x) $ implies $ \one\in\var(B,x) $,
 \item $ \zero\in\var(A\rightarrow B,x) $ iff $ \one\in\var(A,x) $ and $ \zero\in\var(B,x) $.
 \end{enumerate}}
 \end{definition}
 
 A similar semantics is considered by Odintsov and Wansing \cite{OdintsovWansing}; there are two main differences: another language (with $ \bot $ and without Boolean negation; the latter has been added later by Sedl\'{a}r \cite{Sedlar}) and another formulation of a valuation (however, `$ x\models^+A $' and `$ x\models^-A $' might be understood as counterparts of `$ T\in \var(A,x) $' and `$ F\in \var(A,x) $' in our semantics). As follows from Definitions \ref{TruthVal} and \ref{Four-valuedS5}, the truth-value conditions for $ \neg $, $ \non $, $ \wedge $, $ \vee $, and $ \rightarrow $ can be presented in the form of the truth tables given below. The truth tables for $ \neg$, $\wedge$, and $\vee$ can be found in \cite{Belnap1,Belnap2}, for $ \rightarrow $ in \cite{ArieliAvron}, and for $ \non $ in \cite{DeOmori}.
 \begin{center}
  \begin{tabular}{|c|c|c|}\hline
 $ A $ & $\neg $ & $ \non $ \\\hline
 \textit{t} & \textit{f} &  \textit{f} \\\hline
 \textit{b} & \textit{b} & \textit{n} \\\hline
 \textit{n} & \textit{n} & \textit{b} \\\hline
 \textit{f} & \textit{t} & \textit{t} \\\hline
 \end{tabular}      
 \begin{tabular}{|c|cccc|}\hline
 $ \vee $ & \textit{t} & \textit{b} & \textit{n} & \textit{f} \\\hline
 \textit{t} & \textit{t} & \textit{t} & \textit{t} & \textit{t} \\\hline
 \textit{b} & \textit{t} & \textit{b} & \textit{t} & \textit{b} \\\hline
 \textit{n} & \textit{t} & \textit{t} & \textit{n} & \textit{n} \\\hline
 \textit{f} & \textit{t} & \textit{b} & \textit{n} & \textit{f} \\\hline
 \end{tabular}     
 \begin{tabular}{|c|cccc|}\hline
 $ \wedge $ & \textit{t} & \textit{b} & \textit{n} & \textit{f} \\\hline
 \textit{t} & \textit{t} & \textit{b} & \textit{n} & \textit{f} \\\hline
 \textit{b} & \textit{b} & \textit{b} & \textit{f} & \textit{f} \\\hline
 \textit{n} & \textit{n} & \textit{f} & \textit{n} & \textit{f} \\\hline
 \textit{f} & \textit{f} & \textit{f} & \textit{f} & \textit{f} \\\hline
 \end{tabular} 
 \begin{tabular}{|c|cccc|}\hline
 $ \rightarrow $ & \textit{t} & \textit{b} & \textit{n} & \textit{f} \\\hline
 \textit{t} & \textit{t} & \textit{b} & \textit{n} & \textit{f} \\\hline
 \textit{b} & \textit{t} & \textit{b} & \textit{n} & \textit{f} \\\hline
 \textit{n} & \textit{t} & \textit{t} & \textit{t} & \textit{t} \\\hline
 \textit{f} & \textit{t} & \textit{t} & \textit{t} & \textit{t} \\\hline
 \end{tabular}           
 \end{center}

 \begin{definition}\label{KripkeBelpanEA}
 A pair $ \langle W,\var\rangle $ is said to be an $ \bf FDE_{\circ\bullet}^{\bf S5}$-model iff $ W\not=\emptyset $  and $ \var $ is a mapping from $ W\times\F_{\circ\bullet}^\neg $ to $\{t,b,n,f\}$ such that for
 any $A\in\F_{\circ\bullet}^\neg$ and $x\in W$ we have 
 \vspace{-.05in}\begin{enumerate}[$ \blacktriangleright $]
 \item the following truth and falsity conditions for the first group of Belnapian essence and accident operators, where $ I\in \{ T,F\} $ and $ \non I $ refers to Boolean negation of $ I $ ($ \non T=F $ and $ \non F=T $):
 \end{enumerate}
 {\small \begin{enumerate}[$ - $]\itemsep=0pt
 \item $ \one\in\var({\circ_I} A,x) $ iff $I\in\var(A,x)$ implies $ \forall_{y\in W}I\in\var(A,y)$,
 \item $ \zero\in\var({\circ_I} A,x) $ iff $I\in\var(A,x)$ and $ \exists_{y\in W}\non I\in\var(A,y)$,

 \item $ \one\in\var(\FE_I A,x) $ iff $I\not\in\var(A,x)$ implies $ \forall_{y\in W}I\not\in\var(A,y)$,
 \item $ \zero\in\var(\FE_I A,x) $ iff $I\not\in\var(A,x)$ and $ \exists_{y\in W}\non I\not\in\var(A,y)$,

 \item $ \one\in\var({\bullet_I}A,x) $ iff $I\in\var(A,x)$ and $ \exists_{y\in W}I\not\in\var(A,y)$,
 \item  $ \zero\in\var({\bullet_I} A,x) $ iff $\non I\not\in\var(A,x)$ implies $ \forall_{y\in W}\non I\not\in\var(A,y)$,
 \item $\one\in \var(\FA_I A,x) $ iff $I\not\in\var(A,x)$ and $ \exists_{y\in W}I\in\var(A,y)$,
 \item $ \zero\in\var(\FA_I A,x) $ iff $\non I\in\var(A,x)$ implies $ \forall_{y\in W}\non I\in\var(A,y)$.
 \end{enumerate}}
 
 \vspace{-.05in}\begin{enumerate}[$ \blacktriangleright $]
 \item the following truth conditions for the second group of Belnapian essence and accident operators, where $ i\in\{t,b,n,f\} $ and $ \ast i $ refers to the conflation of these values ($ \ast t=t,\ast b=n,\ast n=b,\ast f=f $):
 \end{enumerate}
 \begin{enumerate}[$ - $] \itemsep=0.01pt
 {\small \item $ \one\in\var({\circ_i} A,x) $ iff  $\var(A,x)=i$ implies $ \forall_{y\in W}\var(A,y)=i $,
 
 \item $ \zero\in\var({\circ_i} A,x) $ iff $\var(A,x)=i$ and $ \exists_{y\in W}\var(A,y)\not=\ast i$,}
 
 {\small \item $ \one\in\var({\FE_i} A,x) $ iff  $\var(A,x)\not=i$ implies $ \forall_{y\in W}\var(A,y)\not=i $,
 
 \item  $ \zero\in\var({\FE_i} A,x) $ iff $\var(A,x)\not=i$ and $ \exists_{y\in W}\var(A,y)=\ast i$.}
 {\small \item $ \one\in\var({\bullet_i}A,x) $ iff $\var(A,x)=i$ and $ \exists_{y\in W}\var(A,y)\not= i$,
 \item $ \zero\in\var({\bullet_i} A,x) $ iff $\var(A,x)=\ast i$ implies $ \forall_{y\in W}\var(A,y)=\ast i$,}
 {\small \item $ \one\in\var({\FA_i}A,x) $ iff $\var(A,x)\not=i$ and $ \exists_{y\in W}\var(A,y)= i$,
 	\item  $ \zero\in\var({\FA_i} A,x) $ iff $\var(A,x)\not=\ast i$ implies $ \forall_{y\in W}\var(A,y)\not=\ast i$.}		
 \end{enumerate}
 \end{definition}

 \begin{definition}
 In $ \bf FDE_{\B\D}^{\bf S5}$ and $ \bf FDE_{\circ\bullet}^{\bf S5}$, $ A $ is \textit{at least true in a world} $w \in W$ iff $ T\in\var(A,w) $.  In $ \bf FDE_{\B\D}^{\bf S5}$ and $ \bf FDE_{\circ\bullet}^{\bf S5}$, a (multi)set of formulas $ \Delta $ \textit{follows} from a (multi)set of formulas $ \Gamma $ iff for every $ \bf FDE_{\B\D}^{\bf S5}$-model (resp. $ \bf FDE_{\circ\bullet}^{\bf S5}$-model) $ \langle W,\var\rangle $ and every $w \in W$, if every $ B\in\G $ is at least true in $w$, then some $ C\in\Delta $ is at least true in $w$. Let $ {\bf L}\in \{\bf FDE_{\B\D}^{\bf S5},FDE_{\circ\bullet}^{\bf S5} \} $. A formula is \textbf{L}-\textit{valid} iff it follows from the empty (multi)set of formulas.
 \end{definition}
 
 \begin{prop}
 Let $ \langle W,\var\rangle $ be an $ \bf FDE_{\circ\bullet}^{\bf S5}$-model. 
 Let $ I\in\{T,F\} $, $ i\in\{t,b,n,f\} $, and $ j\in\{t,f\} $.
 For each formula $ A $ and each $ x\in W $, it holds that:
 {\small\begin{enumerate}[$ \blacktriangleright $]\itemsep=0pt
 \item $ \one\in\var({\circ_I}A,x)$ iff $\one\in\var(\non{\bullet_I}A,x) $ iff $\one\in\var({\FE_I}\non A,x) $ iff $\one\in\var(\non{\FA_I}\non A,x) $,
 \item $ \one\in\var({\bullet_I}A,x)$ iff $\one\in\var(\non{\circ_I}A,x) $ iff $\one\in\var({\FA_I}\non A,x) $ iff $\one\in\var(\non{\FE_I}\non A,x) $,
 \item $ \one\in\var({\FE_I}A,x)$ iff $\one\in\var(\non{\FA_I}A,x) $ iff $\one\in\var({\circ_I}\non A,x) $ iff $\one\in\var(\non{\bullet_I}\non A,x) $,
 \item $ \one\in\var({\FA_I}A,x)$ iff $\one\in\var(\non{\FE_I}A,x) $ iff $\one\in\var({\bullet_I}\non A,x) $ iff $\one\in\var(\non{\circ_I}\non A,x) $,
 
 \item $ \one\in\var({\circ_i}A,x)$ iff $\one\in\var(\non{\bullet_i}A,x) $ iff $\one\in\var(\neg{\bullet_{*i}}A,x) $, 
 \item $ \one\in\var({\FE_i}A,x)$ iff $\one\in\var(\non{\FA_i}A,x) $ iff $\one\in\var(\neg{\FA_{*i}}A,x) $, 
 
 \vspace{-.1in}\begin{multicols}{2}
 \item $ \one\in\var({\bullet_i}A,x)$ iff $\one\in\var(\non{\circ_i}A,x) $,
 \item $ \one\in\var({\FA_i}A,x)$ iff $\one\in\var(\non{\FE_i}A,x) $, 
 \item $ \one\in\var({\bullet_j}A,x)$ iff $\one\in\var(\neg{\circ_{j}}A,x) $, 
 \item $ \one\in\var({\FA_j}A,x)$ iff $\one\in\var(\neg{\FE_{j}}A,x) $.
 \end{multicols}
 \vspace{-.15in}
 \end{enumerate}}
 \end{prop}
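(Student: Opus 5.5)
The proof is a direct unfolding of the truth and falsity conditions of Definition \ref{KripkeBelpanEA}, together with the clauses for $\non$ and $\neg$ from Definition \ref{Four-valuedS5}; these clauses carry over verbatim to $\bf FDE_{\circ\bullet}^{\bf S5}$-models. No induction on $A$ is needed. Each equivalence follows pointwise, at a fixed world $x$, from the definitions.

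First I will isolate three auxiliary facts.
\begin{enumerate}[(a)]
\item $\one\in\var(\non B,x)$ iff $\one\not\in\var(B,x)$, and $\one\in\var(\neg B,x)$ iff $\zero\in\var(B,x)$.
\item For each $I\in\{T,F\}$, $I\in\var(\non A,y)$ iff $I\not\in\var(A,y)$. This holds because Boolean negation flips membership of both $\one$ and $\zero$.
\item Conflation is an involution ($\ast\ast i=i$), and $\ast j=j$ for $j\in\{t,f\}$.
\end{enumerate}

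For the first four items (the $I$-modalities), the equivalences with the $\non$-prefixed forms come from (a). For instance, $\one\in\var(\non\bullet_IA,x)$ iff it is not the case that $I\in\var(A,x)$ and $\exists_y I\not\in\var(A,y)$. Classically this is the condition for $\one\in\var(\circ_IA,x)$. The equivalences whose operators are applied to $\non A$ come from (b). For instance, $\one\in\var(\FE_I\non A,x)$ iff ($I\not\in\var(\non A,x)$ implies $\forall_y I\not\in\var(\non A,y)$), which by (b) is ($I\in\var(A,x)$ implies $\forall_y I\in\var(A,y)$), that is, the truth condition of $\circ_IA$. The remaining combinations, such as $\non\FA_I\non A$, compose these two steps. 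Items 2--4 are then obtained symmetrically, or as $\non$-duals of item 1 via (a).

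For the $i$-modalities, the $\non$-equivalences are again immediate from (a), since the truth conditions of $\circ_i$ and $\bullet_i$ (and of $\FE_i$ and $\FA_i$) are classical negations of each other. For the $\neg$-equivalences I use (a) to reduce them to falsity conditions:
\begin{itemize}
\item $\one\in\var(\neg\bullet_{\ast i}A,x)$ iff $\zero\in\var(\bullet_{\ast i}A,x)$ iff ($\var(A,x)=\ast\ast i$ implies $\forall_y\var(A,y)=\ast\ast i$). By (c) this is the truth condition of $\circ_iA$.
\item Likewise, $\zero\in\var(\FA_{\ast i}A,x)$ unfolds to ($\var(A,x)\neq i$ implies $\forall_y\var(A,y)\neq i$), which is the truth condition of $\FE_iA$.
\end{itemize}

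For the last two items with $j\in\{t,f\}$, we have $\one\in\var(\neg\circ_jA,x)$ iff $\zero\in\var(\circ_jA,x)$ iff ($\var(A,x)=j$ and $\exists_y\var(A,y)\neq\ast j$). Since $\ast j=j$, this is exactly the truth condition of $\bullet_jA$, and the argument for $\FA_j$ and $\FE_j$ is analogous. This also explains why these two items are restricted to $j\in\{t,f\}$: for $b$ and $n$ the falsity condition refers to $\ast i\neq i$ and the equivalence fails.

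There is no real mathematical obstacle; the main risk is bookkeeping. Specifically, the step that needs care is tracking where conflation enters the falsity clauses and applying $\ast\ast i=i$ at the right place.
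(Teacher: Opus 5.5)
Your proposal is correct and takes the same approach as the paper. The paper's proof simply says the proposition follows from Definitions \ref{KripkeNelsonval} and \ref{KripkeBelpanEA}, and your unfolding supplies the bookkeeping this requires: $\non$ flips membership of both $\one$ and $\zero$, and $\ast\ast i=i$ with $\ast j=j$ for $j\in\{t,f\}$.
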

 \begin{proof}
 Follows from Definitions \ref{KripkeNelsonval} and \ref{KripkeBelpanEA}.
 \end{proof}
 
 \begin{prop}
 Let $ \langle W,\var\rangle $ be an $ \bf FDE_{\circ\bullet}^{\bf S5}$-model. 
 Let $ \star\in\{\circ,\bullet,\FE,\FA\} $. 
 For each formula $ A $ and each $ x\in W $, it holds that:
 {\small \begin{enumerate}[$ \blacktriangleright $]\itemsep=0pt
 \begin{multicols}{2}
 \item $\one\in\var(\star_T A,x)$ iff $\one\in\var(\star_F\neg A,x)$,
 \item $\one\in\var(\star_F A,x)$ iff $\one\in\var(\star_T\neg A,x)$,
 \end{multicols}\vspace{-.15in}
 \item $\one\in\var(\star_t A,x)$ iff $\one\in\var(\star_f\neg A,x)$ iff $\one\in\var(\star_f\non A,x)$,
 \item $\one\in\var(\star_f A,x)$ iff $\one\in\var(\star_t\neg A,x)$ iff $\one\in\var(\star_t\non A,x)$,
 \item $\one\in\var(\star_b A,x)$ iff $\one\in\var(\star_b\neg A,x)$ iff $\one\in\var(\star_n\non A,x)$,
 \item $\one\in\var(\star_n A,x)$ iff $\one\in\var(\star_n\neg A,x)$ iff $\one\in\var(\star_b\non A,x)$.
 \end{enumerate}}
 \end{prop}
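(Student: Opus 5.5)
The plan is to reduce everything to how $\neg$ and $\non$ act on the pair $(\one\in\var(A,x),\zero\in\var(A,x))$. By Definition~\ref{Four-valuedS5}, passing from $A$ to $\neg A$ swaps the two coordinates at every world. So $T\in\var(\neg A,y)$ iff $F\in\var(A,y)$, and $F\in\var(\neg A,y)$ iff $T\in\var(A,y)$. In terms of single values, $\var(\neg A,y)=\neg\var(A,y)$, with $\neg t=f$, $\neg f=t$, $\neg b=b$, $\neg n=n$.

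For $\non$, the truth table gives $\var(\non A,y)=t$ iff $\var(A,y)=f$, and $\var(\non A,y)=f$ iff $\var(A,y)=t$. It also gives $\var(\non A,y)=b$ iff $\var(A,y)=n$, and $\var(\non A,y)=n$ iff $\var(A,y)=b$. These facts hold uniformly at all $y\in W$, which is what matters, since the modal clauses quantify over $W$.

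For the first two items, I fix $\star$ and take the truth clause for $\star_T$ applied to $A$ from Definition~\ref{KripkeBelpanEA}. It mentions only conditions of the form ``$T\in\var(A,x)$'' and ``$T\in\var(A,y)$'' (or their negations). The truth clause for $\star_F\neg A$ has the same shape, with ``$F\in\var(\neg A,\cdot)$'' in those positions. The coordinate swap makes these conditions equivalent pointwise, so the two clauses are equivalent. The case $\star_F A$ versus $\star_T\neg A$ is symmetric. I will write out one case, say $\circ$, and note that $\bullet,\FE,\FA$ differ only in their Boolean shape, which is preserved because the substitution is made pointwise.

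For the remaining four items, every truth clause for $\star_i$ is a Boolean combination of ``$\var(A,x)=i$'' and ``$\var(A,y)=i$'' (possibly negated), with the existential or universal quantifier over $y$. It never involves $\ast i$; only the falsity clauses do, and those are not at issue here. Hence it suffices to find a map $\sigma$ on values with $\var(\neg A,y)=\sigma(i)$ iff $\var(A,y)=i$, and similarly for $\non$. Then $T\in\var(\star_i A,x)$ iff $T\in\var(\star_{\sigma(i)}\neg A,x)$, by substituting equivalent atomic conditions into the same clause.

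For $\neg$ the map is $t\mapsto f$, $f\mapsto t$, $b\mapsto b$, $n\mapsto n$. For $\non$ it is $t\mapsto f$, $f\mapsto t$, $b\mapsto n$, $n\mapsto b$. Reading off these maps gives exactly the listed pairs: $t$ with $f$ under both negations, $b$ with $b$ under $\neg$ and with $n$ under $\non$, and likewise for $n$.

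There is no real obstacle; the argument is a routine unfolding of definitions. The only care needed is to check that the truth clauses for $\star_i$ involve just the value $i$ and not $\ast i$, and to verify the $\non$ correspondences against the truth table. In particular, $\non b=n$ is precisely why $\star_b A$ pairs with $\star_n\non A$ rather than with $\star_b\non A$.
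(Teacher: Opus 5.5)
Your proof is correct and takes the same approach as the paper, whose proof simply says the result follows by unfolding Definitions~\ref{KripkeNelsonval} and~\ref{KripkeBelpanEA}. Your two observations make that unfolding explicit: the truth clauses for $\star_i$ mention only the value $i$ (never $\ast i$), and $\neg$ and $\non$ act worldwise as permutations of $\{t,b,n,f\}$ (the coordinate swap for $\star_T$ versus $\star_F$).
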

 \begin{proof}
 Follows from Definitions \ref{KripkeNelsonval} and \ref{KripkeBelpanEA}.
 \end{proof}
 
 \begin{definition}\label{i} For any $\Lang_{\circ\bullet}^\neg$-formula $ A $, we define the following abbreviations:
 \begin{center}
 $A^T=A $,\qquad\qquad  $A^F=\neg A$,\\
 $ A^t= A\wedge\non \neg A$,\qquad $ A^b=A\wedge \neg A $,\qquad $ A^n =\non A\wedge\non \neg A$,\qquad $ A^f =\non A\wedge \neg A$.
 \end{center}
 \end{definition}
 
 A similar definition was employed in \cite{DeOmori,PetrukhinFDE,Sedlar}. Informally, $ A $ means that it has value $t$ or $b$, $ \non A$ means that $ A $ has value $f$ or $n$, $ \neg A$ means that $ A $ has value $f$ or $b$, and $ \non\neg A$ signifies that $ A $ has value $t$ or $n$. Hence, for example, $ A\wedge\non \neg A $ means that $ A $ takes the value $ t $ or $ b $ and $ A $ does not take values $ f $ and $ b $: thus, $ A $ takes the value $ t $. 
 \begin{lemma}\label{X}
 Let $ I\in\{T,F\} $ and $ i\in\{t,b,n,f \} $. For every formula $ A $, every valuation $\var$, and every world $ x $, it holds that
 \begin{center}
 {\small\begin{enumerate}[$ \blacktriangleright $]\itemsep=0pt
 \begin{multicols}{2}
 \item $ T\in\var(A^I,x) $ iff $ I\in\var(A,x) $,
 \item $ T\not\in\var(A^I,x) $ iff $ I\not\in\var(A,x) $,
 \item $ F\in\var(A^I,x) $ iff $  I\in\var(\neg A,x) $,
 \item $ F\not\in\var(A^I,x) $ iff $  I\not\in\var(\neg A,x) $,
 \item $ T\in\var(A^i,x) $ iff $ \var(A,x)=i $,
 \item $ T\not\in\var(A^i,x) $ iff $ \var(A,x)\not=i $,
 \item $ F\in\var(A^i,x) $ iff $ \var(A,x)\not=\ast i $,
 \item $ F\not\in\var(A^i,x) $ iff $ \var(A,x)=\ast i $.
 \end{multicols}
 \end{enumerate}}
 \end{center}
 \end{lemma}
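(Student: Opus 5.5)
The proof is a direct unfolding of Definition \ref{i} via the clauses of Definition \ref{Four-valuedS5} for $\neg$, $\non$ and $\wedge$. These clauses apply verbatim in $\bf FDE_{\circ\bullet}^{\bf S5}$-models, since the modal operators play no role here. The plan has three steps.

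First, I will observe that the four ``negative'' items are just contrapositions of the positive ones. For instance, ``$T\not\in\var(A^I,x)$ iff $I\not\in\var(A,x)$'' is the negation of both sides of the first item, and similarly for the others. So only four biconditionals need checking: the $T$- and $F$-membership of $A^I$ and of $A^i$.

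Second, the case of $A^I$. For $I=T$ we have $A^T=A$, so $T\in\var(A^T,x)$ iff $T\in\var(A,x)$ holds trivially. Also $F\in\var(A,x)$ iff $T\in\var(\neg A,x)$, which is exactly ``$I\in\var(\neg A,x)$'' for $I=T$. For $I=F$ we have $A^F=\neg A$. Then $T\in\var(\neg A,x)$ iff $F\in\var(A,x)$. Moreover $F\in\var(\neg A,x)$ iff $T\in\var(A,x)$, and we need this to match $F\in\var(\neg A,x)$, which is literally the statement with $I=F$.

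Third, the four values $i$, via the abbreviations $A^t=A\wedge\non\neg A$, $A^b=A\wedge\neg A$, $A^n=\non A\wedge\non\neg A$, $A^f=\non A\wedge\neg A$.
\begin{itemize}
\item \emph{$T$-membership.} By the conjunction clause, $T\in\var(A^i,x)$ iff both conjuncts contain $T$. Each conjunct's $T$-membership translates into a condition on $T\in\var(A,x)$ or $F\in\var(A,x)$: we have $T\in\var(\non\neg A,x)$ iff $F\notin\var(A,x)$, and $T\in\var(\non A,x)$ iff $T\notin\var(A,x)$. The pair of conditions pins $\var(A,x)$ down to exactly $i$ under the set reading of Definition \ref{TruthVal}. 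For example, $A^t$ gives $T\in\var(A,x)$ and $F\notin\var(A,x)$, i.e.\ $\var(A,x)=\{T\}=t$.
\item \emph{$F$-membership.} $F\in\var(A^i,x)$ iff some conjunct contains $F$. Using $F\in\var(\neg A,x)$ iff $T\in\var(A,x)$, $F\in\var(\non A,x)$ iff $F\notin\var(A,x)$, and $F\in\var(\non\neg A,x)$ iff $T\notin\var(A,x)$, we get a disjunction of two literals about $T,F\in\var(A,x)$. Its negation describes a single value, and I will check that this value is $\ast i$. For $A^b$, $F\in\var$ iff $F\in\var(A,x)$ or $T\in\var(A,x)$, i.e.\ iff $\var(A,x)\neq n=\ast b$. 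For $A^t$, $F$ holds iff $F\in\var(A,x)$ or $T\notin\var(A,x)$, i.e.\ iff $\var(A,x)\ne t$. The cases $n$ and $f$ work analogously.
\end{itemize}

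The only step needing care is this $F$-membership computation for the $A^i$, where one must correctly track the double-negation interplay $\non\neg$ and confirm that the excluded value is the conflation $\ast i$ rather than $i$. I will handle it by a short four-case check, one case per $i$.
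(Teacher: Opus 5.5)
Your proposal is correct and takes the same approach as the paper, which simply states that the lemma follows from Definition~\ref{i}. Your argument spells out that unfolding through the clauses for $\neg$, $\non$ and $\wedge$. It obtains the negative items by contraposition, and your four-case $F$-membership check for $A^i$ correctly produces the conflated value $\ast i$.
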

 \begin{proof}
 Follows from Definition \ref{i}.
 \end{proof}
 
 \begin{prop}
 Let $ \langle W,\var\rangle $ be an $ \bf FDE_{\circ\bullet}^{\bf S5}$-model. 
 Let $ \star\in\{\circ,\bullet,\FE,\FA\} $, $i\in\{t,b,n,f\}$, and $I\in\{T,F\}$. 
 For each formula $ A $ and each $ x\in W $, it holds that:
 {\small \begin{enumerate}[$ \blacktriangleright $]\itemsep=0pt
 \begin{multicols}{2}
 \item $\one\in\var(\star_i A,x)$ iff $\one\in\var(\star_T A^i,x)$,
 \item $\one\in\var(\star_I A,x)$ iff $\one\in\var(\star_T A^I,x)$,
 \end{multicols}\vspace{-.17in}
 \end{enumerate}}
 \end{prop}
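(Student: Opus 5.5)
The claim is that every operator $\star_i$ and every $\star_I$ can be reduced, as far as truth at a world goes, to the corresponding $\star_T$ operator applied to the abbreviations of Definition \ref{i}. The proof is a direct unfolding of the truth conditions in Definition \ref{KripkeBelpanEA} together with Lemma \ref{X}. I would fix an $\bf FDE_{\circ\bullet}^{\bf S5}$-model $\langle W,\var\rangle$, a world $x$, and a formula $A$, and treat the four choices of $\star\in\{\circ,\bullet,\FE,\FA\}$ separately. For each $\star$, I would write out the truth condition of $\star_T B$ with $B:=A^i$ (resp.\ $B:=A^I$). Every such condition is a Boolean combination of statements of the form $T\in\var(B,z)$ or $T\not\in\var(B,z)$, for $z=x$ or $z$ ranging over $W$.

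For instance, $\one\in\var(\circ_T A^i,x)$ holds iff $T\in\var(A^i,x)$ implies $\forall_{y\in W}T\in\var(A^i,y)$. The two clauses of Lemma \ref{X} for $A^i$, namely $T\in\var(A^i,z)$ iff $\var(A,z)=i$ and $T\not\in\var(A^i,z)$ iff $\var(A,z)\neq i$, hold at every world $z$. Substituting them pointwise under the quantifiers turns the condition into: $\var(A,x)=i$ implies $\forall_{y\in W}\var(A,y)=i$. This is exactly the truth condition for $\circ_i A$. The same substitution converts the truth condition of $\bullet_T A^i$ (resp.\ $\FE_T A^i$, $\FA_T A^i$) into that of $\bullet_i A$ (resp.\ $\FE_i A$, $\FA_i A$). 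The only change is that ``$\neq$'' and ``$=$'' swap roles where the condition contains $T\not\in$.

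For the second item, I would use the clauses $T\in\var(A^I,z)$ iff $I\in\var(A,z)$ and $T\not\in\var(A^I,z)$ iff $I\not\in\var(A,z)$ in the same way. The case $I=T$ is trivial, since $A^T=A$. The case $I=F$ gives, after substitution, the truth conditions for $\star_F A$ as stated in Definition \ref{KripkeBelpanEA}. Alternatively, for $I=F$ one can simply cite the earlier proposition stating that $\one\in\var(\star_F A,x)$ iff $\one\in\var(\star_T\neg A,x)$, together with $A^F=\neg A$.

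No step is a real obstacle; the argument is bookkeeping. The one point to check is that the proposition concerns only \emph{truth} ($\one\in$), so only the $T$-clauses of Lemma \ref{X} are needed and the falsity conditions involving $\ast i$ never enter. The corresponding claim for $\zero\in$ would in fact fail for $i\neq t,f$, so the argument must not touch it.
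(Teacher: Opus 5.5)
Your proof is correct and follows the paper's own argument: unfold the truth clauses of Definition~\ref{KripkeBelpanEA} for $\star_T$ applied to $A^i$ (resp.\ $A^I$), then rewrite them pointwise with Lemma~\ref{X}. Your closing remark is wrong, though. The $\zero\in$ analogue also holds for every $i$, and likewise for $I$. Lemma~\ref{X} gives $\zero\in\var(A^i,z)$ iff $\var(A,z)\neq\ast i$, and substituting this into the falsity clauses of $\star_T$ (where $\non T=F$) yields exactly the falsity clauses of $\star_i$ in Definition~\ref{KripkeBelpanEA}. That is the purpose of the conflation $\ast i$ there, and the paper uses this fact in case (7) of the proof of Lemma~\ref{Sem4to4L1}.
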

 \begin{proof}
 Follows from Definitions \ref{KripkeNelsonval}, \ref{KripkeBelpanEA}, and \ref{i}. Use Lemma \ref{X}.
 \end{proof}
 
 \begin{hypo}
 For every $\star\in\{\circ,\bullet,\FE,\FA\}$ and every $J\in\{t,f,T,F\}$,
 the operator $\star_J$ is not definable in terms of $\star_b$ and $\star_n$ (and propositional connectives).
 \end{hypo}
 
 In what follows, we work with the full family of 24 essence and accident operators (for values $ i\in\{t,b,n,f\} $ and for the ``at least'' components $ I\in\{T,F\} $). Although many of these operators are definable from others and from $\Box$ and $\Diamond$, we deliberately keep them as primitive in order to obtain a uniform semantic and proof-theoretic treatment of the different truth-statuses.

\section{Hypersequent calculi}\label{Hypersequent}
\begin{definition}
A sequent is an ordered pair written as $\Gamma \Rightarrow \Delta$, where $\Gamma$ and $\Delta$ are finite multisets of formulas. A hypersequent is a finite multiset of sequents, written as $\Gamma_1 \Rightarrow \Delta_1 \mid \ldots \mid \Gamma_n \Rightarrow \Delta_n$. A sequent $\Gamma \Rightarrow \Delta$ is valid in a logic $\mathbf{L}$ iff $\Gamma \models_{\mathbf{L}} \Delta$. A hypersequent is valid iff at least one of its component sequents is valid. For a more precise definition via a translation of hypersequents into formulas, see \cite[p.~211]{IndrzejczakSC}.
\end{definition}

Let us describe a slightly modified version of Restall's hypersequent calculus for \textbf{S5} \cite[pp. 163, 164]{Restall}. The differences are discussed in Remark~\ref{RestallRemark} below.

\begin{definition}[Restall's hypersequent calculus \cite{Restall} for \textbf{S5}]
    The calculus has the  axiom  (Ax) $ p\Rightarrow p$, for each $p\in P$, and the following structural rules, where $i\in \{1,2\}$: 
    \begin{center}
{\small 	{\small (EW)} $ \dfrac{H_i}{H_1\mid H_2} $\qquad}	
{\small 	{\small (IC$\Rightarrow$)} $ \dfrac{A,A,\Gamma\Rightarrow\Delta\mid H}{A,\Gamma\Rightarrow\Delta\mid H} $\qquad				
	{\small ($\Rightarrow$IC)} $ \dfrac{\Gamma\Rightarrow\Delta,A,A\mid H}{\Gamma\Rightarrow\Delta,A\mid H} $\quad}									
\end{center}
\begin{center}				
{\small 	{\small (Cut)} $ \dfrac{\Gamma\Rightarrow\Delta,A\mid H\qquad A,\Theta\Rightarrow\Lambda\mid G}{\Gamma,\Theta\Rightarrow\Delta,\Lambda\mid H\mid G} $\qquad									
		{\small (Merge)} $ \dfrac{\Gamma\Rightarrow\Delta\mid\Theta\Rightarrow\Lambda\mid H}{\Gamma,\Theta\Rightarrow\Delta,\Lambda\mid H} $	}
\end{center}

The rules for truth-value connectives are as follows:

\begin{center}				
{\small 	{\small $ (\non\Rightarrow) $} $ \dfrac{\Gamma\Rightarrow\Delta,A\mid H}{\non A,\Gamma\Rightarrow\Delta\mid H} $\qquad				
	{\small $ (\Rightarrow\non) $} $ \dfrac{A,\Gamma\Rightarrow\Delta\mid H}{\Gamma\Rightarrow\Delta,\non A\mid H} $	}			
\end{center}
\begin{center}				
{\small 	{\small $ (\wedge\Rightarrow) $} $ \dfrac{A,B,\Gamma\Rightarrow\Delta\mid H}{A\wedge B,\Gamma\Rightarrow\Delta\mid H} $\qquad				
	{\small $ (\Rightarrow\wedge) $} $ \dfrac{\Gamma\Rightarrow\Delta,A\mid H\qquad\Theta\Rightarrow\Lambda,B\mid G}{\Gamma,\Theta\Rightarrow\Delta,\Lambda,A\wedge B\mid H\mid G} $	}			
\end{center}				
\begin{center}				
{\small 	{\small $ (\vee\Rightarrow) $} $ \dfrac{A,\Gamma\Rightarrow\Delta\mid H\qquad B,\Theta\Rightarrow\Lambda\mid G}{A\vee B,\Gamma,\Theta\Rightarrow\Delta,\Lambda\mid H\mid G} $\qquad				
	{\small $ (\Rightarrow\vee) $} $ \dfrac{\Gamma\Rightarrow\Delta,A,B\mid H}{\Gamma\Rightarrow\Delta,A\vee B\mid H} $\qquad	}			
\end{center}				
\begin{center}				
{\small 	{\small $ (\rightarrow\,\Rightarrow) $} $ \dfrac{\Gamma\Rightarrow\Delta,A\mid H\qquad B,\Theta\Rightarrow\Lambda\mid G}{A\rightarrow B,\Gamma,\Theta\Rightarrow\Delta,\Lambda\mid H\mid G} $\qquad				
	{\small $ (\Rightarrow\,\rightarrow) $} $ \dfrac{A,\Gamma\Rightarrow\Delta,B\mid H}{\Gamma\Rightarrow\Delta,A\rightarrow B\mid H} $	}			
\end{center}

The rules for necessity and possibility  are as follows:
\begin{center}				
{\small 	{\small $ (\Box\Rightarrow) $} $ \dfrac{A,\Gamma\Rightarrow\Delta\mid H}{\Box A\Rightarrow\;\mid \Gamma\Rightarrow\Delta\mid H} $\qquad				
	{\small $ (\Rightarrow\Box) $} $ \dfrac{\Rightarrow A\mid H}{\Rightarrow\Box A\mid H} $}				
\end{center}
\begin{center}				
{\small 	{\small $ (\Diamond\Rightarrow) $} $ \dfrac{A\Rightarrow\;\mid H}{\Diamond A\Rightarrow\;\mid H} $\qquad				
	{\small $ (\Rightarrow\Diamond) $} $ \dfrac{\Gamma\Rightarrow\Delta,A\mid H}{\Gamma\Rightarrow\Delta\mid\; \Rightarrow\Diamond A\mid H} $	}			
\end{center} 
\end{definition}
\begin{remark}\label{RestallRemark}
A few comments are in order:   
\begin{enumerate}[$(1)$]\itemsep=0pt
\item Following \cite{MMLS5}, we extend Restall's original formulation \cite{Restall} by adding rules for  $\vee$,  $\rightarrow$, and $\D $. The rules for $ \wedge $ and $ \vee $ are given in multiplicative form in order to shorten the proofs of the embedding theorems. 
\item The axiom (Ax) $ p\Rightarrow p$, for each $p\in P$, can be generalised for the case of arbitrary formulas by induction:  $ A\Rightarrow A$. Such a generalised version is used in \cite{Restall}.
\item The original formulation in \cite{Restall} uses the following rules of external weakening:
\begin{center}
{\small (EW$\Rightarrow$) $ \dfrac{H}{A\Rightarrow\;\mid H} $\qquad				
($\Rightarrow$EW) $ \dfrac{H}{\Rightarrow A\mid H} $\qquad}
\end{center}
However, as shown in \cite{PetrukhinS5}, in order to obtain a constructive cut-elimination proof for Restall's calculus one has to use a more general version of external weakening, which allows us to add not only sequents of the form $ A\Rightarrow $ or $ \Rightarrow A$, but arbitrary hypersequents, including the empty one. Although we do not deal with constructive cut elimination in this paper, we adopt this more general formulation of external weakening.
    \item The internal weakening and external contraction rules are admissible:  
\begin{center}
{\small 	{\small (IW$\Rightarrow$)} $ \dfrac{\Gamma\Rightarrow\Delta\mid H}{A,\Gamma\Rightarrow\Delta\mid H} $\qquad				
	{\small ($\Rightarrow$IW)} $ \dfrac{\Gamma\Rightarrow\Delta\mid H}{\Gamma\Rightarrow\Delta, A\mid H} $\qquad
	{\small (EC)} $ \dfrac{\Gamma\Rightarrow\Delta\mid\Gamma\Rightarrow\Delta\mid H}{\Gamma\Rightarrow\Delta\mid H} $\qquad	}									\end{center}

\item As pointed out by a reviewer, the rule (Merge) is not invertible, since the hypersequent $p \Rightarrow \mid \Rightarrow p$ is not valid, whereas $p \Rightarrow p$ is valid. However, this does not affect the cut-elimination proof for this calculus given in \cite{Restall} (see also \cite{IndrzejczakSC}). In what follows, we prove cut elimination for $\mathbf{FDE}^{\bf S5}_{\Box\D}$ and $\mathbf{FDE}^{\bf S5}_{\circ\bullet}$ using embedding procedures that do not require the rules to be invertible.
\item The notion of a proof in this hypersequent calculus 
is defined in the standard way. 
\item A modification of Restall's hypersequent calculus for \textbf{S5} that formalises the two-valued essence and accident modalities as primitive connectives is presented in \cite{PetrukhinS5}.
\end{enumerate}
\end{remark}

\begin{definition}
A hypersequent calculus for $\mathbf{FDE}_{\B\D}^{\bf S5}$ is obtained from Restall's hypersequent calculus for \textbf{S5} by adding the axiom (Ax$_\neg$) $\neg p \Rightarrow \neg p$, for each $p \in \Prop$, and the rules given below. The rules for negated propositional connectives are as follows:

\begin{center}				
{\small $ (\neg\wedge\Rightarrow) $ $ \dfrac{\neg\A,\Gamma\Rightarrow\Delta\mid H\qquad\neg\Be,\Theta\Rightarrow\Lambda\mid G}{\neg(\A\wedge\Be),\Gamma,\Theta\Rightarrow\Delta,\Lambda\mid H\mid G} $\qquad				
$ (\Rightarrow\neg\wedge) $ $ \dfrac{\Gamma\Rightarrow\Delta,\neg\A,\neg\Be\mid H}{\Gamma\Rightarrow\Delta,\neg(\A\wedge\Be)\mid H} $	}			
\end{center}				
\begin{center}				
{\small $ (\neg\vee\Rightarrow) $ $ \dfrac{\neg\A,\neg\Be,\Gamma\Rightarrow\Delta\mid H}{\neg(\A\vee\Be),\Gamma\Rightarrow\Delta\mid H} $\qquad				
$ (\Rightarrow\neg\vee) $ $ \dfrac{\Gamma\Rightarrow\Delta,\neg\A\mid H\qquad\Theta\Rightarrow\Lambda,\neg\Be\mid G}{\Gamma,\Theta\Rightarrow\Delta,\Lambda,\neg(\A\vee\Be)\mid H\mid G} $	}			
\end{center}				
\begin{center}				
{\small $ (\neg{\rightarrow}\Rightarrow) $ $ \dfrac{\A,\neg\Be,\Gamma\Rightarrow\Delta\mid H}{\neg(\A\rightarrow\Be),\Gamma\Rightarrow\Delta\mid H} $\qquad				
$ (\Rightarrow\neg{\rightarrow}) $ $ \dfrac{\Gamma\Rightarrow\Delta,\A\mid H\qquad\Theta\Rightarrow\Lambda,\neg\Be\mid G}{\Gamma,\Theta\Rightarrow\Delta,\Lambda,\neg(\A\rightarrow\Be)\mid H\mid G} $}				
\end{center}				

\begin{center}				
{\small $ (\neg\neg\Rightarrow) $ $ \dfrac{\A,\Gamma\Rightarrow\Delta\mid H}{\neg\neg\A,\Gamma\Rightarrow\Delta\mid H} $\qquad				
$ (\Rightarrow\neg\neg) $ $ \dfrac{\Gamma\Rightarrow\Delta,\A\mid H}{\Gamma\Rightarrow\Delta,\neg\neg\A\mid H} $	}			
\end{center}
\begin{center}				
{\small $ (\neg\non\Rightarrow) $ $ \dfrac{\Gamma\Rightarrow\Delta,\neg A\mid H}{\neg\non\A,\Gamma\Rightarrow\Delta\mid H} $\qquad				
$ (\Rightarrow\neg\non) $ $ \dfrac{\neg A,\Gamma\Rightarrow\Delta\mid H}{\Gamma\Rightarrow\Delta,\neg\non\A\mid H} $}				
\end{center}

Rules for negated modalities are as follows:

\begin{center}				
{\small $ (\neg\Box\Rightarrow) $ $ \dfrac{\neg A\Rightarrow\;\mid H}{\neg\Box A\Rightarrow\;\mid H} $\qquad				
$ (\Rightarrow\neg\Box) $ $ \dfrac{\Gamma\Rightarrow\Delta,\neg A\mid H}{\Gamma\Rightarrow\Delta\mid\;\Rightarrow\neg\Box A\mid H} $}				
\end{center}				
\begin{center}				
{\small $ (\neg\Diamond\Rightarrow) $ $ \dfrac{\neg A,\Gamma\Rightarrow\Delta\mid H}{\neg\Diamond A\Rightarrow\;\mid\Gamma\Rightarrow\Delta\mid H} $\qquad				
$ (\Rightarrow\neg\Diamond) $ $ \dfrac{\Rightarrow\neg A\mid H}{\Rightarrow\neg\Diamond A\mid H} $}				
\end{center}
\end{definition}
\begin{remark}
  The rules for negated conjunction, disjunction, and implication as well as for double negation were introduced in \cite{AvronClassical,Popov} and were used also in \cite{Kamide19}, where the rules for $ \neg\non $ were presented. The rules for conjunction, disjunction, and implication are given in multiplicative form in order to shorten the proofs of the embedding theorems. The negated modal rules are due to \cite{Kamide23}, and, as noted there, are an adaptation of the multilattice rules from \cite{MMLS5}.  The notion of a proof in this hypersequent calculus is defined in the standard way. 
\end{remark}
\begin{definition}
A hypersequent calculus for $\mathbf{FDE}_{\circ\bullet}^{\bf S5}$ is obtained from the calculus for $\mathbf{FDE}_{\B\D}^{\bf S5}$ by replacing the rules for $\B$ and $\D$ with the rules for the essence and accident modalities. The hypersequent rules for the first group of Belnapian essence operators are as follows:
\begin{center}
{\small 	{\small $ (\circ_T\Rightarrow) $} $ \dfrac{A,\G\Rightarrow \Delta\mid H\qquad\Theta\Rightarrow \Lambda,A\mid G}{{\circ}_TA,\Theta\Rightarrow\Lambda\mid \Gamma\Rightarrow\Delta\mid H\mid G} $\qquad
	{\small $ (\Rightarrow\circ_T) $} $ \dfrac{\Rightarrow A\mid A,\G\Rightarrow \Delta\mid H}{\G\Rightarrow \Delta,{\circ}_TA\mid H} $}
\end{center}
\begin{center}
{\small 	{\small $ (\neg\circ_T\Rightarrow) $}  $ \dfrac{\neg A\Rightarrow\;\mid A,\G\Rightarrow\Delta\mid H}{\neg{\circ_T}A,\G\Rightarrow \Delta\mid H} $ \qquad
	{\small $ (\Rightarrow\neg\circ_T) $} 
$ \dfrac{\Gamma\Rightarrow \Delta,\neg A\mid H\qquad\Theta\Rightarrow \Lambda,A\mid G}{\Theta\Rightarrow \Lambda,\neg{\circ_T}A\mid\Gamma\Rightarrow \Delta\mid H\mid G } $	}
\end{center}
\begin{center}
{\small {\small $ (\FE_T \Rightarrow) $} $ \dfrac{A,\Gamma\Rightarrow \Delta\mid H \qquad \Theta\Rightarrow \Lambda,A\mid G}{\FE_TA,\Gamma\Rightarrow \Delta\mid \Theta\Rightarrow \Lambda\mid H\mid G} $\qquad
{\small $ (\Rightarrow\FE_T) $} $ \dfrac{\G\Rightarrow\Delta, A\mid A\Rightarrow\; \mid H}{\G\Rightarrow \Delta,\FE_TA\mid H} $}
\end{center}
\begin{center}
{\small 	{\small $ (\neg\FE_T\Rightarrow) $}  $ \dfrac{\Rightarrow\neg A\mid \G\Rightarrow\Delta,A\mid H}{\neg\FE_TA,\G\Rightarrow \Delta\mid H} $ \qquad
	{\small $ (\Rightarrow\neg\FE_T) $} $ \dfrac{A,\Gamma\Rightarrow \Delta\mid H\qquad\neg A,\Theta\Rightarrow \Lambda\mid G}{\Gamma\Rightarrow \Delta,\neg{\FE_T}A\mid\Theta\Rightarrow \Lambda\mid H\mid G } $}
\end{center}
\begin{center}
{\small {\small $ (\circ_F \Rightarrow) $} $ \dfrac{\neg A,\Gamma\Rightarrow \Delta\mid H \qquad \Theta\Rightarrow \Lambda,\neg A\mid G}{\circ_FA,\Theta\Rightarrow \Lambda\mid \Gamma\Rightarrow \Delta\mid H\mid G} $\qquad
{\small $ (\Rightarrow\circ_F) $ } $ \dfrac{\Rightarrow\neg A\mid \neg A,\G\Rightarrow \Delta\mid H}{\G\Rightarrow \Delta,\circ_FA\mid H} $}
\end{center}
 \begin{center}
{\small 	{\small $ (\neg\circ_F\Rightarrow) $} $ \dfrac{A\Rightarrow\;\mid \neg A,\G\Rightarrow\Delta\mid H}{\neg{\circ_F}A,\G\Rightarrow \Delta\mid H} $ \qquad
	{\small $ (\Rightarrow\neg\circ_F) $} $ \dfrac{\Gamma\Rightarrow \Delta, A\mid H\qquad\Theta\Rightarrow \Lambda,\neg A\mid G}{\Theta\Rightarrow \Lambda,\neg{\circ_F}A\mid\Gamma\Rightarrow \Delta\mid H\mid G } $}
\end{center}
\begin{center}
{\small {\small $ (\FE_F \Rightarrow) $} $ \dfrac{\neg A,\Gamma\Rightarrow \Delta\mid H \qquad \Theta\Rightarrow \Lambda,\neg A\mid G}{\FE_FA,\Gamma\Rightarrow \Delta\mid \Theta\Rightarrow \Lambda\mid H\mid G} $\qquad
{\small $ (\Rightarrow\FE_F) $} $ \dfrac{\G\Rightarrow\Delta,\neg A\mid \neg A\Rightarrow \mid H}{\G\Rightarrow \Delta,{\FE}_FA\mid H} $}
\end{center}
\begin{center}
{\small 	{\small $ (\neg{\FE}_F\Rightarrow) $} $ \dfrac{\Rightarrow A\mid \G\Rightarrow\Delta,\neg A\mid H}{\neg\FE_FA,\G\Rightarrow \Delta\mid H} $ \qquad
	{\small $ (\Rightarrow\neg{\FE}_F) $}  $ \dfrac{\neg A,\Gamma\Rightarrow \Delta\mid H\qquad A,\Theta\Rightarrow \Lambda\mid G}{\Gamma\Rightarrow \Delta,\neg{\FE_F}A\mid\Theta\Rightarrow \Lambda\mid H\mid G } $}
\end{center}

Hypersequent rules for the first group of Belnapian accident operators:
\begin{center}
{\small 	{\small $ (\bullet_T\Rightarrow) $} $ \dfrac{\Rightarrow A\mid A,\G\Rightarrow \Delta\mid H}{{\bullet}_TA,\G\Rightarrow \Delta\mid H} $ \qquad
	{\small $ (\Rightarrow\bullet_T) $} $ \dfrac{A,\G\Rightarrow \Delta\mid H\qquad\Theta\Rightarrow \Lambda,A\mid G}{\Theta\Rightarrow\Lambda,{\bullet}_TA\mid \Gamma\Rightarrow\Delta\mid H\mid G} $}
\end{center}
\begin{center}
{\small 	{\small $ (\neg{\bullet_T}\Rightarrow) $} 
$ \dfrac{\neg A,\Gamma\Rightarrow \Delta\mid H\qquad\Theta\Rightarrow \Lambda,\neg A\mid G}{\neg{\bullet_T}A,\Gamma\Rightarrow \Delta\mid\Theta\Rightarrow \Lambda\mid H\mid G } $\qquad
	{\small $ (\Rightarrow\neg{\bullet_T}) $}  $ \dfrac{\neg A\Rightarrow\;\mid \G\Rightarrow\Delta,\neg A\mid H}{\G\Rightarrow \Delta,\neg{\bullet_T}A\mid H} $ \qquad}
\end{center}
\begin{center}
{\small {\small $ (\FA_T\Rightarrow) $} $ \dfrac{\G\Rightarrow\Delta, A\mid A\Rightarrow\; \mid H}{\FA_TA,\G\Rightarrow \Delta\mid H} $\qquad
{\small $ (\Rightarrow\FA_T) $} $ \dfrac{A,\Gamma\Rightarrow \Delta\mid H \qquad \Theta\Rightarrow \Lambda,A\mid G}{\Gamma\Rightarrow \Delta,\FA_TA\mid \Theta\Rightarrow \Lambda\mid H\mid G} $}
\end{center}
\begin{center}
{\small 	{\small $ (\neg{\FA_T}\Rightarrow) $} $ \dfrac{\neg A,\Gamma\Rightarrow \Delta\mid H\qquad\Theta\Rightarrow \Lambda,\neg A\mid G}{\neg{\FA_T}A,\Theta\Rightarrow \Lambda\mid\Gamma\Rightarrow \Delta\mid H\mid G } $\qquad
	{\small $ (\Rightarrow\neg{\FA_T}) $}  $ \dfrac{\Rightarrow\neg A\mid \neg A,\G\Rightarrow\Delta\mid H}{\G\Rightarrow \Delta,\neg{\FA_T}A\mid H} $ \qquad	}
\end{center}
\begin{center}
{\small {\small $ (\bullet_F\Rightarrow) $ } $ \dfrac{\Rightarrow\neg A\mid \neg A,\G\Rightarrow \Delta\mid H}{\bullet_FA,\G\Rightarrow \Delta\mid H} $\qquad
{\small $ (\Rightarrow\bullet_F) $} $ \dfrac{\neg A,\Gamma\Rightarrow \Delta\mid H \qquad \Theta\Rightarrow \Lambda,\neg A\mid G}{\Theta\Rightarrow \Lambda,\bullet_FA\mid \Gamma\Rightarrow \Delta\mid H\mid G} $\qquad}
\end{center}
\begin{center}
{\small 	{\small $ (\neg\bullet_F\Rightarrow) $} $ \dfrac{A,\Gamma\Rightarrow \Delta\mid H\qquad\Theta\Rightarrow \Lambda, A\mid G}{\neg{\bullet_F}A,\Gamma\Rightarrow \Delta\mid\Theta\Rightarrow \Lambda\mid H\mid G } $\qquad
	{\small $ (\Rightarrow\neg{\bullet_F}) $} $ \dfrac{A\Rightarrow\;\mid \G\Rightarrow\Delta,A\mid H}{\G\Rightarrow \Delta,\neg{\bullet_F}A\mid H} $ \qquad	}
\end{center}
\begin{center}
{\small {\small $ (\FA_F\Rightarrow) $} $ \dfrac{\G\Rightarrow\Delta,\neg A\mid \neg A\Rightarrow\;\mid H}{{\FA}_FA,\G\Rightarrow \Delta\mid H} $\qquad
{\small $ (\Rightarrow\FA_F) $} $ \dfrac{\neg A,\Gamma\Rightarrow \Delta\mid H \qquad \Theta\Rightarrow \Lambda,\neg A\mid G}{\Gamma\Rightarrow \Delta,\FA_F A\mid \Theta\Rightarrow \Lambda\mid H\mid G} $\qquad}
\end{center}
\begin{center}
{\small {\small $ (\neg\FA_F\Rightarrow) $} $ \dfrac{A,\Gamma\Rightarrow \Delta\mid H \qquad \Theta\Rightarrow \Lambda,A\mid G}{\neg \FA_F A,\Theta\Rightarrow \Lambda\mid\Gamma\Rightarrow \Delta\mid  H\mid G} $\qquad
{\small $ (\Rightarrow\neg \FA_F) $} $ \dfrac{A,\G\Rightarrow\Delta\mid \;\Rightarrow A\mid H}{\G\Rightarrow \Delta,\neg {\FA}_FA\mid H} $\qquad}
\end{center}

Hypersequent rules for the second group of Belnapian essence operators are as follows (where $ S_1 $ stands for $ \Gamma\Rightarrow\Delta $ and $ S_2 $ stands for $ \Theta\Rightarrow\Lambda $):
{\footnotesize \begin{center}
$ (\circ_t\Rightarrow) $ $ \dfrac{A,S_1,\neg A\mid H\quad S_2,A\mid G\quad\neg A,S_2\mid G}{{\circ_t}A,S_2\mid S_1\mid H\mid G} $\quad
$ (\Rightarrow\circ_t) $ $ \dfrac{\Rightarrow A\mid A,S_1,\neg A\mid H\quad\neg A\Rightarrow\;\mid A,S_1,\neg A\mid G}{S_1,{\circ_t}A\mid H\mid G} $
\end{center}
\begin{center}
$ (\neg\circ_t\Rightarrow) $ $ \dfrac{\Rightarrow A\mid A,S_1,\neg A\mid H\quad\neg A\Rightarrow\;\mid A,S_1,\neg A\mid G}{\neg{\circ_t}A,S_1\mid H\mid G} $\quad
$ (\Rightarrow\neg\circ_t) $ $ \dfrac{A,S_1,\neg A\mid H\quad S_2,A\mid G\quad\neg A,S_2\mid G}{S_2,\neg{\circ_t}A\mid S_1\mid H\mid G} $
\end{center}

\begin{center}
$ (\circ_b\Rightarrow) $ $ \dfrac{A,\neg A,S_1\mid H\quad S_2,A\mid G\quad S_2,\neg A\mid G}{{\circ_b}A,S_2\mid S_1\mid H\mid G} $\quad
$ (\Rightarrow\circ_b) $ $ \dfrac{\Rightarrow A\mid A,\neg A,S_1\mid H\quad\Rightarrow\neg A\mid A,\neg A,S_1\mid G}{S_1,{\circ_b}A\mid H\mid G} $
\end{center}
\begin{center}
$ (\neg\circ_b\Rightarrow) $ $ \dfrac{A\Rightarrow \mid A,\neg A,S_1\mid H\quad\neg A\Rightarrow\;\mid A,\neg A,S_1\mid G}{\neg{\circ_b}A,S_1\mid H\mid G} $\quad
$ (\Rightarrow\neg\circ_b) $ $ \dfrac{S_1,A,\neg A\mid H\quad S_2,A\mid G\quad S_2,\neg A\mid G}{S_2,\neg{\circ_b}A\mid S_1\mid H\mid G} $
\end{center}
\begin{center}
$ (\circ_n\Rightarrow) $ $ \dfrac{S_1,A,\neg A\mid H\quad A,S_2\mid G\quad\neg A,S_2\mid G}{{\circ_n}A,S_2\mid S_1\mid H\mid G} $\quad
$ (\Rightarrow\circ_n) $ $ \dfrac{A\Rightarrow \;\mid S_1,A,\neg A\mid H\quad\neg A\Rightarrow\;\mid S_1,A,\neg A\mid G}{S_1,{\circ_n}A\mid H\mid G} $
\end{center}
\begin{center}
$ (\neg\circ_n\Rightarrow) $ $ \dfrac{\Rightarrow A\mid S_1,A,\neg A\mid H\quad\Rightarrow\neg A\mid S_1,A,\neg A\mid G}{\neg{\circ_n}A,S_1\mid H\mid G} $\quad
$ (\Rightarrow\neg\circ_n) $ $ \dfrac{A,\neg A,S_1\mid H\quad A,S_2\mid G\quad\neg A,S_2\mid G}{S_2,\neg{\circ_n}A\mid S_1\mid H\mid G} $
\end{center}
\begin{center}
$ (\circ_f\Rightarrow) $ $ \dfrac{\neg A,S_1, A\mid H\quad S_2,\neg A\mid G\quad A,S_2\mid G}{{\circ_f}A,S_2\mid S_1\mid H\mid G} $\quad
$ (\Rightarrow\circ_f) $ $ \dfrac{A\Rightarrow \mid \neg A,S_1, A\mid H\quad\Rightarrow\neg A\mid \neg A,S_1, A\mid G}{S_1,{\circ_f}A\mid H\mid G} $
\end{center}
\begin{center}
$ (\neg\circ_f\Rightarrow) $ $ \dfrac{A\Rightarrow \mid \neg A,S_1, A\mid H\quad\Rightarrow\neg A\mid \neg A,S_1, A\mid G}{\neg{\circ_f}A,S_1\mid H\mid G} $\quad
$ (\Rightarrow\neg\circ_f) $ $ \dfrac{\neg A,S_1, A\mid H\quad S_2,\neg A\mid G\quad A,S_2\mid G}{S_2,\neg{\circ_f}A\mid S_1\mid H\mid G} $
\end{center}}
\begin{center}

\end{center}
{\footnotesize \begin{center}
$ (\FE_t\Rightarrow) $ $ \dfrac{A,S_1,\neg A\mid H\quad S_2,A\mid G\quad\neg A,S_2\mid G}{{\FE_t}A,S_1\mid S_2\mid H\mid G} $\quad
$ (\Rightarrow\FE_t) $ $ \dfrac{S_1, A\mid A\Rightarrow\neg A\mid H\quad\neg A,S_1\mid A\Rightarrow\neg A\mid G}{S_1,{\FE_t}A\mid H\mid G} $
\end{center}
\begin{center}
$ (\neg\FE_t\Rightarrow) $ $ \dfrac{S_1, A\mid A\Rightarrow\neg A\mid H\quad\neg A,S_1\mid A\Rightarrow\neg A\mid G}{\neg{\FE_t}A,S_1\mid H\mid G} $\quad
%\end{center}
%\begin{center}
$ (\Rightarrow\neg\FE_t) $ $ \dfrac{A,S_1,\neg A\mid H\quad S_2,A\mid G\quad\neg A,S_2\mid G}{S_1,\neg{\FE_t}A\mid S_2\mid H\mid G} $
\end{center}
%\begin{center}

%\end{center}
\begin{center}
$ (\FE_b\Rightarrow) $ $ \dfrac{A,\neg A,S_1\mid H\quad S_2,A\mid G\quad S_2,\neg A\mid G}{{\FE_b}A,S_1\mid S_2\mid H\mid G} $\quad
%\end{center}
%\begin{center}
$ (\Rightarrow\FE_b) $ $ \dfrac{S_1, A\mid A,\neg A\Rightarrow\;\mid H\quad S_1,\neg A\mid A,\neg A\Rightarrow\;\mid G}{S_1,{\FE_b}A\mid H\mid G} $
\end{center}
\begin{center}
$ (\neg\FE_b\Rightarrow) $ $ \dfrac{S_1, A\mid\;\Rightarrow A,\neg A\mid H\quad S_1,\neg A\mid \;\Rightarrow A,\neg A\mid G}{\neg{\FE_b}A,S_1\mid H\mid G} $\quad
%\end{center}
%\begin{center}
$ (\Rightarrow\neg\FE_b) $ $ \dfrac{A,\neg A,S_1\mid H\quad A,S_2\mid G\quad\neg A,S_2\mid G}{S_1,\neg{\FE_b}A\mid S_2\mid H\mid G} $
\end{center}
%\begin{center}

%\end{center}
\begin{center}
$ (\FE_n\Rightarrow) $ $ \dfrac{S_1,A,\neg A\mid H\quad A,S_2\mid G\quad\neg A,S_2\mid G}{{\FE_n}A,S_1\mid S_2\mid H\mid G} $\quad
%\end{center}
%\begin{center}
$ (\Rightarrow\FE_n) $ $ \dfrac{A,S_1\mid\; \Rightarrow A,\neg A\mid H\quad\neg A,S_1\mid\;\Rightarrow A,\neg A\mid G}{S_1,{\FE_n}A\mid H\mid G} $
\end{center}
\begin{center}
$ (\neg\FE_n\Rightarrow) $ $ \dfrac{A,S_1\mid A,\neg A \Rightarrow \;\mid H\quad\neg A,S_1\mid A,\neg A \Rightarrow \;\mid G}{\neg{\FE_n}A,S_1\mid H\mid G} $\quad
%\end{center}
%\begin{center}
$ (\Rightarrow\neg\FE_n) $ $ \dfrac{S_1,A,\neg A\mid H\quad S_2,A\mid G\quad S_2,\neg A\mid G}{S_1,\neg{\FE_n}A\mid S_2\mid H\mid G} $
\end{center}

\begin{center}
$ (\FE_f\Rightarrow) $ $ \dfrac{\neg A,S_1, A\mid H\quad S_2,\neg A\mid G\quad A,S_2\mid G}{{\FE_f}A,S_1\mid S_2\mid H\mid G} $\quad
%\end{center}
%\begin{center}
$ (\Rightarrow\FE_f) $ $ \dfrac{A,S_1 \mid \neg A\Rightarrow A\mid H\quad S_1,\neg A\mid \neg A\Rightarrow A\mid G}{S_1,{\FE_f}A\mid H\mid G} $
\end{center}
\begin{center}
$ (\neg\FE_f\Rightarrow) $ $ \dfrac{A,S_1 \mid \neg A\Rightarrow A\mid H\quad S_1,\neg A\mid \neg A\Rightarrow A\mid G}{\neg{\FE_f}A,S_1\mid H\mid G} $\quad
%\end{center}
%\begin{center}
$ (\Rightarrow\neg\FE_f) $ $ \dfrac{\neg A,S_1, A\mid H\quad S_2,\neg A\mid G\quad A,S_2\mid G}{S_1,\neg{\FE_f}A\mid S_2\mid H\mid G} $
\end{center}}

Hypersequent rules for the second group of Belnapian accident operators are as follows (where $ S_1 $ stands for $ \Gamma\Rightarrow\Delta $ and $ S_2 $ stands for $ \Theta\Rightarrow\Lambda $):
{\footnotesize \begin{center}
$ (\bullet_t\Rightarrow) $ $ \dfrac{\Rightarrow A\mid A, S_1,\neg A\mid H\quad\neg A\Rightarrow\;\mid A, S_1,\neg A\mid G}{\bullet_tA, S_1\mid H\mid G} $\quad
%\end{center}
%\begin{center}
$ (\Rightarrow\bullet_t) $ $ \dfrac{A, S_1,\neg A\mid H\quad S_2,A\mid G\quad\neg A, S_2\mid G}{ S_2,\bullet_tA\mid S_1\mid H\mid G} $
\end{center}
\begin{center}
$ (\neg{\bullet_t}\Rightarrow) $ $ \dfrac{A, S_1,\neg A\mid H\quad S_2,A\mid G\quad\neg A, S_2\mid G}{\neg{\bullet_t}A, S_2\mid S_1\mid H\mid G} $\quad
%\end{center}
%\begin{center}
$ (\Rightarrow\neg{\bullet_t}) $ $ \dfrac{\Rightarrow A\mid A, S_1,\neg A\mid H\quad\neg A\Rightarrow\;\mid A, S_1,\neg A\mid G}{ S_1,\neg{\bullet_t}A\mid H\mid G} $
\end{center}
%\begin{center}

%\end{center}
\begin{center}
$ ({\bullet_b}\Rightarrow) $ $ \dfrac{\Rightarrow A\mid A,\neg A, S_1\mid H\quad\Rightarrow\neg A\mid A,\neg A, S_1\mid G}{{\bullet_b}A, S_1\mid H\mid G} $\quad
%\end{center}
%\begin{center}
$ (\Rightarrow{\bullet_b}) $ $ \dfrac{A,\neg A, S_1\mid H\quad S_2,A\mid G\quad S_2,\neg A\mid G}{ S_2,{\bullet_b}A\mid S_1\mid H\mid G} $
\end{center}
\begin{center}
$ (\neg\bullet_b\Rightarrow) $ $ \dfrac{ S_1,A,\neg A\mid H\quad A, S_2\mid G\quad\neg A, S_2\mid G}{\neg{\bullet_b}A, S_2\mid S_1\mid H\mid G} $\quad
%\end{center}
%\begin{center}
$ (\Rightarrow\neg{\bullet_b}) $ $ \dfrac{A\Rightarrow \;\mid  S_1,A,\neg A\mid H\quad\neg A\Rightarrow\;\mid  S_1,A,\neg A\mid G}{ S_1,\neg{\bullet_b}A\mid H\mid G} $
\end{center}
%\begin{center}

%\end{center}
\begin{center}
$ ({\bullet_n}\Rightarrow) $ $ \dfrac{A\Rightarrow \;\mid  S_1,A,\neg A\mid H\quad\neg A\Rightarrow\;\mid  S_1,A,\neg A\mid G}{{\bullet_n}A, S_1\mid H\mid G} $\quad
%\end{center}
%\begin{center}
$ (\Rightarrow{\bullet_n}) $ $ \dfrac{ S_1,A,\neg A\mid H\quad A, S_2\mid G\quad\neg A, S_2\mid G}{ S_2,{\bullet_n}A\mid S_1\mid H\mid G} $
\end{center}
\begin{center}
$ (\neg{\bullet_n}\Rightarrow) $ $ \dfrac{A,\neg A, S_1\mid H\quad  S_2,A\mid G\quad S_2,\neg A\mid G}{\neg{\bullet_n}A, S_2\mid S_1\mid H\mid G} $\quad
%\end{center}
%\begin{center}
$ (\Rightarrow\neg{\bullet_n}) $ $ \dfrac{\Rightarrow A\mid A,\neg A, S_1\mid H\quad\Rightarrow\neg A\mid A,\neg A,  S_1\mid G}{ S_1,\neg{\bullet_n}A\mid H\mid G} $
\end{center}
%\begin{center}

%\end{center}
\begin{center}
$ ({\bullet_f}\Rightarrow) $ $ \dfrac{A\Rightarrow \mid \neg A, S_1, A\mid H\quad\Rightarrow\neg A\mid \neg A, S_1, A\mid G}{{\bullet_f}A, S_1\mid H\mid G} $\quad
%\end{center}
%\begin{center}
$ (\Rightarrow{\bullet_f}) $ $ \dfrac{\neg A, S_1, A\mid H\quad S_2,\neg A\mid G\quad A, S_2\mid G}{ S_2,{\bullet_f}A\mid S_1\mid H\mid G} $
\end{center}
\begin{center}
$ (\neg{\bullet_f}\Rightarrow) $ $ \dfrac{\neg A, S_1, A\mid H\quad S_2,\neg A\mid G\quad A, S_2\mid G}{\neg{\bullet_f}A, S_2\mid S_1\mid H\mid G} $\quad
%\end{center}
%\begin{center}
$ (\Rightarrow\neg{\bullet_f}) $ $ \dfrac{A\Rightarrow \mid \neg A, S_1, A\mid H\quad\Rightarrow\neg A\mid \neg A, S_1, A\mid G}{ S_1,\neg{\bullet_f}A\mid H\mid G} $
\end{center}}

\begin{center}

\end{center}

{\footnotesize \begin{center}
$ (\FA_t\Rightarrow) $ $ \dfrac{ S_1, A\mid A\Rightarrow\neg A\mid H\quad\neg A, S_1\mid A\Rightarrow\neg A\mid G}{\FA_tA, S_1\mid H\mid G} $\quad
%\end{center}
%\begin{center}
$ (\Rightarrow\FA_t) $ $ \dfrac{A, S_1,\neg A\mid H\quad S_2,A\mid G\quad\neg A, S_2\mid G}{ S_1,\FA_tA\mid S_2\mid H\mid G} $
\end{center}
\begin{center}
$ (\neg{\FA_t}\Rightarrow) $ $ \dfrac{A, S_1,\neg A\mid H\quad S_2,A\mid G\quad\neg A, S_2\mid G}{\neg{\FA_t}A, S_1\mid S_2\mid H\mid G} $\quad
%\end{center}
%\begin{center}
$ (\Rightarrow\neg{\FA_t}) $ $ \dfrac{ S_1, A\mid A\Rightarrow\neg A\mid H\quad\neg A, S_1\mid A\Rightarrow\neg A\mid G}{ S_1,\neg{\FA_t}A\mid H\mid G} $
\end{center}
%\begin{center}

%\end{center}
\begin{center}
$ ({\FA_b}\Rightarrow) $ $ \dfrac{ S_1, A\mid A,\neg A\Rightarrow\;\mid H\quad S_1,\neg A\mid A,\neg A\Rightarrow\;\mid G}{{\FA_b}A, S_1\mid H\mid G} $\quad
%\end{center}
%\begin{center}
$ (\Rightarrow{\FA_b}) $ $ \dfrac{A,\neg A, S_1\mid H\quad S_2,A\mid G\quad S_2,\neg A\mid G}{ S_1,{\FA_b}A\mid S_2\mid H\mid G} $
\end{center}
\begin{center}
$ (\neg\FA_b\Rightarrow) $ $ \dfrac{ S_1,A,\neg A\mid H\quad A, S_2\mid G\quad\neg A, S_2\mid G}{\neg{\FA_b}A, S_1\mid S_2\mid H\mid G} $\quad
%\end{center}
%\begin{center}
$ (\Rightarrow\neg{\FA_b}) $ $ \dfrac{A, S_1\mid \;\Rightarrow A,\neg A\mid H\quad\neg A, S_1\mid \;\Rightarrow A,\neg A\mid G}{ S_1,\neg{\FA_b}A\mid H\mid G} $
\end{center}
%\begin{center}

%\end{center}
\begin{center}
$ ({\FA_n}\Rightarrow) $ $ \dfrac{A, S_1\mid\; \Rightarrow A,\neg A\mid H\quad\neg A, S_1\mid\;\Rightarrow A,\neg A\mid G}{{\FA_n}A, S_1\mid H\mid G} $\quad
%\end{center}
%\begin{center}
$ (\Rightarrow{\FA_n}) $ $ \dfrac{ S_1,A,\neg A\mid H\quad A, S_2\mid G\quad\neg A, S_2\mid G}{ S_1,{\FA_n}A\mid S_2\mid H\mid G} $
\end{center}
\begin{center}
$ (\neg{\FA_n}\Rightarrow) $ $ \dfrac{A,\neg A, S_1\mid H\quad  S_2,A\mid G\quad S_2,\neg A\mid G}{\neg{\FA_n}A, S_1\mid S_2\mid H\mid G} $\quad
%\end{center}
%\begin{center}
$ (\Rightarrow\neg{\FA_n}) $ $ \dfrac{ S_1,A\mid A,\neg A \Rightarrow \;\mid H\quad S_1,\neg A\mid A,\neg A\Rightarrow \;\mid G}{ S_1,\neg{\FA_n}A\mid H\mid G} $
\end{center}
%\begin{center}

%\end{center}
\begin{center}
$ ({\FA_f}\Rightarrow) $ $ \dfrac{A, S_1 \mid \neg A\Rightarrow A\mid H\quad S_1,\neg A\mid \neg A\Rightarrow A\mid G}{{\FA_f}A, S_1\mid H\mid G} $\quad
%\end{center}
%\begin{center}
$ (\Rightarrow{\FA_f}) $ $ \dfrac{\neg A, S_1, A\mid H\quad S_2,\neg A\mid G\quad A, S_2\mid G}{ S_1,{\FA_f}A\mid S_2\mid H\mid G} $
\end{center}
\begin{center}
$ (\neg{\FA_f}\Rightarrow) $ $ \dfrac{\neg A, S_1, A\mid H\quad S_2,\neg A\mid G\quad A, S_2\mid G}{\neg{\FA_f}A, S_1\mid S_2\mid H\mid G} $\quad
%\end{center}
%\begin{center}
$ (\Rightarrow\neg{\FA_f}) $ $ \dfrac{A, S_1 \mid \neg A\Rightarrow A\mid H\quad S_1,\neg A\mid \neg A\Rightarrow A\mid G}{ S_1,\neg{\FA_f}A\mid H\mid G} $
\end{center}}
\end{definition}
\section{Embedding theorems}\label{Embedding}
\begin{definition}\label{f}	The mapping $ \f $ from the language $\Lang_{\Box\Diamond}^\neg$ into the language $\Lang_{\Box\Diamond}$ is inductively defined as follows\textup{:}
{\small \begin{enumerate}[$ - $]
\item $ \f(p)=p $ and $ \f(\neg p):=p^\prime $ (where $p^\prime\in\mathcal{P}^\prime $), for each $p\in\mathcal{P}$,
\vspace{-.15in}\begin{multicols}{3}
\item $ \f(\non A)=\non \f(A) $,
\item $ \f(\neg\non A)= \non \f(\neg A) $,
\item $ \f(\neg\neg A)= \f(A) $,
\end{multicols}\vspace{-.15in} \vspace{-.15in}
\begin{multicols}{2}\itemsep=0pt
\item $ \f(\Box A)=\Box \f(A), $
\item $ \f(\D A)=\D \f(A), $
\item $ \f(A\wedge B)=\f(A)\wedge \f(B), $
\item $ \f(A\vee B)=\f(A)\vee \f(B), $
\item $ \f(A\rightarrow B)=\f(A)\rightarrow \f(B), $  
\item $ \f(\neg\Box A)=\D \f(\neg A), $
\item $ \f(\neg\D A)=\Box \f(\neg A), $
\item $ \f(\neg(A\wedge B))=\f(\neg A)\vee \f(\neg B), $
\item $ \f(\neg(A\vee B))=\f(\neg A)\wedge \f(\neg B), $
\item $ \f(\neg(A\rightarrow B))=\f(A)\wedge \f(\neg B). $
\end{multicols}\vspace{-.15in}
\end{enumerate}}
\end{definition}

A similar translation was used in \cite{Kamide19,Kamide23,MML,MMLS5}. We write $ \f(\Gamma) $ for $ \{\f(A)\mid A\in\Gamma \} $ and $ \f(H) $ for $ \f(\Gamma_1)\Rightarrow \f(\Delta_1)\mid \ldots \mid \f(\Gamma_n)\Rightarrow \f(\Delta_n) $, where $ H=\Gamma_1\Rightarrow \Delta_1\mid \ldots \mid \Gamma_n\Rightarrow \Delta_n $. Let $\mathbf{H}\setminus \mathrm{(Cut)}$ be a cut-free fragment of a hypersequent calculus $ \mathbf{H} $ obtained by deleting (Cut).

\begin{lemma}[Weak Syntactical Embedding from $ \bf FDE^{\bf S5}_{\Box\D}$ to $ \bf S5$]\label{WSynt4to2}
For each hypersequent $ \mathscr{H} $, it holds that 
\begin{enumerate}[$ (1) $]\itemsep=0pt
\item $ {\bf FDE^{\bf S5}_{\B\D}}\vdash\mathscr{H}$ implies $ {\bf S5}\vdash\f(\mathscr{H})$,
\item $ {\bf S5}\setminus \mathrm{(Cut)}\vdash\f(\mathscr{H})$ implies $ {\bf FDE^{\bf S5}_{\B\D}}\setminus \mathrm{(Cut)}\vdash\mathscr{H}$.
\end{enumerate}
\end{lemma}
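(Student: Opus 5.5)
We prove both parts by induction on the height of proofs, in the style of Kamide's embedding arguments.

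\emph{Part (1).} We show that every \textbf{FDE}$^{\bf S5}_{\B\D}$-derivation of $\mathscr{H}$ can be transformed into an \textbf{S5}-derivation of $\f(\mathscr{H})$. For (Cut) the induction hypothesis is applied to both premises, and (Cut) is used in \textbf{S5} on the translated formula.

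The axioms are handled as follows. (Ax) $p\Rightarrow p$ goes to itself. (Ax$_\neg$) $\neg p\Rightarrow\neg p$ goes to $p'\Rightarrow p'$, which is an instance of (Ax) because $p'$ is a propositional variable of $\Lang_{\B\D}$.

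The structural rules, and the rules for $\non,\wedge,\vee,\rightarrow,\B,\D$, translate rule by rule, since $\f$ commutes with these connectives. The negated rules are also mapped onto single \textbf{S5} rules:
\begin{itemize}
\item $(\neg\wedge\Rightarrow)$ and $(\Rightarrow\neg\wedge)$ become $(\vee\Rightarrow)$ and $(\Rightarrow\vee)$;
\item $(\neg\vee\cdot)$ become the $\wedge$-rules;
\item $(\neg{\rightarrow}\cdot)$ become the $\wedge$-rules, since $\f(\neg(A\rightarrow B))=\f(A)\wedge\f(\neg B)$;
\item the $\neg\neg$-rules become trivial, because $\f(\neg\neg A)=\f(A)$;
\item the $\neg\non$-rules become the $\non$-rules applied to $\f(\neg A)$;
\item $(\neg\B\Rightarrow)$ and $(\Rightarrow\neg\B)$ become $(\D\Rightarrow)$ and $(\Rightarrow\D)$;
\item the $\neg\D$-rules become the $\B$-rules.
\end{itemize}
This works because the translated premises and conclusion have exactly the shape of the corresponding \textbf{S5} rule. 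We will use this observation in both directions.

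\emph{Part (2).} Here the induction is on the height of a cut-free \textbf{S5}-proof of $\f(\mathscr{H})$, and we argue by cases on the last rule. This direction is the main obstacle, because $\f$ is not injective. For example, $\f(\neg(A\wedge B))$ and $\f(C\vee D)$ both have the form $X\vee Y$, and $\f(\neg\B A)=\D\f(\neg A)$ coincides with $\f(\D A')$ for suitable $A'$. So when the last \textbf{S5} rule introduces, say, $X\vee Y$ on the left, we must look at the preimage formula in $\mathscr{H}$ and choose which \textbf{FDE} rule to use.

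The handling of the preimage formula is as follows.
\begin{itemize}
\item If it is $C\vee D$, we use $(\vee\Rightarrow)$.
\item If it is $\neg(C\wedge D)$, we use $(\neg\wedge\Rightarrow)$.
\item If it is $\neg\neg E$, we write $E$ as its innermost non-$\neg\neg$ form and invoke the induction hypothesis on the same proof. This is legitimate because $\f(\neg\neg E)=\f(E)$: the hypersequents obtained by stripping $\neg\neg$ have the same translation, and $(\neg\neg\Rightarrow)$ or $(\Rightarrow\neg\neg)$ then restores the double negations.
\end{itemize}

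To make this precise, we state part (2) in the stronger form: for every $\mathscr{H}'$ with $\f(\mathscr{H}')=\f(\mathscr{H})$ componentwise, \textbf{FDE}$^{\bf S5}_{\B\D}\setminus$(Cut) proves $\mathscr{H}'$. The induction then runs cleanly, because the premises of the \textbf{S5} rule are the $\f$-images of the premises of the chosen \textbf{FDE} rule, which fixes a preimage for each premise. Structural rules, including (EW), (Merge) and the contractions, have immediate preimages.

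For axioms $q\Rightarrow q$ in $\f(\mathscr{H})$ the preimage is obtained as follows.
\begin{itemize}
\item If $q=p$, each side is $p$ or $\neg^{2k}p$, and the sequent follows from (Ax) by the $\neg\neg$-rules.
\item If $q=p'$, each side is $\neg^{2k+1}p$, and the sequent follows from (Ax$_\neg$) by the $\neg\neg$-rules.
\end{itemize}
The remaining work is the routine verification of the ambiguous connective cases, checked once per \textbf{S5} rule.
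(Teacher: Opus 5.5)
Your proposal is correct and takes the same route as the paper. Part (1) is a rule-by-rule translation by induction on the derivation. Part (2) is an induction on the cut-free \textbf{S5} derivation of $\mathfrak{f}(\mathscr{H})$, with the \textbf{FDE} rule chosen according to the preimage of the principal formula. The paper only says ``similarly to (1)'' for this lemma, but its subcase analysis for the $\mathfrak{g}$-embedding (Lemma~\ref{WSynt4to4}(2)) is exactly your treatment of the non-injectivity of $\mathfrak{f}$; your handling of the axiom case via $\neg^{2k}p$ and $\neg^{2k+1}p$ is a detail the paper leaves implicit.

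One phrase needs fixing: in the $\neg\neg$ case you cannot ``invoke the induction hypothesis on the same proof'', since that is circular for an induction on height. What your construction actually does is reduce, at the same height, to the already treated case of the stripped principal formula, whose premises have smaller height.
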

\begin{proof}
(1) By induction on the proof $ \mathfrak{P}_1 $ of $ \mathscr{H} $ in the hypersequent calculus for $ \bf FDE^{\bf S5}_{\B\D}$. We distinguish the cases according to the
last inference of $ \mathfrak{P}_1 $. 

The case $ (\neg\D\Rightarrow) $. The last inference of $ \mathfrak{P}_1 $ has the following form:

\begin{center}
\EnableBpAbbreviations
\AXC{$\neg A,\Gamma\Rightarrow \Delta\mid H$}
\RL{$ (\neg\D \Rightarrow) $}
\UIC{$ \neg\D A\Rightarrow\; \mid \Gamma\Rightarrow \Delta\mid H$}
\DisplayProof
\end{center}

Using the induction hypothesis, we obtain the following proof in the hypersequent calculus for $ \bf S5$, where Df $ \f $ stands for Definition \ref{f}:
\begin{center}
\EnableBpAbbreviations
\AXC{$ \f(\neg A),\f(\Gamma)\Rightarrow \f(\Delta)\mid \f(H)$}
\RL{$ (\B \Rightarrow) $}
\UIC{$ \Box\f(\neg A)\Rightarrow\; \mid \f(\Gamma)\Rightarrow \f(\Delta)\mid \f(H)$}
\RL{Df $ \f $}
\UIC{$ \f(\neg\D A)\Rightarrow\; \mid \f(\Gamma)\Rightarrow \f(\Delta)\mid \f(H)$}
\DisplayProof
\end{center}

The other cases are treated similarly.

(2) By induction on the proof $ \mathfrak{P}_2 $ of $\f(\mathscr{H} )$ in ${\bf S5}\setminus \mathrm{(Cut)}$. Similarly to (1).
\end{proof}

\begin{theorem}[Syntactical embedding from $\bf FDE^{\bf S5}_{\Box\D}$ into \textbf{S5}]\label{Synt4to2} 
For each  hypersequent $ \mathscr{H} $, 
\begin{enumerate}[$(1)$]\itemsep=0pt
   \item  ${\bf FDE^{\bf S5}_{\Box\D}}\vdash \mathscr{H}$ iff ${\bf S5}\vdash \f(\mathscr{H})$,     \item  ${\bf FDE^{\bf S5}_{\Box\D}}\setminus \mathrm{(Cut)}\vdash \mathscr{H}$ iff ${\bf S5}\setminus \mathrm{(Cut)}\vdash \f(\mathscr{H})$.
   \end{enumerate}
\end{theorem}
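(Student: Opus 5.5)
Theorem \ref{Synt4to2} follows from Lemma \ref{WSynt4to2} together with cut elimination for Restall's calculus for \textbf{S5} \cite{Restall} (see also \cite{IndrzejczakSC}). The left-to-right directions of (1) and (2) require one check each, and the right-to-left directions are obtained by chaining implications through the cut-free systems.

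For (1), left to right is exactly Lemma \ref{WSynt4to2}(1). For right to left, assume ${\bf S5}\vdash\f(\mathscr{H})$. Cut elimination for Restall's calculus gives ${\bf S5}\setminus\mathrm{(Cut)}\vdash\f(\mathscr{H})$. Lemma \ref{WSynt4to2}(2) then gives ${\bf FDE^{\bf S5}_{\B\D}}\setminus\mathrm{(Cut)}\vdash\mathscr{H}$, and hence ${\bf FDE^{\bf S5}_{\B\D}}\vdash\mathscr{H}$, since every cut-free proof is a proof.

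We must make sure that the slightly modified calculus (generalised external weakening, multiplicative $\wedge,\vee$, added rules for $\vee,\rightarrow,\D$) still enjoys cut elimination. This holds because each modification is sound and the standard Restall-style argument goes through. Alternatively, cut elimination follows semantically: the cut-free calculus is complete with respect to \textbf{S5}-models, and (Cut) is sound.

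For (2), right to left is exactly Lemma \ref{WSynt4to2}(2). For left to right, we need the cut-free analogue of Lemma \ref{WSynt4to2}(1): if ${\bf FDE^{\bf S5}_{\B\D}}\setminus\mathrm{(Cut)}\vdash\mathscr{H}$, then ${\bf S5}\setminus\mathrm{(Cut)}\vdash\f(\mathscr{H})$. I would prove this by repeating the induction of Lemma \ref{WSynt4to2}(1) and observing that, when the source proof contains no (Cut), the translated proof uses no (Cut) either.

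The base cases are as follows:
\begin{itemize}
\item (Ax) translates to $p\Rightarrow p$;
\item (Ax$_\neg$) translates to $p'\Rightarrow p'$, which is an instance of (Ax) because $p'$ is a propositional variable.
\end{itemize}

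The inductive cases are as follows:
\begin{itemize}
\item The structural rules translate to the same rules.
\item Each logical rule translates to the corresponding \textbf{S5} rule. For example, $(\neg\wedge\Rightarrow)$ becomes $(\vee\Rightarrow)$, $(\neg\Box\Rightarrow)$ becomes $(\D\Rightarrow)$, and $(\neg\neg\Rightarrow)$ is trivial since $\f(\neg\neg A)=\f(A)$.
\item $(\neg\non\Rightarrow)$ becomes $(\non\Rightarrow)$ applied to $\f(\neg A)$, since $\f(\neg\non A)=\non\f(\neg A)$.
\end{itemize}

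The main obstacle is this verification that no translated case introduces a cut, together with confirming that cut elimination holds for our modified version of Restall's calculus. If a direct reference for the latter is unavailable, I would first try the semantic route: prove completeness of ${\bf S5}\setminus\mathrm{(Cut)}$ by a standard canonical countermodel (saturation) construction for hypersequents.
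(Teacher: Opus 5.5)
Your proposal is correct, and part (1) is exactly the paper's argument. One direction is Lemma~\ref{WSynt4to2}(1); the other is cut admissibility for Restall's calculus followed by Lemma~\ref{WSynt4to2}(2).

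The only divergence is the left-to-right direction of (2). The paper just says ``the proof is similar'', which presumably means the chain ${\bf FDE^{\bf S5}_{\Box\D}}\setminus\mathrm{(Cut)}\vdash\mathscr{H}$, hence ${\bf FDE^{\bf S5}_{\Box\D}}\vdash\mathscr{H}$, hence ${\bf S5}\vdash\f(\mathscr{H})$ by Lemma~\ref{WSynt4to2}(1), hence ${\bf S5}\setminus\mathrm{(Cut)}\vdash\f(\mathscr{H})$ by the cut admissibility theorem already used in (1). You instead re-run the induction of Lemma~\ref{WSynt4to2}(1). You check that every rule, and (Ax$_\neg$), which becomes the axiom $p'\Rightarrow p'$, is mapped either to a single \textbf{S5} rule or to no step at all. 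So (Cut) only ever arises as the image of (Cut).

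Your route costs a case inspection. In return it gives a direct, structure-preserving translation of cut-free proofs that does not depend on cut elimination for \textbf{S5}. The paper's route is shorter, because that dependence has already been incurred in (1).

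One small point: the soundness of the modifications to Restall's calculus has no bearing on whether the modified calculus admits cut elimination. Your semantic alternative is the right kind of justification: completeness of the cut-free system together with soundness of (Cut). The paper simply cites Restall and Indrzejczak for this.
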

\begin{proof}
(1) By Lemma \ref{WSynt4to2}, ${\bf FDE^{\bf S5}_{\B\D}}\vdash\mathscr{H}$ implies $ {\bf S5}\vdash\f(\mathscr{H})$. For another direction suppose that ${\bf S5}\vdash \f(\mathscr{H})$. By the cut admissibility theorem for Restall's hypersequent calculus\footnote{Restall \cite{Restall} gives a completeness proof for his calculus and uses a technique which ``can succeed without the use of the cut rule''. \cite[p. 166]{Restall}. He also presents a sketch of a constructive cut elimination proof. A more detailed version of such a proof is due to Indrzejczak \cite{IndrzejczakSC}. In \cite{PetrukhinS5}, it is pointed out that the rule (EW) is required for proper constructive cut elimination proof.} for \textbf{S5}, ${\bf S5}\setminus(\mathrm{Cut})\vdash \f(\mathscr{H})$. By Lemma \ref{WSynt4to2}, $ {\bf FDE^{\bf S5}_{\B\D}}\setminus \mathrm{(Cut)}\vdash\mathscr{H}$. Therefore, $ {\bf FDE^{\bf S5}_{\B\D}}\vdash\mathscr{H}$.
				
(2) The proof is similar.
\end{proof}

\begin{theorem}[Cut admissibility]\label{CutAdm1}
The rule (Cut) is admissible in $\bf FDE^{\bf S5}_{\Box\D}\setminus \mathrm{(Cut)}$.
\end{theorem}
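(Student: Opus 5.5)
The plan is to obtain cut admissibility for $\bf FDE^{\bf S5}_{\Box\D}$ directly from the syntactical embedding of Theorem~\ref{Synt4to2} together with Lemma~\ref{WSynt4to2}, transferring the known cut admissibility of Restall's calculus for \textbf{S5} rather than proving cut elimination inside the four-valued calculus.

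We start from a hypersequent $\mathscr{H}$ such that ${\bf FDE^{\bf S5}_{\Box\D}}\vdash\mathscr{H}$, possibly using (Cut), and show that ${\bf FDE^{\bf S5}_{\Box\D}}\setminus\mathrm{(Cut)}\vdash\mathscr{H}$. The steps, in order, are as follows.

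\begin{enumerate}[$(1)$]
\item By Lemma~\ref{WSynt4to2}(1), ${\bf S5}\vdash\f(\mathscr{H})$.
\item By the cut admissibility theorem for Restall's hypersequent calculus for \textbf{S5} (see \cite{Restall,IndrzejczakSC}, with the generalised (EW) as in \cite{PetrukhinS5}), we get ${\bf S5}\setminus\mathrm{(Cut)}\vdash\f(\mathscr{H})$.
\item By Lemma~\ref{WSynt4to2}(2), ${\bf FDE^{\bf S5}_{\Box\D}}\setminus\mathrm{(Cut)}\vdash\mathscr{H}$.
\end{enumerate}

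This chain is exactly the argument already given in the proof of Theorem~\ref{Synt4to2}(1). Equivalently, we may combine items (1) and (2) of Theorem~\ref{Synt4to2}: the two calculi prove the same hypersequents because both are characterised by derivability of $\f(\mathscr{H})$ in \textbf{S5}, respectively \textbf{S5} without cut, and these coincide.

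The only substantive work lies in the inputs to this argument, chiefly Lemma~\ref{WSynt4to2}(2). Its inductive step must pull a cut-free \textbf{S5}-proof of $\f(\mathscr{H})$ back to a cut-free four-valued proof. This requires that every inference in a cut-free \textbf{S5}-derivation of a translated hypersequent is the image under $\f$ of a four-valued rule. The case needing care is when the principal formula is a translation $\f(\neg B)$: for example, $\D\f(\neg A)$ introduced by $(\D\Rightarrow)$ must be matched by $(\neg\Box\Rightarrow)$, and $p'$ in an axiom must be matched by (Ax$_\neg$). Since $\f$ is injective on formula shapes up to the $\neg\neg$ clause, which the $(\neg\neg)$ rules absorb, I would take this as already established in the paper and conclude immediately.
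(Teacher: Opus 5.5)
Your argument is correct and is the same as the paper's, which cites Theorem~\ref{Synt4to2} together with cut admissibility for Restall's \textbf{S5} calculus; your chain (1)--(3) simply unfolds that combination. One aside is wrong but harmless: $\mathfrak{f}$ is not injective even modulo the $\neg\neg$ clause (e.g.\ $\mathfrak{f}(p\wedge\neg q)=\mathfrak{f}(\neg(p\rightarrow q))=p\wedge q^\prime$ and $\mathfrak{f}(\non\neg p)=\mathfrak{f}(\neg\non p)=\non p^\prime$), but Lemma~\ref{WSynt4to2}(2) does not need injectivity, since it is proved for a fixed $\mathscr{H}$ and chooses the four-valued rule by looking at the formula of $\mathscr{H}$ whose image is principal.
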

\begin{proof}
Follows from Theorem \ref{Synt4to2} and the cut admissibility theorem for Restall's hypersequent calculus for \textbf{S5}.
\end{proof}
\begin{theorem}[Semantic embedding from $\bf FDE^{\bf S5}_{\Box\D}$ into \textbf{S5}]\label{Sem4to2} 
For every $\Lang_{\Box\Diamond}^\neg$-formula $ A $, it holds that ${\bf FDE^{\bf S5}_{\Box\D}}\models A$ iff ${\bf S5}\models \f(A)$.
\end{theorem}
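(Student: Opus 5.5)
The plan is to prove the semantic embedding by building two model transformations, one in each direction, and pairing each with a key lemma that links $\var(\cdot,x)$ with the two-valued valuation of the $\f$-translation. Both lemmas go by induction on formulas.

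\emph{From four-valued to two-valued.} Given an $\bf FDE^{\bf S5}_{\B\D}$-model $\langle W,\var\rangle$, I define an \textbf{S5}-model $\langle W,\var'\rangle$ on the same set of worlds. On atoms, $\var'(p,x)=T$ iff $T\in\var(p,x)$, and $\var'(p',x)=T$ iff $F\in\var(p,x)$. This is extended to all of $\Lang_{\B\D}$ by the classical and \textbf{S5} clauses. The key lemma says that for every $\Lang^\neg_{\B\D}$-formula $A$ and every $x\in W$:
\begin{center}
(a) $T\in\var(A,x)$ iff $\var'(\f(A),x)=T$, \qquad (b) $F\in\var(A,x)$ iff $\var'(\f(\neg A),x)=T$.
\end{center}
I prove (a) and (b) simultaneously by induction on the complexity of $A$. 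Each clause of Definition \ref{Four-valuedS5} matches a clause of Definition \ref{f}. For instance, $F\in\var(\B A,x)$ iff there is $y$ with $F\in\var(A,y)$, iff (by IH(b)) there is $y$ with $\var'(\f(\neg A),y)=T$, iff $\var'(\D\f(\neg A),x)=T$, iff $\var'(\f(\neg\B A),x)=T$. Similarly, $F\in\var(\non A,x)$ iff $F\notin\var(A,x)$, iff $\var'(\non\f(\neg A),x)=T$, which is exactly $\f(\neg\non A)$. The case $\neg\neg A$ uses $F\in\var(\neg A,x)$ iff $T\in\var(A,x)$ together with $\f(\neg\neg A)=\f(A)$. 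The case $\neg(A\rightarrow B)$ combines IH(a) for $A$ with IH(b) for $B$.

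\emph{From two-valued to four-valued.} Conversely, given an \textbf{S5}-model $\langle W,\var'\rangle$, I define a four-valued valuation on atoms by $T\in\var(p,x)$ iff $\var'(p,x)=T$, and $F\in\var(p,x)$ iff $\var'(p',x)=T$. This always yields one of $t,b,n,f$, and it is extended by Definition \ref{Four-valuedS5}. The same inductive lemma then holds with literally the same proof. So the two constructions are mutually inverse on atoms, and a single lemma serves both directions.

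\emph{Conclusion and main obstacle.} Suppose ${\bf S5}\models\f(A)$ and take any four-valued model and world $x$. Building $\var'$, we have $\var'(\f(A),x)=T$, so $T\in\var(A,x)$ by (a). Suppose instead ${\bf S5}\not\models\f(A)$, witnessed by some model and world $x$. The associated four-valued model refutes $A$ at $x$. The main obstacle is organising the induction properly. The measure must be chosen so that $\neg\neg A$, $\neg\non A$ and $\neg(A\rightarrow B)$ appeal to the IH only on strictly smaller formulas. I would take the number of connectives other than $\neg$, plus the number of $\neg$'s. Claims (a) and (b) must be proved together, since they refer to each other through $\neg\neg$ and $\neg\non$.
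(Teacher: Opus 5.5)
Your proposal is correct and is essentially the paper's proof: the paper also derives the theorem from two lemmas (Lemmas~\ref{Sem4to2L1} and~\ref{Sem4to2L2}) that transform models in each direction over the same set of worlds, with $p'$ recording $F\in\var(p,x)$. It likewise proves by simultaneous induction that $T\in\var(A,x)$ iff $\var'(\f(A),x)=T$ and $T\in\var(\neg A,x)$ iff $\var'(\f(\neg A),x)=T$, which is exactly your (a)/(b). Your observation that the two atom-level constructions are mutually inverse, so that one lemma serves both directions, is just a tidier packaging of the paper's ``similarly''. Your proposed measure is simply the total number of connectives, so ordinary structural induction on $A$, with (a) and (b) proved jointly, already suffices.
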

\begin{proof}
This theorem follows from the following Lemmas \ref{Sem4to2L1} and \ref{Sem4to2L2}.
\end{proof}

\begin{lemma}\label{Sem4to2L1}
For any $ {\bf FDE^{\bf S5}_{\B\D}} $-model $ \mathfrak{M}_1=\langle W,\var\rangle $ we can construct an $ {\bf S5} $-model $ \mathfrak{M}_2=\langle W,\var^\prime\rangle $ such that for each $\Lang_{\B\D}^\neg$-formula $ \A $ and any $ x\in W$, it is true that  
\begin{enumerate}[$\blacktriangleright$]\itemsep=0pt
\item 
$ \one\in\var(A,x) $ in $ \mathfrak{M}_1 $ iff $ \var^\prime(\f(A),x)=\one $ in $ \mathfrak{M}_2 $,
\item 
$ \one\in\var(\neg A,x) $ in $ \mathfrak{M}_1 $ iff $ \var^\prime(\f(\neg A),x)=\one $ in $ \mathfrak{M}_2 $.
\end{enumerate}
\end{lemma}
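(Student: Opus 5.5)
We define $\var^\prime$ on propositional variables from $\var$ and then prove both clauses together by induction on $\A$. For every $p\in\Prop$ and $x\in W$, set $\var^\prime(p,x)=\one$ iff $\one\in\var(p,x)$, and $\var^\prime(p^\prime,x)=\one$ iff $\zero\in\var(p,x)$; the latter is equivalent to $\one\in\var(\neg p,x)$ by Definition \ref{Four-valuedS5}. We then extend $\var^\prime$ to all $\Lang_{\B\D}$-formulas by the classical clauses and the \textbf{S5} clauses of Definition \ref{Kripkeval}. The world set $W$ is kept unchanged, so $\mathfrak{M}_2$ is an \textbf{S5}-model.

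The induction is on the complexity of $\A$, and at each stage we prove the two clauses (for $\A$ and for $\neg\A$) simultaneously. The base case is immediate: $\f(p)=p$ and $\f(\neg p)=p^\prime$, so both clauses hold by the choice of $\var^\prime$. For the unnegated clause with main connective $\non,\wedge,\vee,\rightarrow,\B,\D$, the four-valued truth condition for $\one$ is exactly the classical or \textbf{S5} condition applied to the $\one$-memberships of the immediate subformulas. Since $\f$ commutes with these connectives, the induction hypothesis closes each case. For example,
\[
\one\in\var(\B\Be,x)\iff\forall_{y}\one\in\var(\Be,y)\iff\forall_{y}\var^\prime(\f(\Be),y)=\one\iff\var^\prime(\B\f(\Be),x)=\one .
\]

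For the negated clause we split on the form of $\A$. We always first convert $\one\in\var(\neg\A,x)$ into $\zero\in\var(\A,x)$ and then use the falsity conditions. Take $\A=\Be\wedge\Ce$: then $\zero\in\var(\A,x)$ iff $\one\in\var(\neg\Be,x)$ or $\one\in\var(\neg\Ce,x)$, which by the induction hypothesis matches $\f(\neg\Be)\vee\f(\neg\Ce)$. The cases for $\vee$ and $\rightarrow$ go the same way, and the $\rightarrow$ case also uses the unnegated clause for $\Be$. Take $\A=\B\Be$: $\zero\in\var(\B\Be,x)$ iff $\exists_y\one\in\var(\neg\Be,y)$, which matches $\D\f(\neg\Be)$; dually, $\neg\D\Be$ matches $\B\f(\neg\Be)$. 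Take $\A=\non\Be$: $\zero\in\var(\non\Be,x)$ iff $\zero\notin\var(\Be,x)$ iff $\one\notin\var(\neg\Be,x)$, which matches $\non\f(\neg\Be)$. Take $\A=\neg\Be$: $\one\in\var(\neg\neg\Be,x)$ iff $\one\in\var(\Be,x)$, which matches $\f(\Be)$.

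The main point requiring care is the choice of induction measure. In the case $\neg\neg\Be$ we apply the unnegated clause to $\Be$, and in the case $\neg\non\Be$ we apply the negated clause to $\Be$. Both calls are to proper subformulas of the current formula $\A$, so the simultaneous induction on the complexity of $\A$ (not of $\neg\A$) is well-founded. We must also check that $\f$ is well-defined on every $\neg$-prefixed formula. Once that is confirmed, every remaining case is routine and follows the patterns above.
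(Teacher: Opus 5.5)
Your proposal is correct and takes essentially the same route as the paper: the same valuation on $p$ and $p^\prime$, then an induction on $A$ that proves both clauses together, with you writing out the cases (and the well-foundedness of the simultaneous induction) that the paper leaves as ``treated similarly''. The one case you leave implicit, the unnegated clause for $A=\neg B$, is literally the negated clause for $B$, so it follows at once from the induction hypothesis.
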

\begin{proof}
Assume that $ \mathfrak{M}_1=\langle W,\var\rangle $ is $ {\bf FDE^{\bf S5}_{\B\D}} $-model and $ \mathfrak{M}_2=\langle W,\var^\prime\rangle $ is $ {\bf S5} $-model, where $ \var^\prime $ is a valuation such that for all $ p \in \Prop$ and $ x\in W$:	
\begin{enumerate}[$\blacktriangleright$]\itemsep=0pt
\item $ \one\in\var(p,x) $ in $ \mathfrak{M}_1 $ iff $ \var^\prime(p,x)=\one $ in $ \mathfrak{M}_2 $,
\item $ \one\in\var(\neg p,x) $ in $ \mathfrak{M}_1 $ iff $ \var^\prime(p^\prime,x)=\one $ in $ \mathfrak{M}_2 $.
	\end{enumerate}
	
The proof is by induction on the construction of the $\Lang_{\B\D}^\neg$-formula $ \A $.
	
	\begin{enumerate}[$ (1) $]\itemsep=0pt
\item $ \A $ is $ p $ or $ \neg p $. Follows from the assumption and Definition \ref{f}.
		
\item $ \A $ is $ \Be\rightarrow\Ce $. By Definition \ref{KripkeNelsonval}, $ \one\in\var(\Be\rightarrow\Ce,x) $ in $ \mathfrak{M}_1$ iff $ \one\in\var(\Be,x) $ in $ \mathfrak{M}_1$ implies $ \one\in\var(\Ce,x) $ in $ \mathfrak{M}_1$ iff (by the induction hypothesis) $ \var^\prime(\f(\Be),x)=\one $ in $ \mathfrak{M}_2$ implies $ \var^\prime(\f(\Ce),x)=\one $ in $ \mathfrak{M}_2$ iff $ \var^\prime(\f(\Be)\rightarrow \f(\Ce),x)=\one $ in $ \mathfrak{M}_2$ iff (Definition \ref{f}) $ \var^\prime(\f(\Be\rightarrow\Ce),x)=\one $ in $ \mathfrak{M}_2$. 

\item $ \A $ is $ \neg(\Be\rightarrow\Ce) $. 
 By Definition \ref{KripkeNelsonval}, $ \one\in\var(\neg(\Be\rightarrow\Ce),x) $ in $ \mathfrak{M}_1$ iff $ \zero\in\var(\Be\rightarrow\Ce,x) $ in $ \mathfrak{M}_1$ iff $ \one\in\var(\Be,x) $ in $ \mathfrak{M}_1$ and $ \zero\in\var(C,x) $ in $ \mathfrak{M}_1$ iff $ \one\in\var(\Be,x) $ in $ \mathfrak{M}_1$ and $ \one\in\var(\neg C,x) $ in $ \mathfrak{M}_1$ iff (by the induction hypothesis) $ \var^\prime(\f(\Be),x)=\one $ in $ \mathfrak{M}_2$ and $ \var^\prime(\f(\neg C),x)=\one $ in $ \mathfrak{M}_2$ iff $ \var^\prime(\f(\Be)\wedge \f(\neg C),x)=\one $ in $ \mathfrak{M}_2$ iff (Definition \ref{f})  $ \var^\prime(\f(\neg(\Be\rightarrow C)),x)=\one $ in $ \mathfrak{M}_2$.

\item $ A $ is $ \neg\Box B$. By Definition \ref{KripkeNelsonval}, $ \one\in\var(\neg\Box B,x) $ in $ \mathfrak{M}_1$ iff $ \zero\in\var(\Box\Be,x) $ in $ \mathfrak{M}_1$ iff  
$\exists_{y\in W} \zero\in\var(\Be,y) $ in $ \mathfrak{M}_1$ iff  
$\exists_{y\in W} \one\in\var(\neg\Be,y) $ in $ \mathfrak{M}_1$ 
iff (by the induction hypothesis) $\exists_{y\in W}\var^\prime(\f(\neg\Be),y)=\one$ in $ \mathfrak{M}_2$ iff (by Definition \ref{Kripkeval}) $\var^\prime(\D \f(\neg\Be),x)=\one$ in $ \mathfrak{M}_2$ iff (by Definition \ref{f}) $\var^\prime(\f(\neg\Box\Be),x)=\one$ in~$ \mathfrak{M}_2$.
	\end{enumerate}	
	The other cases are treated similarly.
\end{proof} 
\begin{lemma}\label{Sem4to2L2}
For any $ {\bf S5} $-model $ \mathfrak{M}_1=\langle W,\var\rangle $ we can construct an $ {\bf FDE^{\bf S5}_{\B\D}} $-model $ \mathfrak{M}_2=\langle W,\var^\prime\rangle $ such that for each $\Lang_{\B\D}^\neg$-formula $ \A $ and any $ x\in W$, it is true that 
\begin{enumerate}[$\blacktriangleright$]\itemsep=0pt
\item 
$ \var(\f(A),x)=\one $ in $ \mathfrak{M}_1 $ iff $ \one\in\var^\prime(A,x) $ in $ \mathfrak{M}_2 $,
\item 
$ \var(\f(\neg A),x)=\one $ in $ \mathfrak{M}_1 $ iff $ \one\in\var^\prime(\neg A,x) $ in $ \mathfrak{M}_2 $.
\end{enumerate}
\end{lemma}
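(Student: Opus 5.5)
We mirror the proof of Lemma \ref{Sem4to2L1}, reversing the direction of the model construction. Given an \textbf{S5}-model $\mathfrak{M}_1=\langle W,\var\rangle$, we define a four-valued valuation $\var^\prime$ on the same set $W$. At the atomic level we fix, for every $p\in\Prop$ and $x\in W$,
\[
\one\in\var^\prime(p,x) \text{ iff } \var(p,x)=\one, \qquad \zero\in\var^\prime(p,x) \text{ iff } \var(p^\prime,x)=\one .
\]
Since $p$ and $p^\prime$ are independent variables in $\Lang_{\B\D}$, this determines one of the four values $t,b,n,f$ at each world, so every combination is possible. We then extend $\var^\prime$ to all $\Lang_{\B\D}^\neg$-formulas by the clauses of Definition \ref{KripkeNelsonval}. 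This yields an $\bf FDE^{\bf S5}_{\B\D}$-model $\mathfrak{M}_2=\langle W,\var^\prime\rangle$. Since $\zero\in\var^\prime(p,x)$ iff $\one\in\var^\prime(\neg p,x)$, the two bullet claims hold for atoms directly from Definition \ref{f} ($\f(p)=p$, $\f(\neg p)=p^\prime$).

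We then prove both bullets simultaneously by induction on the construction of $A$, treating $A$ and $\neg A$ together as in Lemma \ref{Sem4to2L1}. For the second bullet we split by the main connective under the negation: $\neg p$, $\neg\neg B$, $\neg\non B$, $\neg(B\wedge C)$, $\neg(B\vee C)$, $\neg(B\rightarrow C)$, $\neg\B B$, $\neg\D B$. In each case the pattern is the same:
\begin{itemize}
\item unfold $\one\in\var^\prime(\neg A,x)$ into $\zero\in\var^\prime(A,x)$;
\item apply the falsity clause of Definition \ref{KripkeNelsonval} for the main connective;
\item rewrite each $\zero\in\var^\prime(B,y)$ as $\one\in\var^\prime(\neg B,y)$;
\item apply the induction hypothesis for $\neg B$ (or for $B$, where the clause mentions truth, as in $\neg(B\rightarrow C)$ and $\neg\neg B$);
\item close with the \textbf{S5} clauses of Definition \ref{Kripkeval} and Definition \ref{f}.
\end{itemize}
For example, $\one\in\var^\prime(\neg\D B,x)$ iff $\forall_{y}\zero\in\var^\prime(B,y)$ iff $\forall_y\one\in\var^\prime(\neg B,y)$ iff $\forall_y\var(\f(\neg B),y)=\one$ iff $\var(\B\f(\neg B),x)=\one$ iff $\var(\f(\neg\D B),x)=\one$. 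The unnegated cases ($\non$, $\wedge$, $\vee$, $\rightarrow$, $\B$, $\D$) go through just as case (2) of Lemma \ref{Sem4to2L1}.

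The main obstacle is making the induction well-founded. The clauses $\f(\neg\neg B)=\f(B)$ and $\f(\neg\non B)=\non\f(\neg B)$ require the hypothesis for $B$ and for $\neg B$, and $\neg B$ is not a proper subformula of $\neg\non B$ in the naive sense. We therefore run the induction on the complexity of $B$, where $A=B$ or $A=\neg B$, proving both bullets for $B$ at once. Then the hypotheses used for $\neg\neg B$, $\neg\non B$, and the others are always about strictly simpler formulas.

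The case $\neg\non B$ needs the most care: $\one\in\var^\prime(\neg\non B,x)$ iff $\zero\in\var^\prime(\non B,x)$ iff $\zero\notin\var^\prime(B,x)$ iff $\one\notin\var^\prime(\neg B,x)$ iff $\var(\f(\neg B),x)\neq\one$ iff $\var(\non\f(\neg B),x)=\one$. With these cases checked, the remaining ones are routine.
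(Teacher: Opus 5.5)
Your proposal is correct and takes the same route as the paper, whose proof is just ``similarly to Lemma \ref{Sem4to2L1}'': fix the atomic correspondence ($\one\in\var^\prime(p,x)$ iff $\var(p,x)=\one$, and $\zero\in\var^\prime(p,x)$ iff $\var(p^\prime,x)=\one$), then induct on the construction of the formula. Your simultaneous induction on $B$ for both $B$ and $\neg B$, with $\var(\f(\cdot),y)$ quantified over all worlds, correctly handles the well-foundedness issue for the $\neg\neg B$ and $\neg\non B$ clauses, which the paper leaves implicit.
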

\begin{proof}
	Similarly to Lemma \ref{Sem4to2L1}.
\end{proof}

\begin{theorem}[Soundness and completeness of $ {\bf FDE^{\bf S5}_{\Box\D}} $]\label{FDES5SoundCompl}
For every $\Lang_{\Box\Diamond}^\neg$-formula $ A $, it holds that ${\bf FDE^{\bf S5}_{\Box\D}}\models A$ iff ${\bf FDE^{\bf S5}_{\Box\D}}\vdash A$.
\end{theorem}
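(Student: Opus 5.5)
The plan is to chain the two embedding theorems already established with the known soundness and completeness of Restall's calculus for \textbf{S5}. Fix an $\Lang_{\Box\Diamond}^\neg$-formula $A$ and regard it as the one-component hypersequent ${}\Rightarrow A$. Then ${\bf FDE^{\bf S5}_{\Box\D}}\vdash A$ holds iff ${\bf FDE^{\bf S5}_{\Box\D}}\vdash\;\Rightarrow A$. By Theorem~\ref{Synt4to2}(1), this holds iff ${\bf S5}\vdash\;\Rightarrow\f(A)$, since $\f$ acts componentwise on hypersequents.

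Next I will invoke the soundness and completeness of Restall's hypersequent calculus for \textbf{S5} with respect to \textbf{S5}-models. Restall proves this in \cite{Restall}; the extra rules for $\vee$, $\rightarrow$, $\D$ and the generalised (EW) are routinely sound and derivable, cf.\ \cite{MMLS5}. This gives that ${\bf S5}\vdash\;\Rightarrow\f(A)$ iff ${\bf S5}\models\f(A)$. Finally, Theorem~\ref{Sem4to2} gives ${\bf S5}\models\f(A)$ iff ${\bf FDE^{\bf S5}_{\Box\D}}\models A$. Composing the three biconditionals yields the statement.

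The one point needing care, and the step I expect to be the main obstacle, is the middle step. Two things must be checked there. First, the notion of validity for \textbf{S5} used in Definition~\ref{Kripkeval} (universal models, truth at every world) must match the one for which Restall's completeness theorem is stated. This holds because Restall's semantics for \textbf{S5} is exactly the universal-accessibility semantics. Second, the language $\Lang_{\Box\Diamond}$ contains the additional atoms $\Prop^\prime$. These are simply further propositional variables, so they cause no problem: the axiom (Ax) applies to them, and \textbf{S5}-models assign them arbitrary values.

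I would also remark that Lemmas~\ref{Sem4to2L1} and \ref{Sem4to2L2} are stated pointwise. They therefore lift to consequence between multisets, and in the same way the argument extends to sequents $\Gamma\Rightarrow\Delta$ by treating $\bigwedge\Gamma\rightarrow\bigvee\Delta$. For formulas, however, the three-step chain above suffices.
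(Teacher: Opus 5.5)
Your proposal is correct and follows the paper's route. The paper's proof simply says the result follows from Theorems~\ref{Synt4to2} and~\ref{Sem4to2}, tacitly relying on the soundness and completeness of Restall's calculus for \textbf{S5}; you compose the same two embeddings and make that middle step explicit.
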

\begin{proof}
Follows from Theorems \ref{Synt4to2} and \ref{Sem4to2}.
\end{proof}

\begin{definition}\label{g} The mapping $ \g $ from the language $\Lang_{\circ\bullet}^\neg$ into the language $\Lang_{\Box\Diamond}^\neg$ is inductively defined as follows, where $ I\in\{T,F\} $ and $ i\in\{t,b,n,f \} $\textup{:}
{\small \begin{enumerate}[$ - $]\itemsep=0pt
\item $\g(p)= p$, for each $p\in\mathcal{P}$,
\item $\g(\diamond A)=\diamond\g(A) $, where $ \diamond\in\{\neg,\non \} $,
\item $\g(A\odot B)=\g(A)\odot\g(B) $, where $ \odot\in\{\wedge,\vee,\rightarrow \} $,
\vspace{-.15in}\begin{multicols}{2}
\item $\g(\circ_I A)=\g(A^I)\rightarrow\Box\g(A^I) $,
\item $\g(\FE_I A)=\non\g(A^I)\rightarrow\Box \non\g(A^I) $,
\item $\g(\bullet_I A)=\g(A^I)\wedge\D\non\g(A^I) $,
\item $\g(\FA_I A)=\non\g(A^I)\wedge\D\g(A^I) $,
\item $\g({\circ_i} A)=\g(A^i)\rightarrow\Box\g(A^i)$,
\item $\g({\FE_i} A)=\non\g(A^i)\rightarrow\Box\non\g(A^i)$,
\item $\g({\bullet_i} A)=\g(A^i)\wedge\D\non\g(A^i)$,
\item $\g({\FA_i} A)=\non\g(A^i)\wedge\D\g(A^i)$,
\end{multicols}\vspace{-.15in}
\end{enumerate}}
\end{definition} \vspace{-.15in}

\begin{lemma}[Weak Syntactical Embedding from $ \bf FDE^{\bf S5}_{\circ\bullet}$ to $ \bf FDE^{\bf S5}_{\B\D}$]\label{WSynt4to4}
For each hypersequent $ \mathscr{H} $, it holds that 
\begin{enumerate}[$ (1) $]\itemsep=0pt
\item $ {\bf FDE^{\bf S5}_{\circ\bullet}}\vdash\mathscr{H}$ implies $ {\bf FDE^{\bf S5}_{\Box\D}}\vdash\g(\mathscr{H})$,
\item $ {\bf FDE^{\bf S5}_{\Box\D}}\setminus \mathrm{(Cut)}\vdash\g(\mathscr{H})$ implies $ {\bf FDE^{\bf S5}_{\circ\bullet}}\setminus \mathrm{(Cut)}\vdash\mathscr{H}$.
\end{enumerate}
\end{lemma}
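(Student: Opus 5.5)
Both parts go by induction on the height of the given proof, following the pattern of Lemma \ref{WSynt4to2}; the cases are sorted by the last inference.

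For (1), fix a proof $\mathfrak{P}_1$ of $\mathscr{H}$ in the calculus for $\mathbf{FDE}^{\bf S5}_{\circ\bullet}$. The axioms, structural rules, Cut, and the rules for $\non,\wedge,\vee,\rightarrow$ and their $\neg$-negated forms are immediate, because $\g$ commutes with all these connectives; each such rule is simply replaced by the same rule of $\mathbf{FDE}^{\bf S5}_{\Box\D}$. Note that $\g(\neg p)=\neg p$, so (Ax$_\neg$) is mapped to (Ax$_\neg$).

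The substantive cases are the modal rules. For each rule $(\star_J\Rightarrow)$, $(\Rightarrow\star_J)$ and their negated versions, I will write down a derivation in $\mathbf{FDE}^{\bf S5}_{\Box\D}$ from the $\g$-translated premisses to the $\g$-translated conclusion. Take $(\circ_T\Rightarrow)$ as an example. From $\g(A),\g(\Gamma)\Rightarrow\g(\Delta)\mid\g(H)$, apply $(\Box\Rightarrow)$ to get $\Box\g(A)\Rightarrow\;\mid\g(\Gamma)\Rightarrow\g(\Delta)\mid\g(H)$. Combining this with $\g(\Theta)\Rightarrow\g(\Lambda),\g(A)\mid\g(G)$ by $(\rightarrow\Rightarrow)$ gives $\g(A)\rightarrow\Box\g(A),\g(\Theta)\Rightarrow\g(\Lambda)\mid\ldots$, and the left formula is exactly $\g(\circ_TA)$.

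The negated modal rules use the $\neg$-rules. Since $\g(\neg\circ_TA)=\neg(\g(A)\rightarrow\Box\g(A))$, the rule $(\neg\rightarrow\Rightarrow)$ followed by $(\neg\Box\Rightarrow)$ reproduces $(\neg\circ_T\Rightarrow)$; here the premiss must first be split using (Merge)/(EW) so that $\neg\g(A)$ sits in its own component.

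For the $F$- and value-indexed operators, $\g(A^I)$ and $\g(A^i)$ are compound formulas such as $\g(A)\wedge\non\neg\g(A)$. I will therefore first prove, as auxiliary derived rules, the cut-free decomposition of $\g(A^i)$ and $\non\g(A^i)$ on each side. Then I check that the rule premisses, for instance $A,S_1,\neg A$ for $(\circ_t\Rightarrow)$, are exactly what these decompositions produce. Multi-premiss rules such as $(\circ_t\Rightarrow)$ arise from $(\Rightarrow\wedge)$ or $(\vee\Rightarrow)$ applied to the pieces, with (EW) and (IC) used to align contexts.

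For (2), fix a cut-free proof $\mathfrak{P}_2$ of $\g(\mathscr{H})$ in $\mathbf{FDE}^{\bf S5}_{\Box\D}\setminus(\mathrm{Cut})$ and argue by induction that $\mathscr{H}$ is cut-free derivable in $\mathbf{FDE}^{\bf S5}_{\circ\bullet}$. The obstacle is that $\g$ is not injective on the nose and that in a cut-free proof a translated formula like $\g(A)\rightarrow\Box\g(A)$ need not be decomposed in one block: its subformula $\Box\g(A)$ could be introduced, weakened, or merged at distant places.

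I will handle this by strengthening the induction hypothesis. The statement becomes: for every hypersequent $\mathscr{H}'$ whose formulas are $\g$-images or \emph{intermediate stages} of $\g$-images of $\circ\bullet$-formulas (such as $\Box\g(A)$, $\D\non\g(A^i)$, $\non\g(A)\wedge\neg\g(A)$), the corresponding ``reconstructed'' hypersequent is derivable. To make this work I introduce for each stage a stand-in $\circ\bullet$-formula, or else pass to a normal form in which the premisses of each modal rule of $\mathbf{FDE}^{\bf S5}_{\circ\bullet}$ are matched.

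Concretely, I will try a permutation argument first. Using invertibility of the propositional rules (all except Merge are invertible) together with admissibility of weakening and contraction, I restructure $\mathfrak{P}_2$ so that each translated modal formula is introduced by a contiguous block of rules. Each block is then replaced by the single corresponding modal rule of $\mathbf{FDE}^{\bf S5}_{\circ\bullet}$. I expect this step to be the main difficulty, especially for the multi-premiss $\circ_i/\bullet_i$ rules.

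If the direct syntactic route proves too laborious, I have a fallback, since Theorem \ref{FDES5SoundCompl} is available. First, show semantically that $\g$ preserves the truth conditions of Definition \ref{KripkeBelpanEA}, using Lemma \ref{X}. Second, prove cut-free completeness of the $\circ\bullet$ rules by a Restall-style saturation argument. Together these yield (2).
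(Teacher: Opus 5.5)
\textbf{Part (1)} is the paper's own argument: a rule-by-rule simulation in $\mathbf{FDE}^{\mathbf{S5}}_{\Box\D}$. One detail: in the three-premise cases, the duplicated side hypersequent $\g(G)\mid\g(G)$ produced by the multiplicative $(\Rightarrow\wedge)$ is removed by the admissible (EC), not by (EW).

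\textbf{Part (2)} has a genuine gap. The paper uses a plain last-inference induction. It reads a block such as $(\Box\Rightarrow)$ immediately followed by $(\rightarrow\Rightarrow)$ on $\g(A)\rightarrow\Box\g(A)$ back as one application of $(\circ_T\Rightarrow)$ (its Subcase 1.2). You correctly see that this tacitly assumes two things: that premises are $\g$-images, and that translated modal formulas are decomposed contiguously. But you never remove the assumption. Your strengthened induction hypothesis is not formulated, and your permutation step is only announced. Its stated basis, invertibility of the propositional rules, is not where the difficulty lies; the problem is the modal rules together with contraction.

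First, $(\Box\Rightarrow)$ is not invertible: $\Box p\Rightarrow\;\mid\;\Rightarrow q\mid\;\Rightarrow p$ is derivable, but $p\Rightarrow q\mid\;\Rightarrow p$ is not. So these inferences must be permuted, not inverted. Second, the claim that each block can be replaced by ``the single corresponding modal rule'' is false. The hypersequent $p\rightarrow\Box p,p\Rightarrow\;\mid\;q\Rightarrow p\mid\;r\Rightarrow p$ has a cut-free proof ending in $(\rightarrow\Rightarrow)$. Its right premise $\Box p\Rightarrow\;\mid\;q\Rightarrow p\mid\;r\Rightarrow p$ is obtained by (Merge) and (IC) from two copies of $\Box p$, each unpacked by $(\Box\Rightarrow)$ into a different component. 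Reading this back needs two applications of $(\circ_T\Rightarrow)$, each with its own copy of the premise $p\Rightarrow p$, followed by contractions.

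What is missing is the following.
\begin{enumerate}
\item[(i)] A splitting lemma. Every modal rule of $\mathbf{FDE}^{\mathbf{S5}}_{\Box\D}$ introduces its principal formula in a singleton component. Hence a cut-free proof of $\Box B,\Gamma\Rightarrow\Delta\mid H$ can be converted into one of $\Box B\Rightarrow\;\mid\;\Gamma\Rightarrow\Delta\mid H$ by undoing Merges, and likewise for every modal formula on either side.
\item[(ii)] A lemma permuting $(\Box\Rightarrow)$-type inferences down to the inference that introduces the translated formula. The read-back then uses one $\circ\bullet$-rule application per copy, followed by (IC)/(EC).
\item[(iii)] Height-preserving inversion for the Boolean steps inside $\g(A^i)$.
\item[(iv)] A choice of read-back rule that respects the non-injectivity of $\g$; for example, $\g(\circ_bA)=\g(\circ_T(A\wedge\neg A))$, yet the two rules have different premises.
\end{enumerate}

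Your fallback is correct in outline but does not prove the lemma in the intended sense. It needs soundness of $\mathbf{FDE}^{\mathbf{S5}}_{\Box\D}$ and the model correspondence for hypersequents, not just formulas. These must use the genuine S5 reading of hypersequents: under the paper's ``some component is valid'' definition, the derivable $\Box p\Rightarrow\;\mid\;\Rightarrow p$ would be invalid. Above all, it needs a cut-free completeness theorem for $\mathbf{FDE}^{\mathbf{S5}}_{\circ\bullet}$, which you do not sketch. That theorem would by itself give cut admissibility and completeness, which is everything the embedding is meant to deliver via Theorem~\ref{Synt4to4}. So it replaces the syntactic embedding rather than establishing it.
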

\begin{proof}
(1) By induction on the proof $ \mathfrak{P}_1 $ of $ \mathscr{H} $ in the hypersequent calculus for $ \bf FDE^{\bf S5}_{\circ\bullet}$. We distinguish the cases according to the
last inference of $ \mathfrak{P}_1 $. 

\begin{enumerate}
\item The case $ (\circ_T\Rightarrow) $. The last inference of $ \mathfrak{P}_1 $ has the following form:

\begin{center}
\EnableBpAbbreviations
\AXC{$ A,\Gamma\Rightarrow \Delta\mid H$}
\AXC{$\Theta\Rightarrow \Lambda, A\mid G $}
\RL{$ (\circ_T \Rightarrow) $}
\BIC{$ \circ_TA,\Theta\Rightarrow \Lambda\mid \Gamma\Rightarrow \Delta\mid H\mid G$}
\DisplayProof
\end{center}

Using the induction hypothesis, we obtain the following proof in the hypersequent calculus for $ \bf FDE^{\bf S5}_{\Box\D}$, where Df $ \g $ stands for Definition \ref{g}:
\begin{center}
\EnableBpAbbreviations
\AXC{$\g(\Theta)\Rightarrow\g(\Lambda),\g(A)\mid\g(G) $}
\AXC{$\g(A),\g(\Gamma)\Rightarrow\g(\Delta)\mid\g(H) $}
\RL{$ (\Box \Rightarrow) $}
\UIC{$\Box\g(A)\Rightarrow\; \mid\g(\Gamma)\Rightarrow\g(\Delta)\mid\g(H) $}
\RL{$ (\rightarrow \Rightarrow) $}
\BIC{$\g(A)\rightarrow \Box\g(A),\g(\Theta)\Rightarrow\g(\Lambda)\mid\g(\Gamma)\Rightarrow\g(\Delta)\mid\g(H)\mid\g(G)$}
\RL{Df $ \g $ %$ \circ_T $
}
\UIC{$\g(\circ_TA),\g(\Theta)\Rightarrow\g(\Lambda)\mid\g(\Gamma)\Rightarrow\g(\Delta)\mid\g(H)\mid\g(G)$}
\DisplayProof
\end{center}

\item The case $ (\FE_T\Rightarrow) $. The last inference of $ \mathfrak{P}_1 $ has the following form:

\begin{center}
\EnableBpAbbreviations
\AXC{$A,\Gamma\Rightarrow \Delta\mid H$}
\AXC{$\Theta\Rightarrow \Lambda,A\mid G $}
\RL{$ (\FE_T \Rightarrow) $}
\BIC{$ \FE_TA,\Gamma\Rightarrow \Delta\mid \Theta\Rightarrow \Lambda\mid H\mid G$}
\DisplayProof
\end{center}

Using the induction hypothesis, we obtain the following proof in $ \bf FDE^{\bf S5}_{\Box\D}$:
\begin{center}
\EnableBpAbbreviations
\AXC{$\g(A),\g(\Gamma)\Rightarrow\g(\Delta)\mid\g(H) $}
\RL{$ ( \Rightarrow\non) $}
\UIC{$\g(\Gamma)\Rightarrow\g(\Delta),\non\g(A)\mid\g(H) $}
\AXC{$\g(\Theta)\Rightarrow\g(\Lambda),\g(A)\mid\g(G) $}
\RL{$ (\non \Rightarrow) $}
\UIC{$\non\g(A),\g(\Theta)\Rightarrow\g(\Lambda)\mid\g(G) $}
\RL{$ (\Box \Rightarrow) $}
\UIC{$\Box\non\g(A)\Rightarrow\; \mid\g(\Theta)\Rightarrow\g(\Lambda)\mid\g(H) $}
\RL{$ (\rightarrow \Rightarrow) $}
\BIC{$ \non\g(A)\rightarrow \Box\non\g(A),\g(\Gamma)\Rightarrow\g(\Delta)\mid\g(\Theta)\Rightarrow\g(\Lambda)\mid\g(H)\mid\g(G)$}
\RL{Df $ \g $}
\UIC{$\g(\FE_TA),\g(\Gamma)\Rightarrow\g(\Delta)\mid\g(\Theta)\Rightarrow\g(\Lambda)\mid\g(H)\mid\g(G)$}
\DisplayProof
\end{center}

\item 	The case $ (\Rightarrow\FE_T) $. The last inference of $ \mathfrak{P}_1 $ has the following form:

\begin{center}
 $ \dfrac{\G\Rightarrow\Delta, A\mid A\Rightarrow \mid H}{\G\Rightarrow \Delta,\FE_TA\mid H} $  $ (\Rightarrow\FE_T) $
\end{center}

Using the induction hypothesis, we obtain the following proof in $ \bf FDE^{\bf S5}_{\Box\D}$:
\begin{center}
\EnableBpAbbreviations
\AXC{$\g(\G)\Rightarrow\g(\Delta),\g(A)\mid\g(A)\Rightarrow\;\mid\g(H) $}
\RL{$ ( \Rightarrow\non) $}
\UIC{$\g(\G)\Rightarrow\g(\Delta),\g(A)\mid\; \Rightarrow\non\g(A)\mid\g(H) $}
\RL{$ (\non \Rightarrow) $}
\UIC{$ \non\g(A),\g(\G)\Rightarrow\g(\Delta)\mid\;\Rightarrow\non\g(A)\mid\g(H) $}
\RL{$ ( \Rightarrow\Box) $}
\UIC{$ \non\g(A),\g(\G)\Rightarrow\g(\Delta)\mid\;\Rightarrow\Box\non\g(A)\mid\g(H) $}
\RL{(Merge)}
\UIC{$\non\g(A),\g(\G)\Rightarrow\g(\Delta),\Box\non\g(A)\mid\g(H) $}
\RL{$ (\Rightarrow\rightarrow) $}
\UIC{$\g(\G)\Rightarrow\g(\Delta),\non\g(A)\rightarrow\Box\non\g(A)\mid\g(H)$}
\RL{Df $ \g $ %$ \FE_T $
}
\UIC{$\g(\G)\Rightarrow\g(\Delta),\g(\FE_TA)\mid\g(H)$}
\DisplayProof
\end{center}

\item 	The case $ (\Rightarrow\neg\FE_F) $. The last inference of $ \mathfrak{P}_1 $ has the following form:

\begin{center}
 $ \dfrac{\neg A,\G\Rightarrow\Delta\mid H\qquad A,\Theta\Rightarrow\Lambda \mid G}{\G\Rightarrow \Delta,\neg{\FE}_FA\mid\Theta\Rightarrow\Lambda\mid H\mid G} $  $ (\Rightarrow\neg\FE_F) $
\end{center}

Using the induction hypothesis, we obtain the following proof in $ \bf FDE^{\bf S5}_{\Box\D}$:
{\small \begin{center}
\EnableBpAbbreviations
\AXC{$\g(\neg A),\g(\G)\Rightarrow\g(\Delta)\mid \g(H) $}
\RL{$ ( {\Rightarrow}\non) $}
\UIC{$\g(\G)\Rightarrow\g(\Delta), \non\g(\neg A)\mid\g(H) $}
\AXC{$\g(A),\g(\Theta)\Rightarrow\g(\Lambda)\mid\g(G) $}
\RL{$ (\neg\neg{\Rightarrow}) $}
\UIC{$\neg\neg\g(A),\g(\Theta)\Rightarrow\g(\Lambda)\mid\g(G) $}
\RL{$ ( {\Rightarrow}\neg\non) $}
\UIC{$\g(\Theta)\Rightarrow\g(\Lambda),\neg\non\neg\g(A)\mid\g(G) $}
\RL{$ ( {\Rightarrow}\neg\Box) $}
\UIC{$\g(\Theta)\Rightarrow\g(\Lambda)\mid\;\Rightarrow\neg\Box\non\neg\g(A)\mid\g(G)$}
\RL{Df $ \g $}
\UIC{$\g(\Theta)\Rightarrow\g(\Lambda)\mid\;\Rightarrow\neg\Box\non\g(\neg A)\mid\g(G)$}
\RL{$ ( {\Rightarrow}\neg{\rightarrow}) $}
\BIC{$\g(\G)\Rightarrow\g(\Delta),\neg(\non\g(\neg A)\rightarrow\Box\non\g(\neg A)) \mid\g(\Theta)\Rightarrow\g(\Lambda)\mid\g(H)\mid\g(G) $}
\RL{Df $ \g $}
\UIC{$\g(\G)\Rightarrow\g(\Delta),\g(\neg\FE_FA)\mid\g(\Theta)\Rightarrow\g(\Lambda)\mid\g(H)\mid\g(G)$}
\DisplayProof
\end{center}}

\item 	The case $ (\neg\circ_T\Rightarrow) $. The last inference of $ \mathfrak{P}_1 $ has the following form:

\begin{center}
 $ \dfrac{\neg A\Rightarrow\;\mid A,\G\Rightarrow\Delta\mid H}{\neg{\circ_T}A,\G\Rightarrow \Delta\mid H} $  $ (\neg\circ_T\Rightarrow) $
\end{center}

Using the induction hypothesis, we obtain the following proof in $ \bf FDE^{\bf S5}_{\Box\D}$:
\begin{center}
\EnableBpAbbreviations
\AXC{$\g(\neg A)\Rightarrow\;\mid\g(A),\g(\G)\Rightarrow\g(\Delta)\mid \g(H) $}
\RL{Df $\g$}
\UIC{$\neg\g(A)\Rightarrow\;\mid\g(A),\g(\G)\Rightarrow\g(\Delta)\mid \g(H) $}
\RL{$ (\neg\Box\Rightarrow) $}
\UIC{$\neg\Box\g(A)\Rightarrow\;\mid\g(A),\g(\G)\Rightarrow\g(\Delta)\mid \g(H)$}
\RL{(Merge)}
\UIC{$\g(A),\neg\Box\g(A),\g(\G)\Rightarrow\g(\Delta)\mid \g(H)$}
\RL{$ (\neg{\rightarrow}\Rightarrow) $}
\UIC{$\neg(\g(A)\rightarrow\Box\g(A)),\g(\G)\Rightarrow\g(\Delta)\mid \g(H)$}
\RL{Df $ \g $}
\UIC{$\g(\neg{\circ_T}A),\g(\G)\Rightarrow\g(\Delta)\mid \g(H)$}
\DisplayProof
\end{center}

\item The case $ (\circ_t\Rightarrow) $. The last inference of $ \mathfrak{P}_1 $ has the following form:

\begin{center}
 $ \dfrac{A,\Gamma\Rightarrow\Delta,\neg A\mid H\quad\Theta\Rightarrow\Lambda,A\mid G\quad\neg A,\Theta\Rightarrow\Lambda\mid G}{{\circ_t}A,\Theta\Rightarrow\Lambda\mid\Gamma\Rightarrow\Delta\mid H\mid G} $  $ (\circ_t\Rightarrow) $
\end{center}

Using the induction hypothesis, we obtain the following proof in $ \bf FDE^{\bf S5}_{\Box\D}$ (due to space limits we assume that $ H=G=\emptyset $ and omit names of the rules, double lines indicate several applications of the rules):
{\small \begin{center}
\EnableBpAbbreviations
\AXC{$\g(\Theta)\Rightarrow\g(\Lambda),\g(A) $}
\AXC{$\g(\neg A),\g(\Theta)\Rightarrow\g(\Lambda) $}
\UIC{$\g(\Theta)\Rightarrow\g(\Lambda),\non\g(\neg A) $}
\BIC{$\g(\Theta),\g(\Theta)\Rightarrow\g(\Lambda),\g(\Lambda),\g(A)\wedge\non\g(\neg A) $}
\doubleLine
\UIC{$\g(\Theta)\Rightarrow\g(\Lambda),\g(A)\wedge\non\g(\neg A)$}

\AXC{$\g(A),\g(\Gamma)\Rightarrow\g(\Delta),\g(\neg A) $}
\UIC{$\g(A),\non\g(\neg A),\g(\Gamma)\Rightarrow\g(\Delta) $}
\UIC{$\g(A)\wedge\non\g(\neg A),\g(\Gamma)\Rightarrow\g(\Delta) $}
\UIC{$\Box(\g(A)\wedge\non\g(\neg A))\Rightarrow \mid\g(\Gamma)\Rightarrow\g(\Delta) $}
\BIC{$(\g(A)\wedge\non\g(\neg A))\rightarrow\Box(\g(A)\wedge\non\g(\neg A)),\g(\Theta)\Rightarrow\g(\Lambda)\mid\g(\Gamma)\Rightarrow\g(\Delta)$}
\RL{Df $ \g $}
\UIC{$\g(\circ_tA),\g(\Theta)\Rightarrow\g(\Lambda)\mid\g(\Gamma)\Rightarrow\g(\Delta)$}
\DisplayProof
\end{center}}

\item The case $ (\Rightarrow\circ_t) $. The last inference of $ \mathfrak{P}_1 $ has the following form:

\begin{center}
 $ \dfrac{\Rightarrow A\mid A,\Gamma\Rightarrow\Delta,\neg A\mid H\quad\neg A\Rightarrow\;\mid A,\Gamma\Rightarrow\Delta,\neg A\mid G}{\Gamma\Rightarrow\Delta,{\circ_t}A\mid H\mid G} $  $ (\Rightarrow\circ_t) $
\end{center}

Using the induction hypothesis, we obtain the following proof in $ \bf FDE^{\bf S5}_{\Box\D}$ (due to space limits we assume that $ H=G=\emptyset $ and omit some names of the rules):
{\small \begin{center}
\EnableBpAbbreviations
\AXC{$\Rightarrow\g(A)\mid\g(A),\g(\Gamma)\Rightarrow\g(\Delta),\g(\neg A) $}
%\RL{$ (\non{\Rightarrow}) $}
\UIC{$\Rightarrow\g(A)\mid\g(A),\non\g(\neg A),\g(\Gamma)\Rightarrow\g(\Delta)$}
%\RL{$ (\wedge{\Rightarrow}) $}
\UIC{$\Rightarrow\g(A)\mid\g(A)\wedge\non\g(\neg A),\g(\Gamma)\Rightarrow\g(\Delta)$}

\AXC{$\g(\neg A)\Rightarrow\;\mid\g(A),\g(\Gamma)\Rightarrow\g(\Delta),\g(\neg A)$}
%\LL{$ (\non{\Rightarrow}) $}
\UIC{$\g(\neg A)\Rightarrow\;\mid\g(A),\non\g(\neg A),\g(\Gamma)\Rightarrow\g(\Delta)$}
%\LL{$ (\wedge{\Rightarrow}) $}
\UIC{$\g(\neg A)\Rightarrow\;\mid\g(A)\wedge\non\g(\neg A),\g(\Gamma)\Rightarrow\g(\Delta)$}
%\LL{$ ({\Rightarrow}\non) $}
\UIC{$\Rightarrow\non\g(\neg A)\mid\g(A)\wedge\non\g(\neg A),\g(\Gamma)\Rightarrow\g(\Delta)$}

%\LL{$ ({\Rightarrow}\wedge) $}
\BIC{$\Rightarrow\g(A)\wedge\non\g(\neg A)\mid\g(A)\wedge\non\g(\neg A),\g(\Gamma)\Rightarrow\g(\Delta)$}
%\LL{$ ({\Rightarrow}\wedge)\qquad\qquad $}
\RL{$ (\Box \Rightarrow) $}
\UIC{$\Rightarrow \Box(\g(A)\wedge\non\g(\neg A))\mid\g(A)\wedge\non\g(\neg A),\g(\Gamma)\Rightarrow\g(\Delta)$}
\RL{(Merge)}
\UIC{$\g(A)\wedge\non\g(\neg A),\g(\Gamma)\Rightarrow\g(\Delta),\Box(\g(A)\wedge\non\g(\neg A))$}
\RL{$ (\Rightarrow\rightarrow) $}
\UIC{$\g(\Gamma)\Rightarrow\g(\Delta),(\g(A)\wedge\non\g(\neg A))\rightarrow\Box(\g(A)\wedge\non\g(\neg A))$}
\RL{Df $ \g $}
\UIC{$\g(\Gamma)\Rightarrow\g(\Delta),\g(\circ_tA)$}
\DisplayProof
\end{center}}

\item The case $ (\Rightarrow\neg\circ_b) $. The last inference of $ \mathfrak{P}_1 $ has the following form:

\begin{center}
 $ \dfrac{\Gamma\Rightarrow\Delta,A,\neg A\mid H\quad\Theta\Rightarrow\Lambda,A\mid G\quad \Theta\Rightarrow\Lambda,\neg A\mid G}{\Theta\Rightarrow\Lambda,\neg{\circ_b}A\mid \Gamma\Rightarrow\Delta\mid H\mid G} $  $ (\Rightarrow\neg\circ_b) $
\end{center}

Using the induction hypothesis, we obtain the following proof in $ \bf FDE^{\bf S5}_{\Box\D}$ (due to space limits we assume that $ H=G=\emptyset $ and omit some names of the rules):
{\small \begin{center}
\EnableBpAbbreviations
\AXC{$ \g(\Theta)\Rightarrow\g(\Lambda),\g(A) $}
\AXC{$ \g(\Theta)\Rightarrow\g(\Lambda),\g(\neg A) $}
\RL{$ (\Rightarrow\wedge) $}
\BIC{$ \g(\Theta)\Rightarrow\g(\Lambda),\g(A^b) $}
\AXC{$ \g(\Gamma)\Rightarrow\g(\Delta),\neg \g(A), \g(A) $}
\UIC{$ \g(\Gamma)\Rightarrow\g(\Delta),\neg \g(A),\neg\neg \g(A) $}
%\LL{$ (\Rightarrow\neg{\wedge}) $}
\UIC{$ \g(\Gamma)\Rightarrow\g(\Delta),\neg\g(A^b) $}
\UIC{$\Rightarrow\neg\Box\g(A^b)\mid\g(\Gamma)\Rightarrow\g(\Delta)$}
\LL{$ (\Rightarrow\neg{\rightarrow}) $}
\BIC{$\g(\Theta)\Rightarrow\g(\Lambda),\neg (\g(A^b)\rightarrow \Box \g(A^b))\mid\g(\Gamma)\Rightarrow\g(\Delta)$}
\RL{Df $ \g $}
\UIC{$\g(\Theta)\Rightarrow\g(\Lambda),\g(\neg\circ_bA)\mid\g(\Gamma)\Rightarrow\g(\Delta)$}
\DisplayProof
\end{center}}

\item The case $ (\neg\circ_b\Rightarrow) $. The last inference of $ \mathfrak{P}_1 $ has the following form:

\begin{center}
 $ \dfrac{A\Rightarrow \;\mid A,\neg A,\Gamma\Rightarrow\Delta\mid H\quad\neg A\Rightarrow\;\mid A,\neg A,\Gamma\Rightarrow\Delta\mid G}{\neg{\circ_b}A,\Gamma\Rightarrow\Delta\mid H\mid G} $  $ (\neg\circ_b\Rightarrow) $
\end{center}

Using the induction hypothesis, we obtain the following proof in $ \bf FDE^{\bf S5}_{\Box\D}$ (due to space limits we assume that $ H=G=\emptyset $ and omit some names of the rules):
{\small \begin{center}
\EnableBpAbbreviations
\AXC{$\g(A)\Rightarrow\mid\g(A),\g(\neg A),\g(\Gamma)\Rightarrow\g(\Delta) $}
%\RL{$ (\non{\Rightarrow}) $}
\UIC{$\neg\neg\g(A)\Rightarrow\mid\g(A),\g(\neg A),\g(\Gamma)\Rightarrow\g(\Delta)$}
%\RL{$ (\wedge{\Rightarrow}) $}

\AXC{$\g(\neg A)\Rightarrow\mid\g(A),\g(\neg A),\g(\Gamma)\Rightarrow\g(\Delta)$}
%\LL{$ (\non{\Rightarrow}) $}
\UIC{$\neg\g(A)\Rightarrow\mid\g(A),\g(\neg A),\g(\Gamma)\Rightarrow\g(\Delta)$}
%\LL{$ (\wedge{\Rightarrow}) $}
\BIC{$\neg\g(A^b)\Rightarrow\mid\g(A),\g(\neg A),\g(\Gamma)\Rightarrow\g(\Delta)$}
\RL{$ (\neg\Box \Rightarrow) $}
\UIC{$\neg\Box\g(A^b)\Rightarrow\mid\g(A),\g(\neg A),\g(\Gamma)\Rightarrow\g(\Delta)$}
\RL{$ (\wedge\Rightarrow) $}
\UIC{$\neg\Box\g(A^b)\Rightarrow\mid\g(A^b),\g(\Gamma)\Rightarrow\g(\Delta)$}
\RL{(Merge)}
\UIC{$\g(A^b),\neg\Box\g(A^b),\g(\Gamma)\Rightarrow\g(\Delta)$}
\RL{$ (\Rightarrow\neg{\rightarrow}) $}
\UIC{$\neg(\g(A^b)\rightarrow\Box\g(A^b)),\g(\Gamma)\Rightarrow\g(\Delta)$}
\RL{Df $ \g $}
\UIC{$\g(\neg{\circ_b}A),\g(\Gamma)\Rightarrow\g(\Delta)$}
\DisplayProof
\end{center}}
\end{enumerate}

The other cases are treated similarly.

(2) By induction on the proof $ \mathfrak{P}_2 $ of $\g(\mathscr{H} )$ in ${\bf FDE^{\bf S5}_{\Box\D}}\setminus \mathrm{(Cut)}$. We distinguish the cases according to the last inference of $ \mathfrak{P}_2 $.

1. The case $ (\rightarrow\Rightarrow) $. The last inference of $ \mathfrak{P}_2 $ is an application of $(\rightarrow\Rightarrow) $.

Subcase 1.1.

\begin{center}
\EnableBpAbbreviations
\AXC{$\g(\Gamma)\Rightarrow\g(\Delta),\g(A)\mid\g(H) $}
\AXC{$\g(B),\g(\Theta)\Rightarrow\g(\Lambda)\mid\g(G) $}
\RL{$ (\rightarrow \Rightarrow) $}
\BIC{$\g(A)\rightarrow\g(B),\g(\Gamma),\g(\Theta)\Rightarrow\g(\Delta),\g(\Lambda)\mid\g(H)\mid\g(G)$}
\DisplayProof
\end{center}

Using the induction hypothesis, we obtain the following proof in $ {\bf FDE^{\bf S5}_{\circ\bullet}} $:
\begin{center}
\EnableBpAbbreviations
\AXC{$ \Gamma\Rightarrow \Delta,A\mid H $}
\AXC{$B,\Theta\Rightarrow \Lambda\mid G $}
\RL{$ (\rightarrow \Rightarrow) $}
\BIC{$ A\rightarrow B,\Gamma,\Theta\Rightarrow \Delta,\Lambda\mid H\mid G$}
\DisplayProof
\end{center}

Subcase 1.2. Recall that $\g(\circ_TA)=\g(A)\rightarrow \Box\g(A) $.

\begin{center}
\EnableBpAbbreviations
\AXC{$\g(A),\g(\Gamma)\Rightarrow\g(\Delta)\mid\g(H) $}
\RL{$ (\Box \Rightarrow) $}
\UIC{$\Box\g(A)\Rightarrow\; \mid\g(\Gamma)\Rightarrow\g(\Delta)\mid\g(H) $}
\AXC{$\g(\Theta)\Rightarrow\g(\Lambda),\g(A)\mid\g(G) $}
\RL{$ (\rightarrow \Rightarrow) $}
\BIC{$\g(A)\rightarrow \Box\g(A),\g(\Theta)\Rightarrow\g(\Lambda)\mid\g(\Gamma)\Rightarrow\g(\Delta)\mid\g(H)\mid\g(G)$}
\DisplayProof
\end{center}

Using the induction hypothesis, we obtain the following proof in $ {\bf FDE^{\bf S5}_{\circ\bullet}} $:
\begin{center}
\EnableBpAbbreviations
\AXC{$ A,\Gamma\Rightarrow \Delta\mid H$}
\AXC{$\Theta\Rightarrow \Lambda, A\mid G $}
\RL{$ (\circ_T \Rightarrow) $}
\BIC{$ \circ_TA,\Theta\Rightarrow \Lambda\mid \Gamma\Rightarrow \Delta\mid H\mid G$}
\DisplayProof
\end{center}

The other cases are treated similarly.
\end{proof}

\begin{theorem}[Syntactical embedding from $\bf FDE^{\bf S5}_{\circ\bullet}$ into $\bf FDE^{\bf S5}_{\Box\D}$]\label{Synt4to4}
For each  hypersequent $ \mathscr{H} $, it holds that 
\begin{enumerate}[$ (1) $]\itemsep=0pt
    \item ${\bf FDE^{\bf S5}_{\circ\bullet}}\vdash \mathscr{H}$ iff ${\bf  FDE^{\bf S5}_{\Box\D}}\vdash\g(\mathscr{H})$,
    \item ${\bf FDE^{\bf S5}_{\circ\bullet}}\setminus \mathrm{(Cut)}\vdash \mathscr{H}$ iff ${\bf  FDE^{\bf S5}_{\Box\D}}\setminus \mathrm{(Cut)}\vdash\g(\mathscr{H})$.
\end{enumerate}
\end{theorem}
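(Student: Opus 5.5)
The proof follows the pattern of Theorem \ref{Synt4to2}, with Lemma \ref{WSynt4to4} playing the role of Lemma \ref{WSynt4to2} and Theorem \ref{CutAdm1} playing the role of cut admissibility for Restall's calculus.

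For (1), the left-to-right direction is exactly Lemma \ref{WSynt4to4}(1). For the converse, suppose ${\bf FDE^{\bf S5}_{\Box\D}}\vdash\g(\mathscr{H})$. By Theorem \ref{CutAdm1}, (Cut) is admissible in ${\bf FDE^{\bf S5}_{\Box\D}}\setminus\mathrm{(Cut)}$, so ${\bf FDE^{\bf S5}_{\Box\D}}\setminus\mathrm{(Cut)}\vdash\g(\mathscr{H})$. Lemma \ref{WSynt4to4}(2) then gives ${\bf FDE^{\bf S5}_{\circ\bullet}}\setminus\mathrm{(Cut)}\vdash\mathscr{H}$. Since every cut-free proof is in particular a proof, ${\bf FDE^{\bf S5}_{\circ\bullet}}\vdash\mathscr{H}$.

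For (2), the right-to-left direction is literally Lemma \ref{WSynt4to4}(2). For left to right, suppose ${\bf FDE^{\bf S5}_{\circ\bullet}}\setminus\mathrm{(Cut)}\vdash\mathscr{H}$. I will not route this through (1), because that would yield a proof of $\g(\mathscr{H})$ possibly containing cuts, and we would then need Theorem \ref{CutAdm1} again. Instead I will check that the induction of Lemma \ref{WSynt4to4}(1) is cut-preserving. Each rule case of that induction, for the truth-functional rules, the negated rules, the structural rules and all essence/accident rules, is simulated in ${\bf FDE^{\bf S5}_{\Box\D}}$ using only logical rules, (Merge), contractions and (EW). None of these simulations introduces (Cut), so restricting the induction to cut-free proofs yields ${\bf FDE^{\bf S5}_{\Box\D}}\setminus\mathrm{(Cut)}\vdash\g(\mathscr{H})$. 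Alternatively, cut-freeness of $\g(\mathscr{H})$ follows at once from Theorem \ref{CutAdm1} applied to the result of (1).

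The one point needing care is the bookkeeping in Lemma \ref{WSynt4to4}(2). The back-translation must work because $\g$ is injective on the relevant shapes. We need to be able to recognise, for instance, $\g(A^b)\rightarrow\Box\g(A^b)$ as $\g(\circ_bA)$ rather than as the translation of an implication whose consequent is a $\Box$-formula. Formulas of $\Lang_{\circ\bullet}^\neg$ contain no $\Box$, so any $\Box$ or $\D$ in $\g(\mathscr{H})$ comes from an essence/accident clause. This disambiguates the subcases, as in Subcases 1.1 and 1.2. I take that lemma as given, so the theorem itself is a short combination of the two directions of Lemma \ref{WSynt4to4} with Theorem \ref{CutAdm1}.
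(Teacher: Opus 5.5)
Your proof is correct and follows the paper's own argument. For (1) the paper chains Lemma~\ref{WSynt4to4}(1), Theorem~\ref{CutAdm1} and Lemma~\ref{WSynt4to4}(2) exactly as you do. For (2) the paper only says ``similar''; your alternative route through Theorem~\ref{CutAdm1} gives that, and your direct observation that the induction for Lemma~\ref{WSynt4to4}(1) uses (Cut) only in the (Cut) case works equally well.

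One small correction to your side remark: $\g$ is not injective, e.g.\ $\g(\circ_FA)=\g(\circ_T\neg A)$ and $\g(\circ_bA)=\g(\circ_T(A\wedge\neg A))$. Nothing depends on injectivity, though, because in Lemma~\ref{WSynt4to4}(2) the hypersequent $\mathscr{H}$ is fixed, so the case distinction is read off $\mathscr{H}$ itself.
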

\begin{proof}
(1) By Lemma \ref{WSynt4to4}, ${\bf FDE^{\bf S5}_{\circ\bullet}}\vdash \mathscr{H}$ implies $ {\bf  FDE^{\bf S5}_{\Box\D}}\vdash \g(\mathscr{H})$. For another direction suppose that $ {\bf  FDE^{\bf S5}_{\Box\D}}\vdash \g(\mathscr{H})$. By Theorem \ref{CutAdm1}, $ {\bf  FDE^{\bf S5}_{\Box\D}}\setminus(\mathrm{Cut})\vdash \g(\mathscr{H})$. By Lemma \ref{WSynt4to4}, $ {\bf  FDE^{\bf S5}_{\circ\bullet}}\setminus(\mathrm{Cut})\vdash \mathscr{H}$. Hence, $ {\bf  FDE^{\bf S5}_{\circ\bullet}}\vdash \mathscr{H}$.
				
(2) The proof is similar.
\end{proof}
\begin{theorem}[Cut admissibility]
The rule (Cut) is admissible in $\bf FDE^{\bf S5}_{\circ\bullet}\setminus \mathrm{(Cut)}$.
\end{theorem}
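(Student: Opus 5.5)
The plan is to derive cut admissibility for $\bf FDE^{\bf S5}_{\circ\bullet}$ by transferring it along the embedding $\g$, exactly as Theorem~\ref{CutAdm1} was obtained from Theorem~\ref{Synt4to2}. Suppose the premises of an application of (Cut) are provable in $\bf FDE^{\bf S5}_{\circ\bullet}\setminus\mathrm{(Cut)}$:
\[
\Gamma\Rightarrow\Delta,A\mid H \qquad\text{and}\qquad A,\Theta\Rightarrow\Lambda\mid G .
\]
We must show that $\Gamma,\Theta\Rightarrow\Delta,\Lambda\mid H\mid G$ is provable without cut as well.

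First, since every cut-free proof is in particular a proof in the full calculus, both premises are provable in $\bf FDE^{\bf S5}_{\circ\bullet}$. Applying (Cut) inside the full calculus gives ${\bf FDE^{\bf S5}_{\circ\bullet}}\vdash \Gamma,\Theta\Rightarrow\Delta,\Lambda\mid H\mid G$. By Theorem~\ref{Synt4to4}(1), ${\bf FDE^{\bf S5}_{\Box\D}}\vdash\g(\Gamma,\Theta\Rightarrow\Delta,\Lambda\mid H\mid G)$.

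Second, by Theorem~\ref{CutAdm1}, cut is admissible in $\bf FDE^{\bf S5}_{\Box\D}$, so this translated hypersequent is provable in ${\bf FDE^{\bf S5}_{\Box\D}}\setminus\mathrm{(Cut)}$. Finally, Theorem~\ref{Synt4to4}(2), or directly Lemma~\ref{WSynt4to4}(2), transports this back: $\Gamma,\Theta\Rightarrow\Delta,\Lambda\mid H\mid G$ is provable in ${\bf FDE^{\bf S5}_{\circ\bullet}}\setminus\mathrm{(Cut)}$. This is precisely admissibility of (Cut).

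There is essentially no obstacle at this level: all the real work lies in Lemma~\ref{WSynt4to4}. Part (2) of that lemma is the delicate point, because a cut-free $\bf FDE^{\bf S5}_{\Box\D}$-derivation of a $\g$-image may decompose $\g(\circ_t A)$ and similar translations in an order that does not match a single $\bf FDE^{\bf S5}_{\circ\bullet}$ rule. Since that lemma is stated earlier, I will simply invoke it. I will also note that $\g$ acts componentwise on hypersequents, so it commutes with the shape of the conclusion of (Cut).
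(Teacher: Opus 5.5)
Your proposal is correct and matches the paper's proof, which likewise obtains the result directly from Theorem~\ref{Synt4to4} and Theorem~\ref{CutAdm1}. As you describe, you translate the derivation with cut into $\mathbf{FDE}^{\mathbf{S5}}_{\Box\D}$ via $\g$, eliminate cut there, and pull the cut-free derivation back with Lemma~\ref{WSynt4to4}(2), which is where the real burden lies and which the paper also takes as given.
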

\begin{proof}
Follows from Theorems \ref{Synt4to4} and \ref{CutAdm1}.
\end{proof}
\begin{theorem}[Semantic embedding from $\bf FDE^{\bf S5}_{\circ\bullet}$ into $\bf FDE^{\bf S5}_{\Box\D}$]\label{Sem4to4} 
For every $\Lang_{\circ\bullet}^\neg$-formula $ A $, it holds that ${\bf FDE^{\bf S5}_{\circ\bullet}}\models A$ iff ${\bf FDE^{\bf S5}_{\Box\D}}\models\g(A)$.
\end{theorem}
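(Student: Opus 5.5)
The plan follows the pattern of Theorem~\ref{Sem4to2}: the statement is reduced to two model-transfer lemmas, analogous to Lemmas~\ref{Sem4to2L1} and~\ref{Sem4to2L2}. For the first lemma, take an $\bf FDE^{\bf S5}_{\circ\bullet}$-model $\mathfrak{M}_1=\langle W,\var\rangle$. Build an $\bf FDE^{\bf S5}_{\Box\D}$-model $\mathfrak{M}_2=\langle W,\var'\rangle$ on the same set of worlds by setting $\var'(p,x)=\var(p,x)$ for every $p\in\Prop$, and extending $\var'$ by the clauses of Definition~\ref{Four-valuedS5}. The claim to prove is then the strong statement that for every $\Lang_{\circ\bullet}^\neg$-formula $A$ and every $x\in W$,
\begin{center}
$\one\in\var(A,x)$ iff $\one\in\var'(\g(A),x)$, \qquad $\zero\in\var(A,x)$ iff $\zero\in\var'(\g(A),x)$.
\end{center}
Equivalently, $\var(A,x)=\var'(\g(A),x)$. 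Keeping both components, rather than only the $\one$-component as in Lemma~\ref{Sem4to2L1}, is forced here: $\g$ commutes with $\neg$, and the abbreviations $A^i$ contain $\neg A$.

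The proof of the claim is by induction on $A$. The cases of $p$, $\neg$, $\non$, $\wedge$, $\vee$ and $\rightarrow$ are immediate, since $\g$ commutes with these connectives and both semantics use the same clauses. For the modal cases, first note that $\g(A^I)=\g(A)^I$ and $\g(A^i)=\g(A)^i$, because $\g$ commutes with $\neg,\non,\wedge$. The induction hypothesis gives $\var(A,y)=\var'(\g(A),y)$ for all $y$. Lemma~\ref{X} then yields
\begin{center}
$\one\in\var'(\g(A)^i,y)$ iff $\var(A,y)=i$, \qquad $\zero\in\var'(\g(A)^i,y)$ iff $\var(A,y)\neq\ast i$,
\end{center}
and the analogous equivalences for $I$.

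Next, compute the $\one$- and $\zero$-conditions of each translated formula in $\mathfrak{M}_2$ using Definition~\ref{Four-valuedS5}. For example, take $\g(\circ_iA)=B\rightarrow\Box B$ with $B=\g(A)^i$:
\begin{center}
$\one\in\var'(B\rightarrow\Box B,x)$ iff ($\var(A,x)=i$ implies $\forall y\,\var(A,y)=i$),
\end{center}
\begin{center}
$\zero\in\var'(B\rightarrow\Box B,x)$ iff $\var(A,x)=i$ and $\exists y\,\zero\in\var'(B,y)$, i.e.\ $\exists y\,\var(A,y)\neq\ast i$.
\end{center}
These coincide exactly with the clauses of Definition~\ref{KripkeBelpanEA}. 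The remaining operators are checked the same way:
\begin{enumerate}[$-$]
\item for $\bullet_i$, use $\zero\in\var'(\D\non B,x)$ iff $\forall y\,\one\in\var'(B,y)$;
\item for $\FE_i$ and $\FA_i$, use the $\non$ clauses;
\item for the eight $I$-operators, use $\one\in\var'(\g(A)^I,y)$ iff $I\in\var(A,y)$ and $\zero\in\var'(\g(A)^I,y)$ iff $\non I\in\var(A,y)$.
\end{enumerate}
The last equivalence is read off the $\neg$ clause: for $I=T$ the $\zero$-component of $\g(A)$ is $\zero$, and for $I=F$, since $A^F=\neg A$, the $\zero$-component of $\neg\g(A)$ is $\one$.

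The converse lemma is handled identically. Given an $\bf FDE^{\bf S5}_{\Box\D}$-model, define an $\bf FDE^{\bf S5}_{\circ\bullet}$-model on the same $W$ with the same atomic values, and run the same induction. Both model classes are fully determined by their atomic values, so the constructions are legitimate.

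Theorem~\ref{Sem4to4} then follows at once. If $\bf FDE^{\bf S5}_{\Box\D}\not\models\g(A)$, some model and world make $\one\notin\var'(\g(A),x)$; the converse lemma gives a countermodel to $A$. Symmetrically, a countermodel to $A$ transfers via the first lemma to a countermodel to $\g(A)$.

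The main obstacle is the bookkeeping of the falsity conditions for the 24 operators, especially the role of the conflation $\ast i$ in the $\zero$-clauses. This is exactly where I rely on the $F$-parts of Lemma~\ref{X}. It is also the place where a mismatch in Definition~\ref{KripkeBelpanEA} would surface, so each $\zero$-case should be verified explicitly rather than by analogy.
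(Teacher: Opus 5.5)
Your proposal is correct and follows essentially the paper's route. The paper likewise reduces the theorem to two model-transfer lemmas (Lemmas \ref{Sem4to4L1} and \ref{Sem4to4L2}) on the same set of worlds with the same atomic values, and proves them by induction on both the $\one$-component and, via $\g(\neg A)=\neg\g(A)$, the $\zero$-component; your one variation is slightly cleaner: you apply the induction hypothesis to $A$ itself and then obtain the values of $\g(A^i)=\g(A)^i$ from Lemma \ref{X}, whereas the paper invokes the hypothesis for $A^i$, which is not a subformula of $\circ_i A$.
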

\begin{proof}
This theorem follows from the following Lemmas \ref{Sem4to4L1} and \ref{Sem4to4L2}.
\end{proof}

\begin{lemma}\label{Sem4to4L1}
For any $ {\bf FDE^{\bf S5}_{\circ\bullet}} $-model $ \mathfrak{M}_1=\langle W,\var\rangle $ we can construct an $ {\bf FDE^{\bf S5}_{\B\D}} $-model $ \mathfrak{M}_2=\langle W,\var^\prime\rangle $ such that for each $\Lang_{\circ\bullet}^\neg$-formula $ \A $ and any $ x\in W$, it is true that 
\begin{enumerate}[$\blacktriangleright$]\itemsep=0pt
\item 
$ \one\in\var(A,x) $ in $ \mathfrak{M}_1 $ iff $ \one\in\var^\prime(\g(A),x) $ in $ \mathfrak{M}_2 $,
\item 
$ \one\in\var(\neg A,x) $ in $ \mathfrak{M}_1 $ iff $ \one\in\var^\prime(\g(\neg A),x) $ in $ \mathfrak{M}_2 $.
\end{enumerate}
\end{lemma}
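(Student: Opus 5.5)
We construct $\mathfrak{M}_2=\langle W,\var^\prime\rangle$ on the same set of worlds, taking $\var^\prime$ to agree with $\var$ on propositional variables: $\var^\prime(p,x)=\var(p,x)$ for all $p\in\Prop$ and $x\in W$. The clauses of Definition \ref{KripkeNelsonval} then determine $\var^\prime$ uniquely on all $\Lang_{\B\D}^\neg$-formulas. Since $\g$ is the identity on variables, this makes the two claims hold at the base, both for $p$ and for $\neg p$.

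We then prove both claims \emph{simultaneously} by induction on the construction of $A$. The second claim, for $\neg A$, is needed because the falsity conditions of compound formulas refer to falsity of their components. The clauses for $\neg\neg$, $\non$, $\neg\non$, $\wedge,\vee,\rightarrow$ and their $\neg$-negated forms are routine: $\g$ commutes with these connectives, and both models obey the same truth and falsity clauses. For example, $\one\in\var(\neg(B\rightarrow C),x)$ iff $\one\in\var(B,x)$ and $\one\in\var(\neg C,x)$, iff (by the induction hypothesis) the same holds for $\g(B)$ and $\g(\neg C)=\neg\g(C)$ in $\mathfrak{M}_2$, iff $\one\in\var^\prime(\neg(\g(B)\rightarrow\g(C)),x)$. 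As a preliminary step, we note that the abbreviations of Definition \ref{i} commute with $\g$: $\g(A^I)=\g(A)^I$ and $\g(A^i)=\g(A)^i$. Hence Lemma \ref{X}, applied in both models, gives that $\one\in\var^\prime(\g(A^i),y)$ iff $\var^\prime(\g(A),y)=i$. Similarly, $\zero\in\var^\prime(\g(A^i),y)$ iff $\var^\prime(\g(A),y)\not=\ast i$.

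To use these facts, we strengthen the induction claim to full value agreement: $\var(A,x)=\var^\prime(\g(A),x)$ for all $x\in W$. This is exactly the conjunction of the two stated claims, since $\zero\in\var(A,x)$ iff $\one\in\var(\neg A,x)$ and likewise in $\mathfrak{M}_2$. With this, the modal cases go as follows. For $\circ_iA$: $\one\in\var(\circ_iA,x)$ iff ($\var(A,x)=i$ implies $\forall y\,\var(A,y)=i$). By the induction hypothesis this is equivalent to ($\one\in\var^\prime(\g(A^i),x)$ implies $\forall y\,\one\in\var^\prime(\g(A^i),y)$), which is $\one\in\var^\prime(\g(A^i)\rightarrow\Box\g(A^i),x)$. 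For the negated form $\neg\circ_iA$, the left side is $\var(A,x)=i$ and $\exists y\,\var(A,y)\not=\ast i$. The right side is $\one\in\var^\prime(\neg(\g(A^i)\rightarrow\Box\g(A^i)),x)$, i.e.\ $\one\in\var^\prime(\g(A^i),x)$ and $\exists y\,\zero\in\var^\prime(\g(A^i),y)$. These match by the $\zero$-clause of Lemma \ref{X}. The cases $\bullet_i$, $\FE_i$, $\FA_i$ and the $I$-indexed operators are analogous. For $\FE$ and $\FA$ one also uses that $\zero\in\var^\prime(\non B,y)$ iff $\zero\notin\var^\prime(B,y)$, and for $\bullet$ and $\FA$ the clauses for $\D$ and $\neg\D$.

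The main obstacle is bookkeeping in the falsity clauses of the modal operators. Definition \ref{KripkeBelpanEA} phrases them through the conflation $\ast i$ (respectively $\non I$), while the translation yields conditions on $\zero\in\var^\prime(\g(A^i),y)$. We must check, for each of the sixteen negated modal forms, that the $\ast i$ occurrences correspond precisely to Lemma \ref{X}. For the $I$-cases, note that $\zero\in\var(A^T,y)$ iff $\one\in\var(\neg A,y)$, i.e.\ $\zero\in\var(A,y)$, i.e.\ $\non T\in\var(A,y)$. For $A^F=\neg A$, $\zero\in\var(\neg A,y)$ iff $T\in\var(A,y)$, i.e.\ $\non F\in\var(A,y)$. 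So the matching works, but it must be verified case by case.
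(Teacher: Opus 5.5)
Your proof is correct and follows essentially the same route as the paper. It uses the same model (agreement with $\mathfrak{M}_1$ on atoms), a simultaneous induction on $A$, and Lemma \ref{X} to match the $\ast i$ and $\non I$ falsity clauses against the clauses for $\rightarrow$, $\wedge$, $\Box$, $\D$ in $\mathfrak{M}_2$. The one refinement is in how you use the induction hypothesis: the paper applies it to formulas such as $B^i$, which is not a subformula of $\circ_i B$ and is in fact longer than it, whereas your strengthening to full value agreement, together with $\g(A^i)=\g(A)^i$ and Lemma \ref{X} applied in $\mathfrak{M}_2$, invokes the hypothesis only on the immediate subformula, so your induction on construction goes through without any further adjustment.
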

\begin{proof}
Assume that $ \mathfrak{M}_1=\langle W,\var\rangle $ is $ {\bf FDE^{\bf S5}_{\circ\bullet}} $-model and $ \mathfrak{M}_2=\langle W,\var^\prime\rangle $ is $ {\bf FDE^{\bf S5}_{\B\D}} $-model, where $ \var^\prime $ is a valuation such that for all $ p \in \Prop$ and $ x\in W$:	
\begin{enumerate}[$\blacktriangleright$]\itemsep=0pt
\item $ \one\in\var(p,x) $ in $ \mathfrak{M}_1 $ iff $ \one\in\var^\prime(p,x) $ in $ \mathfrak{M}_2 $,
\item $ \one\in\var(\neg p,x) $ in $ \mathfrak{M}_1 $ iff $ \one\in\var^\prime(\neg p,x) $ in $ \mathfrak{M}_2 $.
	\end{enumerate}
	
The proof is by induction on the construction of the $\Lang_{\circ\bullet}^\neg$-formula $ \A $.
	
	\begin{enumerate}[$ (1) $]\itemsep=0pt
\item  $ \A $ is $ p $ or $ \neg p $. Follows from the assumption and Definition \ref{g}.	

\item $ \A $ is $ B\wedge C $.  By Definition \ref{KripkeBelpanEA}, $ \one\in\var(B\wedge C,x) $ in $ \mathfrak{M}_1 $ iff $\one\in\var(B,x)$ in $ \mathfrak{M}_1 $ and $\one\in\var(C,x)$ in $ \mathfrak{M}_1 $ iff (ind.\;hyp.) $\one\in\var^\prime(\g(B),x)$ in $ \mathfrak{M}_2 $ and $\one\in\var^\prime(\g(C),x)$ in $ \mathfrak{M}_2 $ iff (Def. \ref{Four-valuedS5}) $\one\in\var^\prime(\g(B)\wedge\g(C),x)$ in $ \mathfrak{M}_2 $ iff (Def. \ref{g}) $\one\in\var^\prime(\g(B\wedge C),x)$ in $ \mathfrak{M}_2 $.

\item $ \A $ is $ \neg(B\wedge C) $.  $ \one\in\var(\neg(B\wedge C),x) $ in $ \mathfrak{M}_1 $ iff $ \zero\in\var(B\wedge C,x) $ in $ \mathfrak{M}_1 $ iff (Def. \ref{KripkeBelpanEA}) $\zero\in\var(B,x)$ in $ \mathfrak{M}_1 $ or $\zero\in\var(C,x)$ in $ \mathfrak{M}_1 $ iff (ind.\;hyp.) $\zero\in\var^\prime(\g(B),x)$ in $ \mathfrak{M}_2 $ or $\zero\in\var^\prime(\g(C),x)$ in $ \mathfrak{M}_2 $ iff (Def. \ref{Four-valuedS5}) $\zero\in\var^\prime(\g(B)\wedge\g(C),x)$ in $ \mathfrak{M}_2 $ iff (Def. \ref{g}) $\zero\in\var^\prime(\g(B\wedge C),x)$ in $ \mathfrak{M}_2 $ iff $\one\in\var^\prime(\g(\neg(B\wedge C)),x)$ in $ \mathfrak{M}_2 $.
		
\item $ \A $ is $ \circ_IB $, where $ I\in\{T,F \} $. By Def. \ref{KripkeBelpanEA}, $ \one\in\var({\circ_I} B,x) $ in $ \mathfrak{M}_1 $ iff $I\in\var(B,x)$ in $ \mathfrak{M}_1 $ implies $ \forall_{y\in W}I\in\var(B,y)$ in $ \mathfrak{M}_1 $ 
iff (Lemma \ref{X}) $T\in\var(B^I,x)$ in $ \mathfrak{M}_1 $ implies $ \forall_{y\in W}T\in\var(B^I,y)$ in $ \mathfrak{M}_1 $
iff (by the ind.\;hyp.) $\one\in\var^\prime(\g(B^I),x)$ in $ \mathfrak{M}_2 $ implies $ \forall_{y\in W}\one\in\var^\prime(\g(B^I),y)$ in $ \mathfrak{M}_2 $
iff (by Def. \ref{Four-valuedS5}) $\one\in\var^\prime(\g(B^I),x)$ in $ \mathfrak{M}_2 $ implies $ \one\in\var^\prime(\Box \g(B^I),x)$ in $ \mathfrak{M}_2 $ iff (by Def. \ref{Four-valuedS5}) $\one\in\var^\prime(\g(B^I)\rightarrow\Box \g(B^I),x)$ in $ \mathfrak{M}_2 $ iff (by Def. \ref{g}) $ \one\in\var^\prime(\g({\circ_I} B),x) $ in $ \mathfrak{M}_2 $.

\item $ \A $ is $ \neg{\circ_I}B $, where $ I\in\{T,F \} $. By Def. \ref{KripkeBelpanEA}, $ \one\in\var(\neg{\circ_I} B,x) $ in $ \mathfrak{M}_1 $ iff $ \zero\in\var({\circ_I} B,x) $ in $ \mathfrak{M}_1 $ iff  $I\in\var(B,x)$ in $ \mathfrak{M}_1 $ and $ \exists_{y\in W}\non I\in\var(B,y)$ in $ \mathfrak{M}_1 $ iff (Lemma \ref{X}) $T\in\var(B^I,x)$ in $ \mathfrak{M}_1 $ and $ \exists_{y\in W} T\in\var(B^{\non I},y)$ in $ \mathfrak{M}_1 $ iff $T\in\var(B^I,x)$ in $ \mathfrak{M}_1 $ and $ \exists_{y\in W} F\in\var(B^{I},y)$ in $ \mathfrak{M}_1 $
iff (by the ind.\;hyp.) $\one\in\var^\prime(\g(B^I),x)$ in $ \mathfrak{M}_2 $ and $ \exists_{y\in W}\zero\in\var^\prime(\g(B^{I}),y)$ in $ \mathfrak{M}_2 $ 
iff (Def. \ref{Four-valuedS5}) $\one\in\var^\prime(\g(B^I),x)$ in $ \mathfrak{M}_2 $ and $ \zero\in\var^\prime(\Box\g(B^{I}),x)$ in $ \mathfrak{M}_2 $ iff $\zero\in\var^\prime(\g(B^I)\rightarrow\Box\g(B^I),x)$ in $ \mathfrak{M}_2 $ iff $\one\in\var^\prime(\neg(\g(B^I)\rightarrow\Box\g(B^I)),x)$ in $ \mathfrak{M}_2 $ iff (Def. \ref{g}) 
 $\zero\in\var^\prime(\g({\circ_I}B),x)$ in~$ \mathfrak{M}_2 $ iff $\one\in\var^\prime(\g(\neg{\circ_I}B),x)$ in~$ \mathfrak{M}_2 $. 

\item $ A $ is $ \circ_iB $, where $ i\in\{t,b,n,f \} $.  By Def. \ref{KripkeBelpanEA}, $ \one\in\var({\circ_i} B,x) $ in $ \mathfrak{M}_1 $ iff  
 $\var(B,x)=i$ in $ \mathfrak{M}_1 $ implies $ \forall_{y\in W} \var(B,y)=i$ in $ \mathfrak{M}_1 $ iff (Lemma \ref{X}) 
 $T\in\var(B^i,x)$ in $ \mathfrak{M}_1 $ implies $ \forall_{y\in W} T\in\var(B^i,y)$ in $ \mathfrak{M}_1 $ 
  iff (this lemma, (2)(a))
 $\one\in\var^\prime(\g(B^i),x)$ in $ \mathfrak{M}_2 $ implies $ \forall_{y\in W}\one\in \var^\prime(\g(B^i),y)$ in $ \mathfrak{M}_2 $  
  iff (Def. \ref{Four-valuedS5})  
  $\one\in\var^\prime(\g(B)^i),x)$ in $ \mathfrak{M}_2 $ implies $ \one\in \var^\prime(\Box(\g(B^i)),x)$ in $ \mathfrak{M}_2 $  
  iff (Def. \ref{Four-valuedS5})
   $\one\in\var^\prime((\g(B^i)\rightarrow\Box(\g(B^i)),x)$ in $ \mathfrak{M}_2 $ 
  iff (Def. \ref{g}) $ \one\in\var^\prime(\g({\circ_i} B),x) $~in~$ \mathfrak{M}_2 $. 
  
\item $ A $ is $ \neg{\circ_i}B $, where $ i\in\{t,b,n,f \} $.  By Def. \ref{KripkeBelpanEA}, $ \one\in\var(\neg{\circ_i} B,x) $ in $ \mathfrak{M}_1 $ iff  $ \zero\in\var({\circ_i} B,x) $ in $ \mathfrak{M}_1 $ iff 
$\var(B,x)=i$ in $ \mathfrak{M}_1 $ and $ \exists_{y\in W}\var(B,y)\not=\ast i$ in $ \mathfrak{M}_1 $
 iff (Lemma \ref{X})  
$T\in\var(B^i,x)$ in $ \mathfrak{M}_1 $ and $ \exists_{y\in W}F\in\var(B^i,y)$ in $ \mathfrak{M}_1 $ 
 iff 
(this lemma, (2))
$\one\in\var^\prime(\g(B^i),x)$ in $ \mathfrak{M}_2 $ and $ \exists_{y\in W}\zero\in\var^\prime(\g(B^i),y)$ in $ \mathfrak{M}_2 $ 
 iff (Def. \ref{Four-valuedS5})
$\one\in\var^\prime(\g(B^i),x)$ in $ \mathfrak{M}_2 $ and $ \zero\in\var^\prime(\Box\g(B^i),x)$ in $ \mathfrak{M}_2 $ iff (Def. \ref{Four-valuedS5})   
    $\zero\in\var^\prime(\g(B^i)\rightarrow\B \g(B^i),x)$  in $ \mathfrak{M}_2 $  
    iff (Def. \ref{g})        
  $ \zero\in\var^\prime(g({\circ_i} B),x) $  in $ \mathfrak{M}_2 $ iff $ \one\in\var^\prime(g(\neg{\circ_i} B),x) $  in $ \mathfrak{M}_2 $.
	\end{enumerate}
	
	The other cases are treated similarly.
\end{proof} 
\begin{lemma}\label{Sem4to4L2}
For any $ {\bf FDE^{\bf S5}_{\B\D}} $-model $ \mathfrak{M}_1=\langle W,\var\rangle $ we can construct an $ {\bf FDE^{\bf S5}_{\circ\bullet}} $-model $ \mathfrak{M}_2=\langle W,\var^\prime\rangle $ such that for each $\Lang_{\circ\bullet}^\neg$-formula $ \A $ and any $ x\in W$, it is true that 
\begin{enumerate}[$\blacktriangleright$]\itemsep=0pt
\item 
	$ \one\in\var(\g(A),x) $ in $ \mathfrak{M}_1 $ iff $ \one\in\var^\prime(A,x) $ in $ \mathfrak{M}_2 $,
	\item 
	$ \one\in\var(\g(\neg A),x) $ in $ \mathfrak{M}_1 $ iff $ \one\in\var^\prime(\neg A,x)$ in $ \mathfrak{M}_2 $.
	\end{enumerate}
\end{lemma}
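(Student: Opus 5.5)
The construction mirrors that of Lemma \ref{Sem4to4L1}, run in the reverse direction. Given an $\mathbf{FDE}^{\bf S5}_{\B\D}$-model $\mathfrak{M}_1=\langle W,\var\rangle$, I define $\mathfrak{M}_2=\langle W,\var^\prime\rangle$ on the same set of worlds. For every $p\in\Prop$ and $x\in W$, I stipulate that $\one\in\var^\prime(p,x)$ iff $\one\in\var(p,x)$, and that $\zero\in\var^\prime(p,x)$ iff $\one\in\var(\neg p,x)$, i.e.\ iff $\zero\in\var(p,x)$. In effect, $\var^\prime(p,x)=\var(p,x)$. I then extend $\var^\prime$ to all $\Lang_{\circ\bullet}^\neg$-formulas by the clauses of Definitions \ref{KripkeNelsonval} and \ref{KripkeBelpanEA}. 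This yields an $\mathbf{FDE}^{\bf S5}_{\circ\bullet}$-model, since those clauses determine the value of every compound uniquely.

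I will prove both items simultaneously by induction on the construction of $A$. Under $\g$, the two items say that $\one\in\var^\prime(A,x)$ matches $\one\in\var(\g(A),x)$, and that $\zero\in\var^\prime(A,x)$ matches $\zero\in\var(\g(A),x)$. The second reading uses $\g(\neg A)=\neg\g(A)$ and the clause for $\neg$. The atomic case holds by the stipulation. For $\neg$, $\non$, $\wedge$, $\vee$, and $\rightarrow$, the truth and falsity clauses in the two semantics coincide, so these cases follow directly from the induction hypothesis. For instance, the case $\neg\non B$ reduces to $\zero\in\var(B,x)$ failing, which is handled by the second item for $B$. The double-negation case $\neg\neg B$ reduces to the first item for $B$.

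The substantive cases are the 24 modal operators together with their $\neg$-versions. I would first record an auxiliary fact: for every $i$ and $I$, $\g(B^i)=(\g(B))^i$ and $\g(B^I)=(\g(B))^I$. This holds because $\g$ commutes with $\neg$, $\non$, and $\wedge$. It follows that Lemma \ref{X} applies on both sides. On the $\mathfrak{M}_1$ side, $\one\in\var(\g(B^i),y)$ iff $\var(\g(B),y)=i$, and $\zero\in\var(\g(B^i),y)$ iff $\var(\g(B),y)\neq\ast i$. On the $\mathfrak{M}_2$ side, the same holds with $\var^\prime(B,y)$ in place of $\var(\g(B),y)$. By the induction hypothesis applied to $B$ at every $y\in W$ (both items), $\var^\prime(B,y)=\var(\g(B),y)$ as four-valued values. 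The quantifier $\forall_{y\in W}$ is unproblematic here, because the hypothesis is uniform in the world.

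With this in hand, each modal case is an unfolding, exactly as cases (4)--(7) of Lemma \ref{Sem4to4L1} but read backwards. Take $\circ_I B$ as an example. $\one\in\var(\g(B^I)\rightarrow\Box\g(B^I),x)$ iff $\one\in\var(\g(B^I),x)$ implies $\forall y\,\one\in\var(\g(B^I),y)$. By the above, this holds iff $I\in\var^\prime(B,x)$ implies $\forall y\, I\in\var^\prime(B,y)$, which is iff $\one\in\var^\prime(\circ_I B,x)$. For the negated case, $\zero\in\var(\g(B^I)\rightarrow\Box\g(B^I),x)$ iff $\one\in\var(\g(B^I),x)$ and $\exists y\,\zero\in\var(\g(B^I),y)$. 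Here I use $\zero\in B^I$ iff $\non I\in B$, which holds for $I=T$ and, via double negation, for $I=F$. This matches the falsity clause of $\circ_I$. The $\FE$, $\bullet$, and $\FA$ cases are analogous, using the $\non$, $\wedge$, and $\D$ clauses. For example, $\zero\in\var(\g(B^i)\wedge\D\non\g(B^i),x)$ iff $\zero\in\g(B^i)$ at $x$ or $\forall y\,\one\in\g(B^i)$ at $y$. This is equivalent to "$\var(B,x)=\ast i$ implies $\forall y\,\var(B,y)=i$", and I must check it against the stated falsity clause for $\bullet_i$.

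The main obstacle is exactly this bookkeeping of falsity clauses. One must verify, for each of the 24 operators, that the falsity clause of Definition \ref{KripkeBelpanEA} agrees with the falsity condition computed from $\g$. Where a clause appears to mismatch (as with $\bullet_i$ just above, which yields "$\forall y\,\var(B,y)=i$" rather than "$=\ast i$"), I would recheck it using Lemma \ref{X} and the fact that the $\mathfrak{M}_2$ side is likewise read through Lemma \ref{X}. I would settle each such case by comparing with the corresponding case already handled in Lemma \ref{Sem4to4L1}, whose equivalences are reversible step by step.
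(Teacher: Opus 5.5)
Your construction and induction are the paper's own. The paper only says the proof is similar to Lemma \ref{Sem4to4L1}; you use the same atomic stipulation and a simultaneous induction showing that $\var^\prime(A,x)$ and $\var(\g(A),x)$ agree on both $\one$ and $\zero$. You apply the induction hypothesis to $B$ at every world and then read both sides through Lemma \ref{X}. This is, if anything, cleaner than the appeal in Lemma \ref{Sem4to4L1} to the hypothesis for $B^i$, which is not a subformula of $\circ_i B$.

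The one real defect is your ``main obstacle'', which comes from a slip. For the inner $\non$ you used the falsity clause of De Morgan negation instead of Boolean negation. The correct clause is $\zero\in\var(\non C,y)$ iff $\zero\notin\var(C,y)$, not iff $\one\in\var(C,y)$. This gives the following chain:
\begin{itemize}
\item $\zero\in\var(\D\non\g(B^i),x)$ iff $\forall_{y\in W}\,\zero\notin\var(\g(B^i),y)$;
\item by Lemma \ref{X}, this holds iff $\forall_{y\in W}\,\var(\g(B),y)=\ast i$;
\item so $\zero\in\var(\g(\bullet_i B),x)$ iff ($\var(\g(B),x)=\ast i$ implies $\forall_{y\in W}\,\var(\g(B),y)=\ast i$).
\end{itemize}
That is exactly the falsity clause for $\bullet_i$ in Definition \ref{KripkeBelpanEA}.

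The same one-line unfolding works for all eight families. For example, $\zero\in\var(\g(\FA_i B),x)$ iff $\var(\g(B),x)=\ast i$ or $\forall_{y\in W}\,\var(\g(B),y)\neq\ast i$. Likewise, $\zero\in\var(\g(\bullet_I B),x)$ iff $\non I\in\var(\g(B),x)$ or $\forall_{y\in W}\,\non I\notin\var(\g(B),y)$. Every falsity clause in Definition \ref{KripkeBelpanEA} is literally what $\g$ computes, so there is nothing to ``settle''. You should state and verify this outright rather than leave it conditional. Rechecking against Lemma \ref{Sem4to4L1} could not have rescued a genuine mismatch: $\var^\prime$ is fully determined by the clauses, so the lemma would simply have been false.
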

\begin{proof}
	Similarly to Lemma \ref{Sem4to4L1}.
\end{proof}
\begin{theorem}[Soundness and completeness of $ {\bf FDE^{\bf S5}_{\circ\bullet}} $]\label{FDES5EASoundCompl}
For every $\Lang_{\circ\bullet}^\neg$-formula $ A $, it holds that ${\bf FDE^{\bf S5}_{\circ\bullet}}\models A$ iff ${\bf FDE^{\bf S5}_{\circ\bullet}}\vdash A$.
\end{theorem}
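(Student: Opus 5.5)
The plan is to chain the two embeddings, in the same way Theorem \ref{FDES5SoundCompl} was obtained for $\bf FDE^{\bf S5}_{\Box\D}$. Let $A$ be an $\Lang_{\circ\bullet}^\neg$-formula. I regard $A$ as the hypersequent $\Rightarrow A$, so ${\bf FDE^{\bf S5}_{\circ\bullet}}\vdash A$ means that $\Rightarrow A$ is provable. By Definition \ref{g}, $\g(\Rightarrow A)$ is the hypersequent $\Rightarrow\g(A)$. The argument is then the following sequence of equivalences:
\[
{\bf FDE^{\bf S5}_{\circ\bullet}}\models A \iff {\bf FDE^{\bf S5}_{\Box\D}}\models \g(A) \iff {\bf FDE^{\bf S5}_{\Box\D}}\vdash \g(A) \iff {\bf FDE^{\bf S5}_{\circ\bullet}}\vdash A.
\]

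\begin{enumerate}[$(1)$]
\item The first equivalence is Theorem \ref{Sem4to4}. It rests on Lemmas \ref{Sem4to4L1} and \ref{Sem4to4L2}, which transform models in both directions over the same world set $W$ while preserving ``at least truth'' at every world.
\item The second equivalence is Theorem \ref{FDES5SoundCompl}, applied to the $\Lang_{\Box\D}^\neg$-formula $\g(A)$. The theorem is applicable because $\g$ maps $\Lang_{\circ\bullet}^\neg$ into $\Lang_{\Box\D}^\neg$, which I will check against the clauses of Definition \ref{g}.
\item The third equivalence is Theorem \ref{Synt4to4}(1), instantiated with $\mathscr{H}$ the one-component hypersequent $\Rightarrow A$.
\end{enumerate}

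The only point that needs care is consistency between the notion of validity and the notion of derivability of a single formula. The semantic theorems speak of validity, i.e.\ following from the empty multiset. The syntactic theorems speak of hypersequents. I will note that the validity of $\Rightarrow A$ as a hypersequent coincides with $\models A$ in the sense of the definition of consequence, so the two readings agree.

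I also have to make sure that Theorem \ref{FDES5SoundCompl} is used in the form ``${\bf FDE^{\bf S5}_{\Box\D}}\vdash\,\Rightarrow\g(A)$''. That is the same object to which Theorem \ref{Synt4to4} refers, so the two results fit together without any extra translation.

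The main obstacle is not in this assembly step but in what it relies on. Lemma \ref{WSynt4to4} and Lemmas \ref{Sem4to4L1}/\ref{Sem4to4L2} are proved for all operators only case by case, and the second-group operators involve $\g(A^i)$ with nested $\non$ and $\neg$. If I needed to double-check a case, I would verify one representative from each of the two groups, for instance $\FA_n$ and $\neg\bullet_F$. For such a case the semantic clause of Definition \ref{KripkeBelpanEA} should match the translated $\Box/\D$ formula via Lemma \ref{X}, using the identities $F\in\var(A^i,x)$ iff $\var(A,x)\neq\ast i$.

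Given the earlier results, the theorem itself follows by the displayed three-step chain.
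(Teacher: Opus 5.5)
Your proposal is correct and follows the paper's proof, which just cites Theorems~\ref{Synt4to4} and~\ref{Sem4to4}. Your three-step chain also makes explicit the middle link that the paper leaves implicit: Theorem~\ref{FDES5SoundCompl} applied to the $\Lang_{\Box\D}^\neg$-formula $\g(A)$.
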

\begin{proof}
Follows from Theorems \ref{Synt4to4} and \ref{Sem4to4}.
\end{proof}
\begin{theorem}[Decidability]
$ {\bf FDE^{\bf S5}_{\B\D}} $ and $ {\bf FDE^{\bf S5}_{\circ\bullet}} $ are decidable.
\end{theorem}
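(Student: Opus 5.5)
The plan is to reduce decidability of both logics, through the embedding theorems just proved, to the known decidability of \textbf{S5}. Take an $\Lang_{\B\D}^\neg$-formula $A$. By Theorem \ref{Sem4to2}, ${\bf FDE^{\bf S5}_{\B\D}}\models A$ iff ${\bf S5}\models\f(A)$. By Theorem \ref{FDES5SoundCompl}, validity coincides with provability in the hypersequent calculus. So a decision procedure for ${\bf FDE^{\bf S5}_{\B\D}}$ runs in two steps: first compute $\f(A)$, then decide $\f(A)$ in \textbf{S5}. The formula $\f(A)$ is an ordinary \textbf{S5}-formula over the enlarged stock of variables $\Prop\cup\Prop^\prime$, and \textbf{S5} is decidable, for instance by its finite model property or by terminating backward proof search in Restall's cut-free calculus.

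The first point to check is that $\f$ is effectively computable and terminates. Definition \ref{f} recurses on either the outer connective or the connective directly under a $\neg$. In every clause the arguments of $\f$ on the right-hand side are strictly smaller in a measure such as the number of symbols, with $\neg$ counted but not doubled. For instance, in the clause $\f(\neg\neg A)=\f(A)$ the argument drops by two symbols, and in the clause $\f(\neg(B\wedge C))=\f(\neg B)\vee\f(\neg C)$ each argument $\neg B$ and $\neg C$ is shorter than $\neg(B\wedge C)$. Hence $\f(A)$ is computable, and its length is linear in the length of $A$.

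For ${\bf FDE^{\bf S5}_{\circ\bullet}}$ I compose the two embeddings. By Theorem \ref{Sem4to4}, ${\bf FDE^{\bf S5}_{\circ\bullet}}\models A$ iff ${\bf FDE^{\bf S5}_{\B\D}}\models\g(A)$, and the latter is decidable by the previous paragraph. The mapping $\g$ of Definition \ref{g} is computable by recursion on $A$, since each modal clause applies $\g$ only to $A^I$ or $A^i$. These abbreviations from Definition \ref{i} are fixed Boolean/De Morgan combinations of $A$ and $\neg A$, and their only connectives are $\wedge,\non,\neg$, so $\g$ recurses into them through the connective clauses and reaches $\g(A)$, which has lower modal depth. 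Theorem \ref{FDES5EASoundCompl} then transfers the result to derivability.

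I expect the main obstacle to be well-foundedness of the recursion for $\g$, because $A^I$ and $A^i$ are longer than $A$ itself. I would handle this with a lexicographic measure: first the number of essence/accident operators, then length. Each modal clause strictly decreases the operator count, because $A^i$ contains the same operators as $A$ but the outer $\circ_i$ has been removed. The remaining clauses do not raise the operator count and strictly decrease length. The output of $\g$ can grow exponentially in the modal nesting, since $A$ is duplicated inside $A^i$ and $\g(A^i)$ appears twice in each clause, but it stays finite. That is enough for decidability, though not for a good complexity bound.
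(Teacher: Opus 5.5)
Your reduction matches the paper's in substance. The paper derives decidability from the decidability of \textbf{S5} together with the \emph{syntactic} embeddings of Theorems~\ref{Synt4to2} and~\ref{Synt4to4}. Derivability of a hypersequent $\mathscr{H}$ thus reduces directly to derivability of $\f(\g(\mathscr{H}))$, without passing through the soundness and completeness theorems for the four-valued logics. You instead use the \emph{semantic} embeddings of Theorems~\ref{Sem4to2} and~\ref{Sem4to4} at the level of formulas, and then Theorems~\ref{FDES5SoundCompl} and~\ref{FDES5EASoundCompl} to return to derivability. That difference is harmless. Your explicit check that $\f$ and $\g$ are computable is something the paper leaves implicit.

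However, the step you single out as the main obstacle is argued incorrectly. The lexicographic measure (number of essence/accident operators, then length) does not decrease at the clauses for $\circ_i,\bullet_i,\FE_i,\FA_i$ with $i\in\{t,b,n,f\}$. Each $A^i$ of Definition~\ref{i} contains \emph{two} copies of $A$ (e.g.\ $A^t=A\wedge\non\neg A$), which is exactly the duplication you invoke yourself when estimating the size of $\g(A)$. So ``$A^i$ contains the same operators as $A$'' is false as a count.

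Two examples show the failure:
\begin{itemize}
\item For $A=\circ_t p$, both $\circ_t A$ and $A^t$ contain two operators, and $A^t$ is longer, so the measure goes up.
\item For $\circ_t\circ_t\circ_t p$, the recursive call on $(\circ_t\circ_t p)^t$ raises the count from $3$ to $4$.
\end{itemize}

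The repair is easy. One option is to take as first component the nesting depth of essence/accident operators. Since $\wedge$, $\non$ and $\neg$ do not increase it, the depth of $A^i$ equals that of $A$ and is strictly below that of $\circ_i A$. A simpler option is to note that the connective clauses of Definition~\ref{g} give $\g(A^i)=(\g(A))^i$ and $\g(A^I)=(\g(A))^I$. Then $\g$ is just plain structural recursion on $A$, which is what your own remark about ``lower modal depth'' in the third paragraph already suggests. With that fix your argument goes through.
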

\begin{proof}
Follows from the decidability of \textbf{S5}, Theorems \ref{Synt4to2} and  \ref{Synt4to4}.
\end{proof}
\section{Conclusion}\label{Conclusion} 
 We presented many-valued versions of essence and accident modalities, some of them (e.g., $ \circ_b $ or $ \circ_n $) cannot be expressed in two-valued modal logic. Let us mention some potential avenues for further investigation. 
It is well-known that by excluding $ b $ from the set of \textbf{FDE}'s truth values one gets strong Kleene $ \bf K_3$ \cite{Kleene38}. Similarly, by eliminating $ n $ one gets Asenjo--Priest's \cite{Asenjo66,PriestLP} logic of paradox \textbf{LP}. Thus,  it is feasible to conduct a more comprehensive examination of the three-valued versions of essence and accident modalities.  Alternatively, one may consider a broader approach and explore these modalities in a $ n $-valued logic, $ n>2 $. A possible approach to defining essence and accident is by using Rosser and Turquette's $ J $-operators \cite{RosserTurquette}. The following definition may be proposed, where
$ k\in\{1,\frac{n-2}{n-1},\ldots,\frac{1}{n-1},0 \} $:
\begin{enumerate}[$ - $]\itemsep=0pt
\vspace{-.1in}{\small \begin{multicols}{2}
\item $ \circ_kA=J_kA\rightarrow\Box J_kA $,
\item $ \bullet_kA=J_kA\wedge\D\non J_kA $,
\item $ \FE_kA=\non J_kA\rightarrow\Box\non J_kA $,
\item $ \FA_kA=\non J_kA\wedge\D J_kA $.
\end{multicols}}\vspace{-.15in}
\end{enumerate}

One may also consider essence and accident modalities in the context of Zaitsev and Shramko's bi-facial logic of generalized classical truth values \cite{ZS}, which differentiates ontological and epistemic truth and falsity and has the following four truth values: $ \bf T1$, $ \bf T0$, $ \bf F1$, and $ \bf F0$, where $ \bf T$ and $ \bf F$ are ontological components of truth values and $ \bf 1$ and $ \bf 0$ are epistemic ones. Then one can say, e.g., that $ A $ is essentially ontologically true, or $ A $ is accidentally epistemically false.

Clearly, one may contemplate alternative modal logics beyond \textbf{S5}. A recent paper \cite{PetrukhinKBTD4} introduced cut-free nested sequent calculi for essence and accident logics based on \textbf{K}. One may also consider alternative normal modal logics, e.g., \textbf{K5}, or \textbf{GL}, or some regular, monotonic, or congruent modal logics.

Another topic is the development of a first-order version of the hypersequent calculi in question (see \cite{AvronLahav} for a calculus similar to Restall's one \cite{Restall} for a first-order version of \textbf{S5} with a paraconsistent negation). To enhance this calculus, one can incorporate definite descriptions. There are two approaches to consider: the traditional one using the $ \iota $-term forming operator, or the recent approach by Wa\l{}ega and Zawidzki \cite{WalegaZawidzki}, which is a continuation of the research started in \cite{IndrzejczakZawidzki} and is based on hybrid logic and uses an operator $ @_{\varphi}\psi $ to indicate that `$ \psi $ holds in the modal world where $ \varphi $ holds'. One could introduce a many-valued modification of this operator: $ @^i_{\varphi}\psi$ indicating that `$ \psi $ takes the truth value $ i $ in the modal world where $ \varphi $ takes the value $ i $'. Then one could add the following operators: $ @^{\circ i}_{\varphi}\psi$ indicating that `$ \psi $ essentially takes the value $ i $ in the modal world where $ \varphi $ essentially takes the value $ i $' and $ @^{\bullet i}_{\varphi}\psi$ indicating that `$ \psi $ accidentally takes the value $ i $ in the modal world where $ \varphi $ accidentally takes the value $ i $'. It is also possible to introduce a kind of combination of these operators: $ @^{\bullet\circ i}_{\varphi}\psi$ indicating that `$ \psi $ \textit{accidentally} takes the value $ i $ in the modal world where $ \varphi $ \textit{essentially} takes the value $ i $'. However, we believe that this topic requires a separate paper to be properly investigated.

\paragraph{Acknowledgments.} The author thanks the reviewers for their useful comments. Funded by the European Union (ERC, ExtenDD, project number: 101054714). Views and opinions expressed are however those of the author(s) only and do not necessarily reflect those of the European Union or the European Research Council. Neither the European Union nor the granting authority can be held responsible for them.


\begin{thebibliography}{99}

\bibitem{ArieliAvron}
Arieli, O. and Avron, A., ``Reasoning with Logical Bilattices,'' \emph{Journal of Logic, Language and Information}, 5(1) (1996), 25--63.

\bibitem{AvronLahav}
Avron, A., and Lahav, O., ``A Simple Cut-Free System for a Paraconsistent Logic Equivalent to S5,'' in \emph{Advances in modal logic, Volume 12}, College Publications, 2018, 29--42.

\bibitem{Drobyshevich}
Drobyshevich, S., ``A General Framework for FDE-Based Modal Logics,'' \emph{Studia Logica}, 108 (2020), 1281--1306.

\bibitem{Fitting92}
Fitting, M., ``Many-valued modal logics II,'' \emph{Fundamenta Informaticae}, 17 (1992), 55--73.

\bibitem{Goble}
Goble, L., ``Paraconsistent modal logic,'' \emph{Logique et Analyse}, 49 (2006), 3--29.

\bibitem{OdintsovSkurtWansing}
Odintsov, S.P. and Skurt, D. and Wansing, H., ``On Definability of Connectives and Modal Logics over FDE,'' \emph{Logic and Logical Philosophy}, 28 (2019), 631--659.

\bibitem{OdintsovLatkin}
Odintsov, S.P. and Latkin, E.I., ``BK-lattices. Algebraic semantics for Belnapian modal logics,'' \emph{Studia Logica}, 100(1--2) (2012), 319--338.

\bibitem{OdintsovSperanski}
Odintsov, S.P. and Speranski, S., ``The lattice of Belnapian modal logics: special extensions and counterparts,'' \emph{Logic and Logical Philosophy}, 25 (2016), 3--33.

\bibitem{OdintsovSperanski19}
Odintsov, S.P. and Speranski, S., ``Belnap--Dunn modal logics: truth constants vs. truth values,'' \emph{The Review of Symbolic Logic}, 13(2) (2020), 416--435.

\bibitem{Rivieccio4}
Rivieccio, U. and Jung, A. and Jansana, R., ``Four-valued modal logic: Kripke semantics and duality,'' \emph{Journal of Logic and Computation}, 27 (2017), 155--199.

\bibitem{Sedlar}
Sedl\'{a}r, I., ``Propositional dynamic logic with Belnapian truth values,'' in \emph{Advances in Modal Logic}, vol. 11, 2016, 503--519.

\bibitem{KozhVashentsevaNonCont}
Kozhemiachenko, D. and Vashentseva, L., ``Non-contingency in a paraconsistent setting,'' \emph{Logic Journal of the IGPL}, 32(3) (2024), 394--419.

\bibitem{KozhVashentsevaIgn}
Kozhemiachenko, D. and Vashentseva, L., ``Knowledge and ignorance in Belnap--Dunn logic,'' \emph{Logic Journal of the IGPL}, 33(1) (2025), 125--148.

\bibitem{Sherkhonov2008}
Sherkhonov, E. Yu., ``Modal operators over constructive logic,'' \emph{Journal of Logic and Computation}, 18(6) (2008), 815--829.

\bibitem{Avron}
Avron, A., ``Natural 3-valued logics~--- characterization and proof theory,'' \emph{Journal of Symbolic Logic}, 56(1) (1991), 276--294.

\bibitem{AvronClassical}
Avron, A., ``Classical Gentzen-type methods in propositional many-valued logics,'' in \emph{Beyond Two: Theory and Applications of Multiple-Valued Logic}, Physica, 2003, 117--155.

\bibitem{BednarskaIndrzejczak}
Bednarska, K. and Indrzejczak, A., ``Hypersequent calculi for {S}5: the methods of cut elimination,'' \emph{Logic and Logical Philosophy}, 24(3) (2015), 277--311. %doi: \url{https://doi.org/10.12775/LLP.2015.018}.

\bibitem{Belnap1}
Belnap, N.D., ``A useful four-valued logic,'' in \emph{Modern Uses of Multiple-Valued Logic}, Reidel Publishing Company, 1977, 7--37.

\bibitem{Belnap2}
Belnap, N.D., ``How a computer should think,'' in \emph{Contemporary Aspects of Philosophy}, Oriel Press, 1977, 30--56.

\bibitem{DeOmori}
De, M. and Omori, H., ``Classical Negation and Expansions of {B}elnap-{D}unn Logic,'' \emph{Studia Logica}, 103(4) (2015), 825--851.

\bibitem{Dunn76}
Dunn, J.M., ``Intuitive semantics for first-degree entailment and coupled trees,'' \emph{Philosophical Studies}, 29 (1976), 149--168.

\bibitem{Fan}
Fan, J., ``Logics of Essence and Accident,'' \emph{arXiv:1506.01872v1} (2015).

\bibitem{Fine1}
Fine, K., ``Essence and modality,'' \emph{Philosophical Perspectives}, 8 (1994), 1--16.

\bibitem{Fine2}
Fine, K., ``The logic of essence,'' \emph{Journal of Philosophical Logic}, 24 (1995), 241--273.

\bibitem{Fine3}
Fine, K., ``Semantics for the logic of essence,'' \emph{Journal of Philosophical Logic}, 29 (2000), 543--584.

\bibitem{Fitting91}
Fitting, M., ``Many-valued modal logics,'' \emph{Fundamenta Informaticae}, 15 (1991), 235--254.

\bibitem{GilbertVenturi}
Gilbert, D.\ts R. and Venturi, G., ``A note on logics of essence and accident,'' \emph{Logic Journal of the IGPL}, 28(5) (2020), 881--891.

\bibitem{MMLS5}
Grigoriev, O. and Petrukhin, Y., ``On a multilattice analogue of a hypersequent {S}5 calculus,'' \emph{Logic and Logical Philosophy}, 28(4) (2019), 683--730. %doi: \url{https://doi.org/10.12775/ LLP.2019.031}.

\bibitem{Gurevich}
Gurevich, Y., ``Intuitionistic logic with strong negation,'' \emph{Studia Logica}, 36 (1977), 49--59.

\bibitem{Indrzejczak2019}
Indrzejczak, A., ``Two Is Enough --- Bisequent Calculus for {S}5,'' in \emph{Frontiers of Combining Systems. FroCoS 2019}, Springer, 2019, 277--294.

\bibitem{Indrzejczak2020}
Indrzejczak, A., ``Existence, definedness and definite descriptions in hybrid modal logic,'' in \emph{Advances in modal logic, vol. 13}, College publications, 2020, 349--368.

\bibitem{IndrzejczakND}
Indrzejczak, A., \emph{Natural Deduction, Hybrid Systems and Modal Logics}, Springer, 2010. %doi: \url{https://doi.org/10.1007/978-90-481-8785-0}.

\bibitem{MarcosPara}
Marcos, J., ``Nearly Every Normal Modal Logic is Paranormal,'' \emph{Logique et Analyse}, 48(189--192) (2005), 279--300.

\bibitem{Boolos}
Boolos, G., \emph{The Logic of Provability}, Cambridge University Press, 1993. %doi: %\url{https://doi.org/10.1017/CBO9780511625183}.

\bibitem{Steinsvoldboxdot}
Steinsvold, C., ``The boxdot conjecture and the language of essence and accident,'' \emph{Australasian Journal of Logic}, 10 (2011), 18--35.

\bibitem{Steinsvold}
Steinsvold, C., ``Completeness for Various Logics of Essence and Accident,'' \emph{Bulletin of the Section of Logic}, 37(2) (2008), 93--101.

\bibitem{IndrzejczakSC}
Indrzejczak, A., \emph{Sequents and Trees: An Introduction to the Theory and Applications of Propositional Sequent Calculi}, Birkhauser Cham, 2021.

\bibitem{IndrzejczakZawidzki}
Indrzejczak, A. and Zawidzki, M., ``Definite descriptions and hybrid tense logic,'' \emph{Synthese}, 202 (2023), 98.

\bibitem{Kamide19}
Kamide, N., ``Gentzen-type sequent calculi for extended {B}elnap--{D}unn logics with classical negation: A general framework,'' \emph{Logica Universalis}, 13(1) (2019), 37--63.

\bibitem{Kamide23}
Kamide, N., ``Falsification-Aware Calculi and Semantics for Normal Modal Logics Including {S}4 and {S}5,'' \emph{Journal of Logic, Language and Information}, 32(3) (2023), 395--440.

\bibitem{Asenjo66}
Asenjo, F.G., ``A calculus of antinomies,'' \emph{Notre Dame Journal of Formal Logic}, 7 (1966), 103--105.

\bibitem{Kleene38}
Kleene, S.C., ``On a notation for ordinal numbers,'' \emph{The Journal of Symbolic Logic}, 3 (1938), 150--155.

\bibitem{PriestLP}
Priest, G., ``The logic of paradox,'' \emph{Journal of Philosophical Logic}, 8 (1979), 219--241.

\bibitem{MML}
Kamide, N. and Shramko, Y., ``Modal Multilattice Logic,'' \emph{Logica Universalis}, 11(3) (2017), 317--343.

\bibitem{Marcos}
Marcos, J., ``Logics of essence and accident,'' \emph{Bulletin of the Section of Logic}, 34(1) (2005), 43--56.

\bibitem{OdintsovWansing}
Odintsov, S.P. and Wansing, H., ``Modal logics with Belnapian truth values,'' \emph{Journal of Applied Non-Classical Logics}, 20 (2010), 279--301.

\bibitem{OdintsovWansing17}
Odintsov, S.P. and Wansing, H., ``Disentangling {F}{D}{E}-based paraconsistent modal logics,'' \emph{Studia Logica}, 105 (2017), 1221--1254.

\bibitem{OmoriWansing17}
Omori, H. and Wansing, H., ``40 years of {F}{D}{E}: an introductory overview,'' \emph{Studia Logica}, 105 (2017), 1021--1049.

\bibitem{PetrukhinShanginFDE}
Petrukhin, Y. and Shangin, V., ``Correspondence analysis and automated proof-searching for first degree entailment,'' \emph{European Journal of Mathematics}, 6(4) (2020), 1452--1495.

\bibitem{PetrukhinS5}
Petrukhin, Y., ``{S}5-Style Non-Standard Modalities in a Hypersequent Framework,'' \emph{Logic and Logical Philosophy}, 31(3) (2022), 427--456.

\bibitem{PetrukhinKBTD4}
Petrukhin, Y., ``Nested Sequent Calculi for Some Modal Logics with Non-Standard Modalities,'' \emph{Logic and Logical Philosophy}, 34(2) (2025), 287--318.

\bibitem{PetrukhinFDE}
Petrukhin, Y., ``Correspondence analysis for first degree entailment,'' \emph{Logical Investigations}, 22(1) (2016), 108--124.

\bibitem{Poggiolesi}
Poggiolesi, F., \emph{Gentzen calculi for modal propositional logic}, Springer, 2011.

\bibitem{Priest08b}
Priest, G., ``Many-valued modal logics: a simple approach,'' \emph{The Review of Symbolic Logic}, 1(2) (2008), 190--203.

\bibitem{Popov}
Popov, V.M., ``Sequent formulations of paraconsistent logical systems (in {R}ussian),'' in \emph{Semantic and syntactic investigations of non-extensional logics}, Nauka Publ., 1989, 285--289.

\bibitem{Rautenberg}
Rautenberg, W., ``Klassische und Nicht-klassische Aussagenlogik,'' \emph{Vieweg, Braunschweig} (1979).

\bibitem{Restall}
Restall, G., ``Proofnets for {S}5: {S}equents and circuits for modal logic,'' in \emph{Logic Colloquium 2005}, Cambridge University Press, 2007, 151--172.

\bibitem{RosserTurquette}
Rosser, J.B. and Turquette, A.R., \emph{Many-valued logics}, Amsterdam, North-Holland, 1952.

\bibitem{WalegaZawidzki}
Wa\l{}ega, P.A. and Zawidzki, M., ``Hybrid Modal Operators for Definite Descriptions,'' in \emph{Logics in Artificial Intelligence. JELIA 2023. Lecture Notes in Computer Science, vol. 14281}, Springer, 2023, 712--726.

\bibitem{Voroviev}
Vorob'ev, N.N., ``A constructive propositional calculus with strong negation (in {R}ussian),'' \emph{Dokl. Akad. Nauk SSSR}, 85 (1952), 465--468.

\bibitem{ZS}
Zaitsev, D. and Shramko, Y., ``Bi-facial Truth: a Case for Generalized Truth Values,'' \emph{Studia Logica}, 101 (2013), 1299--1318.

\end{thebibliography}
\end{document}